\documentclass[11pt,a4paper,reqno]{amsart}
\usepackage{amsmath}
\usepackage[english]{babel}
\usepackage{latexsym}
\usepackage{amssymb}
\usepackage{amscd}
\usepackage{amsgen,amstext,amsbsy,amsopn,amsfonts}
\usepackage{math rsfs}
\usepackage{bm,bbm}
\usepackage{amsthm,epsfig,graphicx,graphics}
\usepackage[latin1]{inputenc}
\usepackage{xspace}
\usepackage{amsxtra}
\usepackage{color}
\usepackage{dsfont}
\usepackage{enumerate}
\usepackage{tikz}
\usepackage{hyperref}
\usetikzlibrary{angles}
\usetikzlibrary{quotes}
\usetikzlibrary{patterns}
\usepackage[font=footnotesize]{caption}
\usepackage[capitalise,nameinlink]{cleveref}
\crefname{section}{\textsection}{\textsection}
\crefname{subsection}{\textsection}{\textsection}
\crefname{subsubsection}{\textsection}{\textsection}
\crefname{paragraph}{\textparagraph}{\textparagraph}

\makeatletter
\renewcommand*{\@textcolor}[3]{%
  \protect\leavevmode
  \begingroup
    \color#1{#2}#3%
  \endgroup
}
\makeatother


\usepackage{hyperref}


\usepackage{geometry}
\geometry{hmargin={3cm,3cm}}
\geometry{vmargin={5cm,3cm}}

\pagestyle{myheadings} \sloppy

\DeclareMathAlphabet{\mathpzc}{OT1}{pzc}{m}{it}

\makeatletter
\renewcommand*{\@textcolor}[3]{%
  \protect\leavevmode
  \begingroup
    \color#1{#2}#3%
  \endgroup
}
\makeatother




\numberwithin{equation}{section}
\newcommand{\bdm}{\begin{displaymath}}
\newcommand{\edm}{\end{displaymath}}
\newcommand{\bay}{\begin{array}{c}}
\newcommand{\eay}{\end{array}}
\newcommand{\ben}{\begin{enumerate}}
\newcommand{\een}{\end{enumerate}}
\newcommand{\beq}{\begin{equation}}
\newcommand{\eeq}{\end{equation}}
\newcommand{\beqn}{\begin{eqnarray}}
\newcommand{\eeqn}{\end{eqnarray}}
\newcommand{\bml}[1]{\begin{multline} #1 \end{multline}}
\newcommand{\bmln}[1]{\begin{multline*} #1 \end{multline*}}

\newcommand{\lf}{\left}
\newcommand{\ri}{\right}

\newcommand{\xv}{\mathbf{x}}

\newcommand{\rv}{\mathbf{r}}
\newcommand{\rrv}{\mathbf{R}}

\newcommand{\av}{\mathbf{a}}
\newcommand{\nv}{\mathbf{n}}
\newcommand{\fv}{\mathbf{F}}
\newcommand{\ev}{\mathbf{e}}
\newcommand{\gamv}{\bm{\gamma}}

\newcommand{\deps}{\delta_{\eps}}

\newcommand{\diff}{\mathrm{d}}
\newcommand{\eps}{\varepsilon}

\newcommand{\dist}{\mathrm{dist}}

\newcommand{\as}{\alpha_{\star}}

\newcommand{\nuv}{\bm{\nu}}
\newcommand{\tav}{\bm{\tau}}
\newcommand{\phiv}{\bm{\Phi}}
\renewcommand{\ss}{\mathsf{s}}
\renewcommand{\tt}{\mathsf{t}}

\newcommand{\glf}{\mathcal{E}^{\mathrm{GL}}}
\newcommand{\gle}{E^{\mathrm{GL}}}
\newcommand{\glm}{\psi^{\mathrm{GL}}}

\newcommand{\gldom}{\mathscr{D}^{\mathrm{GL}}}
\newcommand{\aav}{\mathbf{A}}

\newcommand{\aavm}{\mathbf{A}^{\mathrm{GL}}}
\newcommand{\aeps}{a_{\eps}}
\newcommand{\aave}{\av_{\eps}}

\newcommand{\psid}{\psi_{\mathrm{D}}}

\newcommand{\psin}{\psi_{\mathrm{N}}}

\newcommand{\hex}{h_{\mathrm{ex}}}
\newcommand{\theo}{\Theta_0}

\newcommand{\glfk}{\mathcal{G}_{\kappa}^{\mathrm{GL}}}

\newcommand{\glfe}{\mathcal{E}_{\eps}^{\mathrm{GL}}}
\newcommand{\gep}{\mathcal{G}_{\eps}}

\newcommand{\glee}{E_{\eps}^{\mathrm{GL}}}

\newcommand{\curv}{\mathfrak{K}(\ss)}
\newcommand{\dom}{\mathscr{D}}
\newcommand{\domd}{\mathscr{D}_{\mathrm{D}}}
\newcommand{\domk}{\mathscr{D}_{\mathrm{D},\varkappa}}
\newcommand{\doms}{\mathscr{D}_{\star}}
\newcommand{\domdx}{\mathscr{D}_{\mathrm{D},\varkappa}}
\newcommand{\domn}{\mathscr{D}_{\mathrm{N}}}

\newcommand{\eones}{E^{\mathrm{1D}}_{\star}}

\newcommand{\fol}{f_{0}}
\newcommand{\Fol}{F_{0}}

\newcommand{\fs}{f_{\star}}

\newcommand{\onedom}{\mathscr{D}^{\mathrm{1D}}}
\newcommand{\onedomk}{\mathscr{D}_{\ell}^{\mathrm{1D}}}

\newcommand{\curl}{\mathrm{curl}}

\newcommand{\ann}{\mathcal{A}}
\newcommand{\anne}{\mathcal{A}_{\eps}}

\renewcommand{\acute}{\mathcal{A}_{\mathrm{cut},\eps}}
\newcommand{\acutet}{\widetilde{\mathcal{A}}_{\mathrm{cut},\eps}}
\newcommand{\acut}{\mathcal{A}_{\mathrm{cut}}}

\newcommand{\trial}{\psi_{\mathrm{trial}}}

\newcommand{\trialjt}{\widetilde\psi_{\beta_j,\varkappa_j}}

\newcommand{\cell}{\mathscr{C}}
\newcommand{\celln}{\mathscr{C}_n}



\newcommand{\G}{\mathcal{G}}

\newcommand{\disp}{\displaystyle}
\newcommand{\tx}{\textstyle}

\newcommand{\phitrial}{\phi_{\mathrm{trial}}}


\newcommand{\Z}{\mathbb{Z}}
\newcommand{\R}{\mathbb{R}}
\newcommand{\N}{\mathbb{N}}

\newcommand{\E}{\mathcal{E}}

\newcommand{\B}{\mathcal{B}}

\newcommand{\OO}{\mathcal{O}}

\newcommand{\al}{\alpha}

\newcommand{\Om}{\Omega}


\newcommand{\Gt}{\widetilde{\G}}

\newcommand{\Kt}{\widetilde{K}}
\newcommand{\Et}{\widetilde{E}}
\newcommand{\Ett}{\widetilde{\E}}


\newcommand{\one}{\mathds{1}}

\newcommand{\supp}{\mathrm{supp}}



\newcommand{\Hcc}{H_{\mathrm{c}2}}
\newcommand{\Hccc}{H_{\mathrm{c}3}}
\newcommand{\Hstar}{H_{\mathrm{corner}}}


\newcommand{\logi}{|\log \eps| ^{\infty}}



\newtheorem{teo}{Theorem}[section]
\newtheorem{lem}{Lemma}[section]
\newtheorem{pro}{Proposition}[section]

\newtheorem{asum}{Assumption}
\newtheorem{conj}{Conjecture}

\theoremstyle{remark}
\newtheorem{remark}{Remark}[section]

%



\newcommand{\annbk}{\bar{I}_{k,\ell}}
\newcommand{\annol}{I_{\bar{\ell}}}

\newcommand{\btik}{\bar{t}_{k,\ell}}

\newcommand{\fk}{f_{k}}

\newcommand{\fone}{\E^{\mathrm{1D}}}
\newcommand{\fonekal}{\E ^{\rm 1D}_{k,\alpha}}

\newcommand{\eone}{E^{\mathrm{1D}}}

\newcommand{\eonek}{E ^{\rm 1D}_{k}}

\newcommand{\eoneo}{E ^{\rm 1D}_{0}}

\newcommand{\alk}{\alpha_k}

\newcommand{\alO}{\alpha_0}

\newcommand{\potkal}{V_{k,\alpha}}

\newcommand{\jv}{\mathbf{j}}

\renewcommand{\leq}{\leqslant}
\renewcommand{\geq}{\geqslant}
\renewcommand{\Im}{\mathrm{Im}}
\renewcommand{\Re}{\mathrm{Re}}

\newcommand{\corner}{\Gamma_{\beta}(L,\ell)}
\newcommand{\cornern}{\Gamma_{\beta}(L_n,\ell_n)}
\newcommand{\cornert}{\widetilde{\Gamma}_{\beta}(L,\ell)}
\newcommand{\cornerj}{\Gamma_{\beta_j}(\leps,\elle)}

\newcommand{\cornerr}{\Gamma_{j, \mathrm{rect}}}
\newcommand{\cornere}{\Gamma_{\beta_j}(\leps,\elle)}

\newcommand{\corn}{\Gamma_{\beta}}

\newcommand{\bdo}{\partial \Gamma_{\mathrm{out}}}
\newcommand{\bdi}{\partial \Gamma_{\mathrm{in}}}
\newcommand{\bdbd}{\partial \Gamma_{\mathrm{bd}}}

\newcommand{\osmooth}{\partial \Omega_{\mathrm{smooth}}}
\newcommand{\rect}{R(L,\ell)}

\newcommand{\iell}{I_{\ell}}
\newcommand{\iellb}{I_{\bar{\ell}}}
\newcommand{\elle}{\ell_{\eps}}
\newcommand{\leps}{L_{\eps}}

\newcommand{\corneru}{\Gamma_{j,\eps}}

\newcommand{\smooth}{\mathcal{I}_{\mathrm{smooth}}}

\newcommand{\psit}{\widetilde{\psi}}
\newcommand{\psilf}{\psi_{\beta}}
\newcommand{\psil}{\psi_n}

\newcommand{\ed}{E_{\mathrm{D}}}
\newcommand{\edk}{E_{\mathrm{D},\varkappa}}

\newcommand{\en}{E_{\mathrm{N}}}
\newcommand{\edn}{E_{\mathrm{D}/\mathrm{N}}}

\newcommand{\ecorr}{E_{\mathrm{corr}}}
\newcommand{\ecorn}{E_{\mathrm{corner},\beta}}
\newcommand{\ecornl}{E_{\mathrm{corner},\beta}(L,\ell)}
\newcommand{\ecornln}{E_{\mathrm{corner},\beta}(L_n,\ell_n)}
\newcommand{\ecornt}{\widetilde{E}_{\mathrm{corner},\beta}}

\newcommand{\glecorn}{E_{\beta}}
\newcommand{\glecornt}{\widetilde{E}_{\beta}}

\newcommand{\wG}{\widetilde{\G}}

\newcommand{\exl}{\OO(\ell^{-\infty})}
\newcommand{\exln}{\OO(\ell_n^{-\infty})}

\newcommand{\ee}{\mathfrak{e}}

\setcounter{tocdepth}{1}

\begin{document}

\title{Effects of Corners in Surface Superconductivity}

\author[M. Correggi]{Michele Correggi}
\address{Scuola Normale Superiore,  Piazza dei Cavalieri, 7, 56126 Pisa, Italy.}
\email{michele.correggi@gmail.com}

\author[E.L. Giacomelli]{Emanuela L. Giacomelli}
\address{Department of Mathematics, Universit\"{a}t T\"{u}bingen, Auf der Morgenstelle, 10, 72076, T\"{u}bingen, Germany.}
\email{emanuela.giacomelli.elg@gmail.com}

\date{\today}

\begin{abstract}
We study the Ginzburg-Landau functional describing an extreme type-II superconductor wire with cross section with finitely many corners at the boundary. We derive the ground state energy asymptotics up to $ o(1) $ errors in the surface superconductivity regime, i.e., between the second and third critical fields. We show that, compared to the case of smooth domains, each corner provides an additional contribution of order $ \OO(1) $ depending on the corner opening angle. The corner energy is in turn obtained from an implicit model problem in an infinite wedge-like domain with fixed magnetic field. We also prove that such an auxiliary problem is well-posed and its ground state energy bounded and, finally, state a conjecture about its explicit dependence on the opening angle of the sector.
\end{abstract}

\maketitle

\tableofcontents

\section{Introduction}\label{sec:intro}

\noindent
The phenomenon of conventional superconductivity (see, e.g., \cite{Ti} for a review of the physics of superconductors) is nowadays very well understood at the microscopic level thanks to the Bardeen-Cooper-Schrieffer (BCS) theory \cite{BCS}: a collective behavior of the current carriers in the material is responsible for a sudden drop of the resistivity below a certain critical temperature. It is however astonishing how a phenomenological model as the Ginzburg-Landau (GL) theory \cite{GL} is capable of predicting most of the key equilibrium features of the phenomenon, in particular concerning the response of the superconducting material to an external field. When it was introduced in the `50s, indeed, the GL model was motivated only from purely phenomenological considerations. Only later it was shown that the GL theory emerges as an effective macroscopic model from the BCS theory suitably close to the critical temperature \cite{Gor,FHSS,FL}. 

The interplay between superconductivity and strong magnetic fields is known to generate a very rich variety of physical phenomena since the pioneering works of Abrikosov \cite{Ab} and St. James and De Gennes \cite{SJdG} in the late `50s/early `60s, who predicted the occurrence of the famous vortex lattice and of surface superconductivity, respectively, working only in the framework of the GL theory. In extreme synthesis, the response of a type-II superconducting material to the external magnetic field can vary from a perfect repulsion of the field ({\it Meissner effect}), for small fields, to a complete loss of superconductivity, for sufficiently strong ones. In between, several different phases of the material can be observed, ranging from various kinds of vortex states to configurations where the superconduction gets restricted to boundary regions. 
Each of these phase transitions can be associated with a {\it critical magnetic field} marking the threshold for the transition: the three major critical fields are 
	\begin{itemize}
		\item the {\it first critical field}, which separate the Meissner behavior, i.e., {when the} magnetic field inside the material is zero and superconductivity is unaffected, from states where the penetration of the field has occurred at least at isolated points ({\it vortices}), where superconductivity is lost;
		\item the {\it second critical field}, above which the superconducting behavior gets confined at the surface of the sample ({\it surface superconductivity});
		\item the {\it third critical field}, which marks the complete loss of superconductivity.
	\end{itemize}

Let us now introduce in more detail the GL theory: the free energy of the material is given by a nonlinear functional, which in the case of a superconducting {infinite} wire of cross section $ \Omega \subset \R^2 $ reads {in suitable units}
\begin{equation}
	\label{eq: glf}
	\glfe [\psi, \textbf{A}]= \displaystyle\int_{\Omega} \diff\textbf{r}\; \bigg\{ \bigg| \left ( \nabla + i \frac{\textbf{A}}{\varepsilon ^ 2}\right)\psi \bigg|^2 -\frac{1}{2b\varepsilon ^2}(2|\psi|^2-|\psi|^4)	 \bigg\}+\frac{1}{\varepsilon ^4}\displaystyle\int_{\R^2} \diff\textbf{r}\; |\mbox{curl}\textbf{A} - 1|^2,
\end{equation}
{where $ \eps, b > 0  $ are two parameters depending on the {\it London penetration depth} and the intensity of the applied magnetic field, which is assumed to be parallel to the wire.} The function $ \psi $, a.k.a. {\it wave function} or {\it order parameter}, is complex{, while} $ \aav $ is the {\it induced magnetic potential}, whose $ \curl $ yields the intensity of the magnetic field outside and inside the sample {(measured in units $ \eps^{-2} $)}. The physical meaning of the order parameter is twofold: $ |\psi|^2 $ yields the relative density of Cooper pairs and, at the same time, the phase of $ \psi $ contains the information about the stationary {\it current} flowing in the superconductor, i.e., 
\beq
	\label{eq: current}
	\jv[\psi] : = \tx\frac{i}{2} \lf( \psi \nabla \psi^* - \psi^* \nabla \psi \ri) = \Im \lf( \psi^* \nabla \psi \ri).
\eeq
Hence, one typically speaks of a {\it normal state}, if $ \psi = 0 $ and $ \aav $ is such that $ \curl \aav = 1 $, while the {\it perfect superconducting state} is identified by $ |\psi| = 1 $, $ \aav = 0 $. Whenever $ |\psi| $ is non-vanishing everywhere but not identically $ 1 $, the superconductor is said to be in a {\it mixed state}. {Any equilibrium state of the sample minimizes the free energy \eqref{eq: glf} and thus we set}
\beq
	\label{eq: glee}
	\glee := \displaystyle\min_{(\psi,\textbf{A})\in \gldom} \glfe[\psi,\textbf{A}],
\eeq
and denote by $ (\glm, \aavm) $ any minimizing configuration, where
\beq
	\label{eq: gldom}
	\gldom = \lf\{ \lf( \psi, \aav\ri) \in H^1(\Omega) \times H^1_{\mathrm{loc}}(\R^2; \R^2) \: \big| \: \curl \aav - 1 \in L^2(\R^2) \ri\}.
\eeq
We provide some details about the above minimization and the properties of any minimizing configuration $ (\glm, \aavm) $ in \cref{sec: minimization}. We also use the following convention: if we need to specify the dependence on the domain $ \Omega $, we write $ \glfe[\psi, \mathbf{A}; \Omega]  $ for the functional and $ \glee(\Omega) $ for the corresponding ground state energy.

{In the rest of the paper we are going to study the minimization \eqref{eq: glee} in the asymptotic regime}
\beq
	\varepsilon \ll 1,
\eeq
corresponding to an {\it extreme type-II superconductor}. Under this idealization, one can identify the mathematical counterparts of the critical values of the external magnetic field described above in terms of properties of the minimizing configuration $ (\glm, \aavm) $ and it is also possible to precisely identify the behavior of such thresholds {(see, e.g., \cite{SS} for an extensive discussion of the first phase transition). In particular, assuming that $ \Omega $ is a simply connected domain with {\it smooth boundary} $ \partial \Omega $, the {\it second critical field} associated with the transition from {\it bulk} to {\it surface superconductivity} is identified with $ b = 1 $ \cite[Chpt. 10.6]{FH1} and thus with a field of intensity}
			\beq
				{\Hcc = \frac{1}{\eps^2},}
			\eeq
		{based on sharp estimates ({\it Agmon estimates}) of the decay of $ \glm $ in the distance from the boundary (see \cref{app:Agmon}); the {\it third critical field} marking the transition to the normal state on the other hand corresponds to $ b = \theo^{-1} > 1 $, where  $ \theo \simeq 0.59 $ is a universal constant, i.e., more precisely \cite[Chpt. 13]{FH1},}
		\beq
			{\Hccc = \frac{1}{\theo \eps^2} + \OO(1).}
		\eeq

\subsection{Setting: Domains with Corners}

In this paper we are exclusively concerned with the behavior of the superconductor for very strong magnetic fields {\it above the second critical one}, i.e., we always assume that $ \hex > \Hcc $, or, more concretely,
\beq
	b > 1.
\eeq
The main novelty of this paper compared to other works on the GL functional above the second critical field is that we assume that $ \Omega $ is a bounded domain with {a} {\it Lipschitz boundary}, i.e., we allow for the presence of {\it corners} on $ \partial \Omega $ (see \cref{fig: domain}). Indeed, apart from few physics papers (see \cite{BDFM,FDM,SP}), the GL theory on domains with corners has already been studied only in \cite{BNF,HK,Ja,Pa2}, with the focus on the third critical field though, and in \cite{CG}, whose results are improved in this work.

	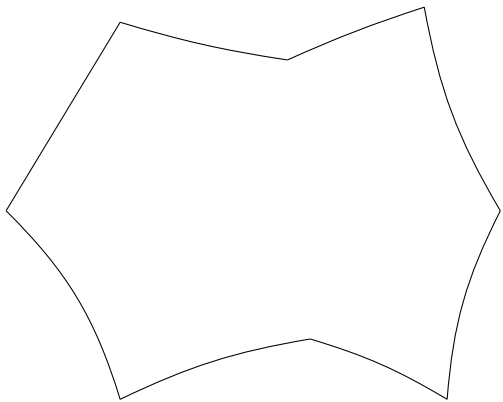
\begin{figure}[!ht]
		\begin{center}
		\begin{tikzpicture}
			\draw (0.5,3) to (2,5.5);
			\draw (2,5.5) to[bend right= 4] (4.2,5);
			\draw (6,5.7) to[bend right = 3] (4.2,5);
			\draw (6,5.7) to[bend right= 10] (7,3);
			\draw (7,3) to[bend right= 11] (6.3,0.5);
			\draw (4.5,1.3) to[bend right = 8] (2,0.5);
			\draw (4.5, 1.3) to[bend left = 7] (6.3,0.5);
			\draw (2,0.5) to[bend right= 14] (0.5,3);
		\end{tikzpicture}
		\caption{A domain $ \Omega $ with Lipschitz boundary and finitely many corners on $ \partial \Omega $.}\label{fig: domain}
		\end{center}{}
	\end{figure}

The main reason why it is interesting to study the behavior of the GL functional in domains with corners for large magnetic fields is that for smaller fields one expects that the presence of corners does not affect the salient features of superconductivity. Indeed, the occurrence of vortices but also their uniform distribution and arrangement in regular lattices, which occur for magnetic fields below $ \Hcc $, are bulk phenomena and, as such, little influenced by the boundary regularity. On the opposite, the surface superconductivity regime, where the density of Cooper pairs is non-vanishing only at and close to the boundary, might clearly depend on the presence of singularities along $ \partial \Omega $. It is then important to know if and to what extent corners can modify the boundary behavior, even more so, considered that in physics experiments it is hardly possible to distinguish between a sample with smooth boundary and another which has corners there (see, e.g., \cite[Fig. 1]{NSG}).

	We now specify in more detail our assumptions on the domain $ \Omega $. First, we require that it is simply connected and its boundary $ \partial \Omega $ is Lipschitz (see, e.g., \cite[Def. 1.4.5.1]{Gr}) {and, more concretely,} it is a curvilinear polygon of class $ C^{\infty} $, given by smooth pieces glued together at finitely many points, where however the curvature remains finite (no cusps). These assumptions are the same made, e.g., in \cite{BNF,CG,HK}.

	\begin{asum}[Piecewise smooth boundary]
		\label{asum: boundary 1}
		\mbox{}	\\
		Let $\Om$ be a bounded open subset of $\R^2$. We assume that $ \partial \Omega $ is a smooth curvilinear polygon, i.e., for every $ \rv \in \partial \Omega $ there exists a neighborhood $ U $ of $ \rv $ and a map $ \phiv: U \to \R^2 $, such that
		\begin{enumerate}[(1)]
			\item $ \phiv $ is injective;
			\item $ \phiv $ together with $ \phiv^{-1} $ (defined from $ \phiv (U) $) are smooth;
			\item the region $\Om \cap U $ coincides with either $ \lf\{ \rv \in \Om \: | \: \lf( \phiv(\rv) \ri)_1 < 0 \ri\}$ or  $ \lf\{ \rv \in \Om \: | \: \lf( \phiv(\rv) \ri)_2 < 0 \ri\}$ or  $ \lf\{ \rv \in \Om \: | \: \lf( \phiv(\rv) \ri)_1 < 0, \lf( \phiv(\rv) \ri)_2 < 0 \ri\} $, where $ \lf( \phiv \ri)_j$ stands for the $ j-$th component of $ \phiv $.
		\end{enumerate}
	\end{asum}
	
	The inward normal $ \nuv $ to $ \partial \Omega $ is thus defined almost everywhere and jumps at the corners. More precisely, if $ \gamv(\ss): [0, |\partial \Omega|) \to \partial \Omega $, is a counterclockwise parametrization of $ \partial \Omega $ satisfying $ |\gamv^{\prime}(\ss)| = 1 $, we can define the mean curvature $ \mathfrak{K}(\ss) $ almost everywhere through the identity
\beq
	\label{eq: curvature}
	\gamv^{\prime\prime}(\ss) = \mathfrak{K}(\ss) \nuv(\ss).
\eeq
{Hence, we can introduce a convenient system of tubular coordinates in a neighborhood of the boundary (see also \cite[Appendix F]{FH1}): for any point $ \rv \in \Omega $ close enough to $ \partial \Omega $, we set
\beq
	\label{eq: tc}
	\rv(\ss, \tt) = : \gamv^{\prime}(\ss) + \tt \nuv(\ss),
\eeq
with
\beq
	\label{eq: normal}
	\tt = \dist\lf(\rv, \partial \Omega\ri).
\eeq
Where $ \partial \Omega $ is smooth, i.e., far enough from the corners, this change of coordinates is a diffeomorphism close to the boundary, e.g., {as far as} $ \dist\lf(\rv, \partial \Omega\ri) = o(1) $.}

	\begin{asum}[Boundary with corners]
		\label{asum: boundary 2}
		\mbox{}	\\
		We assume that the set  $ \Sigma : = \lf\{ \rv_1, \ldots, \rv_N \ri\} $ of corners of $ \partial \Omega $, i.e., the points where the normal $ \nuv $ does not exist, is non empty but finite and given by $ N $ points. We denote by $ \beta_j$ the angle of the $j-$th corner (measured towards the interior) and by $ \ss_j $ its boundary coordinate.
	\end{asum}

\subsection{Heuristics}
\label{sec: heuristics}

Before entering the discussion of what is mathematically known on the phenomenon of surface superconductivity, we resume here its key features for smooth domains, neglecting errors and remainders: if $ b > 1 $, as $ \eps \to 0 $, 
	\begin{itemize}
		\item the order parameter $ \glm $ is non-vanishing only close to $ \partial \Omega $; more precisely it is exponentially small in $ \eps $ at distances from the boundary much larger than $ \eps $;
		\item the induced magnetic field $ \curl \aavm $ is suitably close to the applied one, i.e., a uniform magnetic field of unit strength, and, consequently, one can find a local gauge close to $ \partial \Omega $ in which $ \aavm $ is purely tangential and $ |\aavm| = \dist\{\rv, \partial\Omega\} $;
		\item the modulus of $ \glm $ is essentially independent of the tangential coordinate $ \ss $ and therefore optimizes an effective one-dimensional problem where the only variable is the distance from the boundary;
		\item the phase of $ \glm $ is on the other hand constant in $ \tt $ and linear in $ \ss $, with rapid oscillations, or, more precisely, the current \eqref{eq: current} is constant along $ \ss $.
	\end{itemize}	
Summing up, we expect that $ \aavm $ can be locally replaced by $ - \tt \ev_{\ss} $ close to $ \partial \Omega $ and
\beq
	\lf| \glm(\rv) \ri| \simeq f\lf( \tx\frac{\tt}{\eps} \ri),	\qquad		\jv[\glm] \simeq \frac{|\Omega|}{\eps^2} - \frac{\alpha}{\eps},
\eeq
for some $ f $ positive and $ \alpha \in \R $, which leads to
\beq
	\label{eq: ansatz glm}
	\glm(\rv) \simeq f(\tt/\eps) e^{- \frac{i \alpha \ss}{\eps}} e^{ i \phi_{\eps}(\rv)},
\eeq
$ \phi_{\eps} $ standing for the gauge transformation mentioned above. Note the scaling factors $ 1/\eps $ we have extracted for later convenience, so that $ f $ and $ \alpha $ are quantities of order $ \OO(1) $. 

If we plug the ansatz \eqref{eq: ansatz glm} into the GL energy {\eqref{eq: glf}}, we get
\beq
	\label{eq: 1d model problem}
	 \frac{\lf| \partial \Omega \ri|}{\eps} \int^{\elle}_0 \mbox{dt} \left\{ |\partial_t f|^2 + (t+\alpha)^2 f^2 -\frac{1}{2b} (2f^2-f^4)\right\},
\eeq
i.e., up to the prefactor $ |\partial \Omega|{/\eps} $, a one-dimensional (1D) energy functional evaluated on $ f $ {and depending on the real parameter $ \alpha $}. The value $ \elle > 0 $ is to some extent arbitrary and is chosen much larger than $ 1 $ in order to cover all the superconducting layer: we make the following explicit choice
\beq
	\label{eq: elle}
	\elle : = c_1 |\log\eps|,
\eeq
for a large constant $ c_1 $. The minimization of the 1D functional above and some variants of it w.r.t. both $ f $ and $ \alpha $ is discussed in \cref{sec: 1d}. This identifies the leading term contribution in the GL energy asymptotics $ \eones/\eps $, the optimal 1D profile $ \fs(t) $ and the optimal phase $ \as $.

The next-to-leading order term in the GL energy asymptotics is of order $ \OO(1) $ and depends on the mean curvature of the boundary: one can indeed refine the 1D model problem \eqref{eq: 1d model problem} keeping track of $ \OO(\eps) $ contributions coming from the curvature-dependent terms due to the change of coordinates $ \rv \to (\ss, \tt) $. Indeed, if we define the  rescaled tubular coordinates as
\beq
	\label{eq: rescaled tc}
	\begin{cases}
		t : = \tt/\eps \in [0, \elle],		\\
		s : = \ss/\eps \in \lf[0, \frac{|\partial \Omega|}{\eps} \ri].
	\end{cases}
\eeq
we get $ \diff \rv = \diff \tt \diff \ss \lf(1 - \curv \tt \ri) $, or, equivalently,
\beq
	\label{eq: jacobian}
	\diff \rv = \eps^2 \diff t \diff s \lf(1 - \eps k(s) t \ri),
\eeq
where we have set
\beq
	k(s) : = \mathfrak{K}(\eps s).
\eeq

\subsection{Summary}
In this paper we continue the analysis started with \cite{CG}. The expansion \eqref{eq: glee asympt CG} provides indeed only the leading order term in the energy asymptotics and does not capture the corner effects{, that we are going to investigate. More precisely, we prove that the presence of corners modifies} the $ \OO(1) $ term in the expansion \eqref{eq: gle asympt smooth}. We also identify the model problem which yields such a new contribution {in terms of a genuine 2D model}. Finally, we prove that the pointwise estimate of $ |\glm| $ in terms of $ \fs $ still holds far from the corners, precisely as in the smooth case. 

After having introduced some notation in \cref{sec: notation}, we define in \cref{sec: main results} the effective model in the corner region and state our main results about the GL energy asymptotics and the pointwise estimate of the order parameter far from corners. Further comments about the effective model and a conjecture about the possible explicit expression of the effective energy are contained in \cref{sec: corner energy}. The well-posedness of the effective model problem is proven in detail in \cref{sec: corner effective}, whereas in \cref{sec: lower bound} and \cref{sec: other proofs} we provide the energy lower bound and the rest of the arguments needed to complete the proof of our main results, respectively. The Appendix is divided in three parts: in \cref{sec: 1d} we discuss the effective 1D problems and their related properties; the GL minimization and some useful technical estimates are treated in \cref{sec: technical}; finally, \cref{sec: smooth energy} recalls the salient steps of the proof of the energy asymptotics in domains with smooth boundaries, which are used to complete the proof of energy expansion.

\subsection{Notation} 
\label{sec: notation}
Given their key role in the rest of the paper, we recall the definitions \eqref{eq: tc} and \eqref{eq: rescaled tc} of tubular coordinates $ (\ss, \tt) $ and their rescaled counterparts $ (s,t) $. We stress that $ (\ss, \tt) $ or, equivalently, $ (s,t) $ provide a smooth diffeomorphism, e.g., in
\bdm
	\lf\{ \rv \in \anne \: \big| \: \dist(\rv, \Sigma) \geq \eps|\log\eps| \ri\},
\edm
where $ \Sigma $ is the set of corner positions and 
\beq
	\label{eq: anne}
	\anne : = \lf\{ \rv \in \Omega \: \big| \: \dist\lf(\rv, \partial \Omega\ri) \leq \eps \elle \ri\},
\eeq
for $ \eps \ll 1 $, where {(see \eqref{eq: elle})}
\bdm
	\elle : = c_1 |\log\eps|
\edm
and $ c_1 $ is large enough constant, which is set once for all (see next \eqref{eq: glm exp small}). Given a differentiable function $ \psi(\rv) $ and a vector $ \aav(\rv) $, the transformations induced by the change of coordinates $ \rv \to (\ss, \tt) $ are
\beq
	\label{eq: transformation gradient}
	\lf( \nabla \psi \ri) (\rv(\ss,\tt))   = \lf(1 - \curv \tt \ri)^{-1} \big( \partial_\ss \widetilde{\psi} \big)   \ev_{\ss} + \big( \partial_{\tt} \widetilde{\psi} \big) \ev_{\tt},
\eeq
where we have set $ \widetilde\psi(\ss,\tt) : = \psi(\rv(\ss,\tt)) $ and
\beq
	\ev_{\ss} : = \gamv^{\prime}(\ss),		\qquad		\ev_{\tt} : = \nuv(\ss),
\eeq
for short. As a consequence, for any vector $ \aav $,
\bml{
	\label{eq: curl}
	\lf( \curl \: \aav \ri)(\rv(\ss,\tt)) = - \partial_{\tt} \lf( \aav(\rv(\ss,\tt)) \cdot \ev_{\ss} \ri) \\
	+ \lf(1 - \curv \tt \ri)^{-1} \lf[ \partial_{\ss} \lf( \aav(\rv(\ss,\tt)) \cdot \ev_{\tt} \ri) + \curv \aav(\rv(\ss,\tt)) \cdot \ev_{\ss} \ri].
}

We are going to make use of Landau symbols, with the following convention: given two functions $ f(x), g(x) $, with $ g > 0 $,
\begin{itemize}
	\item $ f = \OO(g) $, if $ \disp\lim_{x \to 0^+} |f|/g \leq C $;
	\item $ f = o(g) $, if $  \disp\lim_{x \to 0^+} |f|/g = 0 $;
	\item $ f \sim g $, if $ f = \OO(g) $ and $ \disp\lim_{x \to 0^+} |f|/g > 0 $;
	\item for $ f > 0 $, $ f \ll g $ or $ f \gg g $, if $ f = o(g) $ or $ g = o(f) $, respectively;
	\item for $ f > 0 $, $ f \lesssim g $ or $ f \gtrsim g $, if $ f = \OO(g) $ or $ g = \OO(f) $, respectively.
\end{itemize}
We also commit a little abuse of the notation by using the symbols $ \OO( \: ) $ and $ o( \: ) $ inside an inequality to mean a precise direction of the estimate. As usual, $ \OO( \: ) $ and $ o( \: ) $ stand for quantities whose sign is not known. In case of functions of two or more variables, we point out the parameter, whose asymptotics we are considering, by adding a label, e.g., $ o_x(\:) $ or $ \OO_x( \: ) $ is meant to stress that we are estimating the behavior of the function as $ x \to 0^+ $. 
Finally, we say that a quantity is $ \OO(\eps^{\infty}) $, as $ \eps \to 0^+ $, if it is smaller than any power of $ \eps $, i.e., it is exponentially small in $ \eps $. We will also use the following convention: $ \OO(\eps^a \logi) $, $ a > 0 $, stands for a quantity which is bounded by $ \eps^a |\log\eps|^b $ for some large but finite power $ b > 0 $, which is however not relevant since  the $ |\log\eps|$-factor is always dominated by $ \eps $-powers.

\section{Main Results}
\label{sec: main results}

\subsection{State of the art} 
\label{sec: state of the art} 
We briefly review here the most recent and relevant results on surface superconductivity, which are related to the analysis carried on in this paper (see \cite{Cor} for a more detailed review). 
After the series of works \cite{CR1,CR2,CR3,CDR}, following \cite{Pa}, where the problem was first investigated, and \cite{AH,FHP}, the phenomenon of 2D surface superconductivity in domains with smooth boundaries is well understood: combining \cite[Thm. 1]{CR1} with \cite[Lemma 2.1]{CR2} (see also \cite{CDR}), one gets that, whenever
\beq
	1 < b < \theo^{-1},
\eeq
the GL energy asymptotics is given by
\beq
	\label{eq: gle asympt smooth}
	\glee = \frac{|\partial \Omega| \eones}{\eps} - 2 \pi \ecorr + o(1),
\eeq
where 
\beq
	\label{eq: eones}
	\eones : = \inf_{\alpha \in \R} \inf_{f \in \onedom}  \int^{+\infty}_0 \mbox{dt} \left\{ |\partial_t f|^2 + (t+\alpha)^2 f^2 -\frac{1}{2b} (2f^2-f^4)\right\},
\eeq
and
\beq
	\label{eq: ecorr}
	\ecorr : = \int_{0} ^{+\infty} \diff t \:  t\lf\{ \lf| \partial_t \fs \ri|^2 + f_0 ^2 \left( -\as (t+\as) -\frac{1}{b} + \frac{1}{2b} \fs ^2\right)\ri\} = \frac{1}{3} \fs^2(0) \as - \eones,
\eeq
$ \as, \fs $ being a pair of minimizers of \eqref{eq: eones} (see \cref{sec: 1d disc}). Note that \eqref{eq: gle asympt smooth} can also be rewritten as 
\bdm
	{\glee =  \int_0^{\frac{|\partial \Omega|}{\eps}} \diff s \: \eone_{k(s)} + \OO(\eps|\log\eps|^{\infty}),}
\edm
{with a more precise remainder term and where $ \eone_k $ is defined in \eqref{eq: eonek} in \cref{sec: 1d curvature}. Expanding further $ \eonek $}, the next-to-leading order correction in \eqref{eq: gle asympt smooth} can be shown to be
\beq
	\label{eq: gauss-bonnet smooth}
	-\ecorr \int_{\partial \Omega}  \diff \ss \: \mathfrak{K}(\ss) + o(1) = -2\pi \ecorr + o(1),
\eeq
by the Gauss-Bonnet theorem, because $ \Omega $ is flat and the Euler characteristic is equal to $ 1 $. Moreover, in \cite{CDR} the quantity $ \ecorr $ is numerically evaluated and it is shown that it is positive, which has some important consequences on the distribution of superconductivity near the boundary: regions with larger curvature attract Cooper pairs, which concentrate more there (to first order), although to leading order superconductivity is uniform at the boundary. 

Indeed, a consequence of \eqref{eq: eones} is that \cite[Thm. 1]{CR1} the density $ |\glm|^2 $ is $ L^2$-close to the reference density $ \fs $. Such an estimate can in fact be strengthened in two directions:
\begin{itemize}
	\item in \cite[Thm. 2]{CR1} it is proven that there exists a boundary layer $ \mathcal{A}_{\mathrm{bl}} \subset \lf\{ \rv | \dist(\rv, \partial \Omega) \leq \eps|\log\eps| \ri\}$, containing the bulk of superconductivity, where Pan's conjecture holds true, i.e.,
		\beq
			\label{eq: pan smooth}
			\lf\| \lf| \glm(\: \cdot \:) \ri| - \fs(\dist(\: \cdot \:,\partial \Omega)/\eps) \ri\|_{L^{\infty}(\mathcal{A}_{\mathrm{bl}})} = o(1);
		\eeq
	\item the approximation of $ |\glm| $ in terms of $ \fs $ holds also locally \cite[Thm 1.1]{CR2} and one can explicitly derive the asymptotics of the density of superconductivity (in fact, the $ L^4 $ norm of $ \glm $) in any reasonable subdomain contained in $ \Omega $.
\end{itemize}

It is expected that a regime of surface superconductivity with similar features occurs also for genuine 3D samples but so far only partial results are available \cite{FKP,FMP}. In particular, in \cite[Thm 1.1]{FKP} (see also \cite{FK2}) it is shown that such a regime does exist and the leading order term in the energy asymptotics can be identified, {although} in terms of a rather implicit effective problem. In \cite[Thm. 1.5]{FMP} it is then proven that, when the magnetic field is parallel to the 3D boundary, the effective model is still given by the 1D functional \eqref{eq: eones} above.

One of the major differences for samples with non-smooth boundary is that one expects \cite{BNF,HK,Ja,JRS,Pa2} a shift of the third critical field, provided there is at least one corner with acute opening angle $ 0< \beta < \pi $: more precisely, the transition to the normal state should occur \cite[Thm. 1.4]{BNF} for applied fields larger than 
\beq
	\label{eq: Hccc}
	\Hccc = \frac{1}{\mu(\beta) \eps^2}
\eeq
where 
\beq
	\mu(\beta) : = \inf \mathrm{spec}_{L^2(W_\beta)} \lf( - \lf(\nabla + \tx\frac{1}{2} i \xv^{\perp} \ri)^2 \ri), 
\eeq
is the ground state energy of a Schr\"{o}dinger operator with uniform magnetic field  in an infinite sector $ W_{\beta} $ of angle $ \beta $. The above result is however conditioned to the inequality $ \mu(\beta) < \theo $ (see also \cite[Chpt. 8.2]{Ra} and references therein), which is known to be true for $ 0 < \beta < \pi/2 + \epsilon $ \cite{Bo,Ja,ELP} but is expected to hold in the whole interval $ \beta \in (0, \pi) $, based on numerical simulations (see, e.g., \cite{Bo,BD,ELP}).

As the applied field gets closer to \eqref{eq: Hccc} from below, the order parameter concentrates around the corner of smallest opening angle and becomes smaller and smaller everywhere else. Hence, one can speak of a {\it corner superconductivity} regime occurring before the transition to the normal state. On the other hand, in \cite{CG}, we proved that, if $ 1 < b < \theo $, superconductivity is still uniform along the boundary (although only in $ L^2 $ sense), leading to the conjectured existence of another critical field
\beq
	\Hstar = \frac{1}{\theo \eps^2},
\eeq
which marks the transition from surface to corner concentration. Indeed, if $ 1 < b < \theo $, then \cite[Thm 1.1]{CG}
\beq
	\label{eq: glee asympt CG}
	\glee = \frac{|\partial \Omega| \eones}{\eps} + \OO(|\log\eps|^2),
\eeq
and, more importantly,
\beq
		\label{eq: l2 estimate}
	\lf\| \lf|\glm( \: \cdot \:) \ri|^2 - \fs^2\lf( \dist( \: \cdot \:, \partial \Omega)/\eps \ri) \ri\|_{L^2(\Omega)} = \OO(\eps|\log\eps|),
\eeq
which implies, to leading order, uniform distribution of {superconductivity} along the boundary layer. 

The result of \cite{BNF} has also been recently improved in \cite{HK}, where the presence of several corners is taken into account and shown that, under the same unproven assumption, one can identify several critical fields associated to the concentration of the order parameter close to the respective corner. We also stress that, as noted in \cite[Rmk. 1.9]{AKP} (see also \cite{Ass,AK}), the behavior of superconductivity in presence of corners is expected to be recovered in the case of magnetic steps, i.e., for applied magnetic fields with a jump singularity along a curve.

\subsection{GL energy and density asymptotics}

	Before stating our main results, we have to define the effective problem near the corners. Here we only provide a sketchy definition and in the next \cref{sec: corner energy} we comment further about its well-posedness and heuristic meaning. The model problem is given by first minimizing the GL functional with given magnetic potential and unit magnetic field in large wedge-like domain (see \cref{fig: corner}), and then subtracting the surface energy of the outer boundary of the wedge. The wedge domain is supposed to describe the rectified and rescaled area close to each corner, where the only relevant parameter is the opening angle $ \beta_j $.

	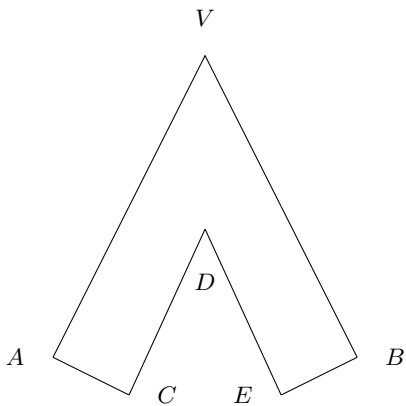
\begin{figure}[!ht]
		\begin{center}
		\begin{tikzpicture}
			\draw (0,0) -- (1,2);
			\draw (1,2) -- (1.5,3);
			\draw (1.5,3) -- (2,4) -- (2.5,3);
			\draw (2.5,3) -- (3,2);
			\draw (3,2) -- (4,0);
			\draw (0,0) -- (1,-0.5);
			\draw (4,0) -- (3,-0.5);
			\draw (1,-0.5) -- (2,1.7);
			\draw (3,-0.5) -- (2,1.7);
			\node at (2,4.5) {{\footnotesize $V$}};
			\node at (-0.5, 0) {{\footnotesize $A$}};
			\node at (4.5, 0) {{\footnotesize $B$}};
			\node at (1.5, -0.5) {{\footnotesize $C$}};
			\node at (2.5, -0.5) {{\footnotesize $E$}};
			\node at (2,1) {{\footnotesize $D$}};
		\end{tikzpicture}
		\caption{The region $ \corner $, where $ \beta $ is the opening angle $ \widehat{AVB} $, $ L = |\overline{AV}| = |\overline{VB}| $ and $ \ell = |\overline{AC}| = |\overline{EB}| $.}\label{fig: corner}
		\end{center}{}
	\end{figure}

We thus define the corner energy as
\beq
	\label{eq: ecorn}
	\boxed{
	\ecorn : = \displaystyle\lim_{\ell \to +\infty} \lim_{L \to + \infty} \lf( - 2 L \eoneo(\ell) + \inf_{\psi \in \doms(\corner)} \glf_{1}\lf[\psi, \fv; \corner \ri] \ri), }
\eeq
where $ \eoneo(\ell) $ is a 1D effective energy analogous to \eqref{eq: eones}, which is explicitly given by (see \cref{sec: 1d no curv} for further details)
\beq
	\label{eq: eoneo}
	\eoneo(\ell) : = \inf_{\alpha \in \R} \inf_{f \in H^1([0,\ell])}  \fone_{0,\alpha}[f],	
\eeq
with
\beq
	\fone_{0,\alpha}[f] : = \int^{\ell}_0 \mbox{dt} \left\{ |\partial_t f|^2 + (t+\alpha)^2 f^2 -\frac{1}{2b} (2f^2-f^4)\right\},
\eeq
and we denote by $ \al_0 \in \R $, $ \fol \in H^1([0,\ell]) $ a corresponding minimizing pair, i.e., $ \eoneo(\ell) = \fone_{0, \alpha_0}[\fol] $. The minimization domain is
\beq
	\label{eq: doms}
	 \doms(\corner) : = \lf\{ \psi \in H^1(\corner) \: \big| \: 
	 \lf. \psi\ri|_{\bdbd \cup \bdi} = \psi_{\star} \ri\}.
\eeq
where the boundaries are identified in \cref{fig: corner} by the segments $ \bdbd = \overline{AC} \cup \overline{EB} $ and $ \bdi = \overline{CD} \cup \overline{DE} $ and, in local tubular coordinates $ (s,t) \in [-L,L]\times [0,\ell] $, { we set, for $ |s| $ large enough, e.g., for $ |s| \geq \frac{\ell}{\tan(\beta/2)} $,}
\beq
	\label{eq: psi star}
	\psi_{\star}(\rv(s,t)) : = f_0(t) \exp \lf\{ -i \alpha_0 s - \tx\frac{1}{2} i st \ri\}.		
\eeq
{Note that $ \psi_{\star} $ is ill-defined in the whole corner region but we are going to use it (see \eqref{eq: doms}) only where tubular coordinates are meaningful, i.e., far enough from the corner.}
Any function in $ \doms(\corner) $ has thus to satisfy Dirichlet non-zero boundary conditions on $ \bdi $ and $ \bdbd $ in trace sense, whose role is going to be explained in next \cref{sec: corner energy}. Finally, the magnetic potential $ \fv $ is fixed and equals
\beq
	\label{eq: fv}
	\fv(\rv) : = \tx\frac{1}{2} \lf( - y, x \ri) = : \tx\frac{1}{2} \rv^{\perp},
\eeq
in a coordinate system chosen\footnote{In fact, any choice of the coordinate system would lead to the same energy because of rotational invariance of the GL functional and its gauge symmetry, which allows to incorporate any translation of the origin.} as in \cref{fig: axis}.
We also point out that the existence of the limit in \eqref{eq: ecorn} is not trivial at all and, in fact, it will be the main content of \cref{pro: ecorn}. {Furthermore, the GL functional in the second term on the r.h.s. of \eqref{eq: ecorn} is independent of $ \eps $, but still contains the parameter $ b \in (1, \theo^{-1}) $.

	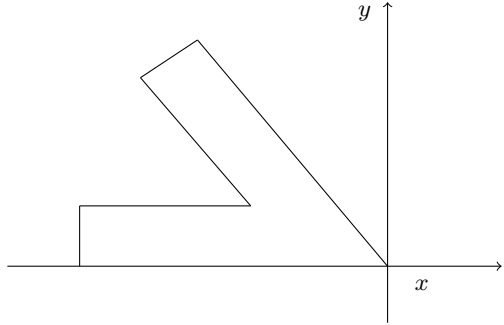
\begin{figure}[!ht]
		\begin{center}
		\begin{tikzpicture}[scale=0.5]
			\draw[->] (0.5,2) to (13.5,2);
			\draw[->](10.5,0.5) to (10.5,9);
			\draw (2.4,2) to (2.4,3.6);
			\draw (2.4,3.6) to (6.9,3.6);
			\draw (4,7) to (6.9,3.6);
			\draw (4,7) to (5.5, 8);
			\draw (5.5,8) to (10.5,2);
			\node at (11.4,1.5) {{\footnotesize $x$}};
			\node at (9.9,8.7) {{\footnotesize $y$}};
		\end{tikzpicture}
		\caption{Cartesian coordinates for the corner domain.}\label{fig: axis}
		\end{center}
	\end{figure}

The main result we prove in this paper is about the GL energy asymptotics as $ \eps \to 0 $, i.e., we derive the expansion of $ \glee $ up to correction of order $ o(1) $. Compared with the case of domains with smooth boundary, some new terms of order $ \OO(1) $ appear: each corner indeed contributes to the energy by $ E_{\mathrm{corner}, \beta_j} $, $ \beta_j $ being the corresponding opening angle.

	\begin{teo}[GL energy asymptotics]
		\label{teo: gle asympt}
		\mbox{}	\\
		Let $\Om\subset\R^2$ be any bounded simply connected domain satisfying \cref{asum: boundary 1} and \cref{asum: boundary 2}. Then, for any fixed
		\beq
			\label{eq: b condition}
			 1< b < \theo^{-1},
		\eeq
		 as $\eps \to 0$, it holds
		\beq
			\label{eq: gle asympt}
			\framebox{$
			\glee = \disp\frac{|\partial\Om| \eones}{\eps} - \ecorr \int_{0}^{|\partial \Omega|} \diff \ss \: \curv + \sum_{j = 1}^N E_{\mathrm{corner}, \beta_j} + o(1). $}
		\eeq
	\end{teo}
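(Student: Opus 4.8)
\emph{Overall strategy.} The proof consists of establishing matching upper and lower bounds for $\glee$. In both directions one first reduces, via the Agmon-type estimates of \cref{sec: technical}, to the boundary layer $\anne$, where $\glm$ is concentrated up to $\OO(\eps^{\infty})$, and then splits $\anne$ into $N$ corner cells $\celldj$ — neighborhoods of the $\rv_j$ whose trace on $\partial\Omega$ has length $\sim\eps|\log\eps|$, so that after the rescaling $\rv\mapsto\rv/\eps$ they look like the wedge $\cornerj$ with $\leps,\elle\sim|\log\eps|$ — together with the complementary smooth strip $\mathcal{R}$, on which tubular coordinates are a diffeomorphism.

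\emph{Upper bound.} The plan is to glue together two trial configurations: on $\mathcal{R}$, the optimal surface trial state of the smooth case recalled in \cref{sec: smooth energy}, namely (in tubular coordinates) $\trial\simeq\fk(\tt/\eps)\,e^{i(\text{oscillating phase})}$ with $\curl\atrial=1$ and $\atrial$ purely tangential; and on each $\celldj$, after gauging $\atrial$ to $\fv=\tfrac12\rv^{\perp}$ and rescaling, an almost minimizer of $\glf_1[\,\cdot\,;\cornerj]$ in $\doms(\cornerj)$. The non-trivial Dirichlet conditions in \eqref{eq: doms}--\eqref{eq: psi star} are tailored precisely so that on $\bdbd\cup\bdi$ the corner competitor equals $\psis$, which in turn agrees with the smooth-part trial state up to curvature corrections; the interpolation between the two then takes place in a set of small measure between functions that are $o(1)$-close in $H^1$, hence costs $o(1)$. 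Collecting terms, $\mathcal{R}$ contributes $\int\eone_{k(s)}\,\diff s$ over the boundary minus the arcs near the corners, i.e.\ $\tfrac{|\partial\Omega|\eones}{\eps}-\ecorr\int_0^{|\partial\Omega|}\curv\,\diff\ss-\sum_j 2\leps\eones+o(1)$, while by \cref{pro: ecorn} the $j$-th corner region contributes $\glf_1[\psi;\cornerj]=2\leps\eoneo(\elle)+E_{\mathrm{corner},\beta_j}+o(1)=2\leps\eones+E_{\mathrm{corner},\beta_j}+o(1)$; the surface terms $2\leps\eones$ cancel, giving the upper bound in \eqref{eq: gle asympt}.

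\emph{Lower bound.} This is the delicate half. After reducing to $\anne$, one uses the a priori control on minimizers from \cref{sec: technical} to replace $\aavm$, up to a gauge change and an $o(1)$ error, by a fixed potential of unit curl, and shows that the energy decouples, modulo $o(1)$, into the contributions of $\mathcal{R}$ and of the $\celldj$. On $\mathcal{R}$ the local version of the smooth-domain lower bound of \cref{sec: smooth energy} gives $\glfe[\glm,\aavm;\mathcal{R}]\geq\int\eone_{k(s)}\,\diff s-o(1)$ over the relevant boundary portion, i.e.\ $\tfrac{|\partial\Omega|\eones}{\eps}-\ecorr\int\curv-\sum_j 2\leps\eones+o(1)$. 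On each $\celldj$ one gauges $\aavm$ to $\fv$ and rescales by $\eps$, so that $\celldj$ becomes a domain close to $\cornerj$; the rescaled order parameter has traces $o(1)$-close to $\psis$ on $\bdbd\cup\bdi$ (by the smooth-strip estimates on the two adjacent edges) and exponentially small tails on $\bdo$, so after a small modification it becomes an admissible competitor for $\glf_1[\,\cdot\,;\cornerj]$, whence $\glfe[\glm,\aavm;\celldj]\geq 2\leps\eoneo(\elle)+E_{\mathrm{corner},\beta_j}+o(1)$ by \cref{pro: ecorn} and the monotonicity and continuity of $\glf_1$ and $\eoneo$ in $(L,\ell)$ established in \cref{sec: corner effective} and \cref{sec: 1d}. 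Summing over $j$ and adding the $\mathcal{R}$-term — the $2\leps\eoneo(\elle)$ completing exactly the surface energy of the arcs omitted from $\mathcal{R}$, so that the $2\leps\eones$ again cancel — yields the lower bound matching \eqref{eq: gle asympt}.

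\emph{Main obstacle.} The heart of the argument is the lower bound inside the corner cells: one must carry out simultaneously the gauge transformation to $\fv$, the rectification and rescaling of the curved cell $\celldj$ onto the straight wedge $\cornerj$, and — most delicately — force the a priori uncontrolled traces of $\glm$ on the artificial cuts $\bdbd\cup\bdi$ to coincide with the prescribed datum $\psis$ at $o(1)$ energy cost, all while keeping $\leps,\elle$ of order $|\log\eps|$ so that the double limit defining $\ecorn$ in \eqref{eq: ecorn} is attained with a vanishing error (this requires a quantitative form of \cref{pro: ecorn}). The global bookkeeping — that the surface subtraction $-2L\eoneo(\ell)$ in \eqref{eq: ecorn} matches exactly the boundary arcs transferred from $\mathcal{R}$ into the corner cells — is what reconciles the surface and corner regimes and produces the clean $o(1)$ expansion \eqref{eq: gle asympt}.
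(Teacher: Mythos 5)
Your overall architecture (matching upper and lower bounds, reduction to the boundary layer, decomposition into a smooth strip plus $N$ corner cells, and reduction of each corner cell to the effective problem defining $\ecorn$) coincides with the paper's. The upper bound is essentially correct, modulo one point you gloss over: the smooth-part trial phase $e^{-iS(s)}$ accumulates $\int \alpha_{k(\xi)}\,\diff\xi$ and therefore does \emph{not} agree with the corner datum $f_0(t)e^{-i\alpha_0 s}$ on the cuts; the paper fixes this by allowing a constant phase shift $\varkappa_{\pm,j}$ in the Dirichlet datum of the corner problem and checking (via \cref{lem: modified D energies} and \cref{rem: phase k constant}) that this does not alter the corner energy.

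The genuine gap is in the lower bound, precisely at the step you yourself flag as the main obstacle. You propose to modify $\glm$ near the artificial cuts $\bdbd\cup\bdi$ so that it becomes an admissible competitor for the \emph{Dirichlet} problem $\inf_{\doms}\glf_1$, claiming its traces there are $o(1)$-close to $\psi_{\star}$ ``by the smooth-strip estimates''. This cannot work as stated: (i) the pointwise closeness of $|\glm|$ to $f_0$ (\cref{pro: pan}) is itself a \emph{consequence} of the energy asymptotics and is not available a priori --- before the theorem is proved one only has the $L^2$ bound \eqref{eq: l2 estimate}, which does not control traces; and (ii) even granting modulus control, nothing constrains the \emph{phase} of $\glm$ along the cuts, which may carry a nontrivial $t$-dependence, so forcing it to $e^{-i\alpha_0 s}$ could cost $\OO(1)$ energy. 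Moreover, your claim that the energy ``decouples modulo $o(1)$'' into smooth and corner contributions is false for the plain GL functional: the integration by parts that produces the cost function $K_0$ in the smooth strip leaves uncancelled boundary current terms $\int_0^{\ell}\frac{F_0}{f_0^2}\,j_t[\psi]$ at $s=s_j\pm L_\eps$ (see \eqref{eq: lb smooth}), which are $\OO(1)$ in general. The paper resolves both problems at once by defining the corner contribution through the \emph{Neumann-type} functional $\Gt_{\fv}$ of \eqref{eq: gltfv en corner}, which imposes no condition on $\bdbd$ and incorporates exactly those leftover current terms, and then proving separately (\cref{pro: DN corner}) that the Dirichlet and Neumann corner energies agree as $L\to\infty$, so both define the same $\ecorn$. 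Without this device --- or an equivalent way to remove the trace-matching and the $\OO(1)$ boundary currents --- your lower bound does not close.
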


	\begin{remark}[Critical field $ \Hcc $]
		\label{rem: Hcc}
		\mbox{}	\\
		In smooth domains the regime of surface superconductivity corresponds to the parameter interval $ 1 < b < \theo^{-1} $, namely the second critical field is $ b = 1 $, while the third one is precisely $ b= \theo^{-1} $. This is motivated by the results in \cite{CR2,CR3} in combination with \cite{FK,FH3}, where it is proven that, for $ b < 1 $, there is still superconductivity in the bulk, while, for $ b > \theo^{-1} $, the normal state is a global minimizer of the GL functional, respectively. The condition $  b > 1  $ is expected to be sharp also for domains with corners and, consequently, we expect that the second critical field is given by
		\beq
			\Hcc = \frac{1}{\eps^2},		
		\eeq
		The value $ 1/\eps^2 $ can actually be taken as a definition of the second critical field, but, as for smooth domains, it would be necessary to show that, for $ b \leq 1 $, there is still superconductivity in the bulk. This has not yet been proven in case of samples with corners, but, based on the results proven in \cite{FK}, it is highly expected.
	\end{remark}

	\begin{remark}[Critical field $ \Hstar $]
		\label{rem: Hstar}
		\mbox{}	\\
		The result proven in \cref{teo: gle asympt} substantiates even more than \cite{CG} the conjecture about the appearance of an additional critical field 	
		\beq
				\Hstar = \frac{1}{\theo \eps^2},
		\eeq
		when corners are present along the sample boundary. Indeed, combining \eqref{eq: gle asympt} and, more importantly, next \cref{pro: pan}, with \cite[Thm. 1.6]{BNF} (see also \cite[Thm. 1.2]{HK}), which states the exponential decay of $ \glm $ in the distance from $ \Sigma $ (still, based on the unproved conjecture on the linear model), one concludes that superconductivity is uniform along the boundary layer until the threshold $ b = \theo^{-1} $ is crossed and, then, concentrates close to the corners with smallest opening angles. More precisely, assuming that all the angles $ \beta_j $ are acute and different, one can identify \cite[Rmk. 1.4]{HK} a sequence of $ N $ critical fields
		\beq
			\Hstar = H_{\mathrm{corner},0} \leq H_{\mathrm{corner},1}  \leq \ldots \leq H_{\mathrm{corner},N-1} \leq H_{\mathrm{corner},N} = \Hccc,
		\eeq
		with
		\beq
			 H_{\mathrm{corner},j} = \frac{1}{\mu(\beta_j) \eps^2},		\qquad 	\mbox{for } 1 \leq j \leq N,
		\eeq
		so that, in between $ H_{\mathrm{corner},j-1} $ and $ H_{\mathrm{corner},j} $, the material is superconducting only close to the $ j-$th corner $ \rv_j $. Let us stress that all these results are conditioned by the request $ \mu(\beta_j) < \theo $ for all the corners, which is expected to hold true (but not proven) for any acute angle $ 0 < \beta_j < \pi  $.
	\end{remark}

	Once the energy asymptotics is obtained, it is natural to ask whether one can extract information about the behavior of the order parameter, which would then give access to the physically relevant quantities, as the density of Cooper pairs. As already proven in \cite[Thm. 1.1]{CG}, the distribution of superconductivity along the boundary layer is uniform to leading order (see \eqref{eq: l2 estimate}). Note that such an estimate goes along with the exponential decay proven in \eqref{eq: agmon}, which implies that $ \glm = o(1)$  at distance much larger than $ \eps $ from the boundary $ \partial \Omega $: we can indeed restrict out attention to the boundary layer $ \anne $ defined in \eqref{eq: anne}, since
	\beq
		\label{eq: glm exp small}
		\glm(\rv) = \OO\lf( \eps^{c_1 \cdot c(b) + 1} \ri),		\qquad		\mbox{in } \Omega\setminus \anne,
	\eeq
	and by taking $ c_1 $ large, the above quantity can be made arbitrarily small. We thus denote it as $ \OO(\eps^{\infty}) $, to stress that it is an arbitrarily large power of $ \eps $. 
	
	{\begin{remark}[Refined $ L^2 $ estimate]
		\label{rem: refined l2 estimate}
		\mbox{}	\\
		An almost direct consequence of the energy asymptotics \eqref{eq: gle asympt} is an improvement of the bound \eqref{eq: l2 estimate}: setting
		\beq
			\label{eq: omega smooth}
			\Omega_{\mathrm{smooth}} : = \lf\{ \rv \in \Omega | \dist(\rv, \Sigma) \geq c_2 \eps |\log\eps| \ri\},
		\eeq
		for some large enough constant $ c_2 > 0 $, one has 
		\beq
			\label{eq: refined l2 estimate}
			\lf\| \lf| \glm(\: \cdot \:) \ri|^2 - {\fs}^2\lf( \dist(\: \cdot \:, \partial \Omega)/\eps \ri) \ri\|_{L^2(\Omega_{\mathrm{smooth}} )} = o(\eps|\log\eps|).
		\eeq
	\end{remark}}

	The {estimates \eqref{eq: l2 estimate} or \eqref{eq: refined l2 estimate} do} not exclude however the presence of vortices or region with very little superconductivity left close to the boundary and, therefore, one would like to prove a bound in a stronger norm, e.g., in $ L^{\infty} $, which is stated in the next \cref{pro: pan}.

	\begin{pro}[GL order parameter]
		\label{pro: pan}
		\mbox{}	\\
		Under the same assumptions of \cref{teo: gle asympt},
		\beq
			\label{eq: pan}
			\framebox{$ \lf\| |\glm(\rv)| - {\fs}(0) \ri\|_{L^{\infty}(\osmooth\cap \partial \Omega) } = o(1). $}
		\eeq
	\end{pro}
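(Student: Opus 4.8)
The plan is to derive the pointwise bound from the already-established $L^2$ estimate \eqref{eq: refined l2 estimate} (or \eqref{eq: l2 estimate}) combined with elliptic regularity for the GL equations and a local compactness/blow-up argument, following the strategy used in the smooth case \cite[Thm.~1.1]{CR2}. First I would recall that on $\osmooth$ one has the tubular-coordinate diffeomorphism available and the exponential decay \eqref{eq: agmon}--\eqref{eq: glm exp small}, so it suffices to work in the boundary layer $\anne$, away from $\Sigma$, where $\aavm$ can be gauged to be essentially $-\tt \ev_\ss$ with controlled errors. The idea is to argue by contradiction: if \eqref{eq: pan} fails, there is a sequence $\eps_n \to 0$ and points $\rv_n \in \partial\Omega$ with $\dist(\rv_n,\Sigma) \geq c_2 \eps_n|\log\eps_n|$ such that $\bigl| |\glm(\rv_n)| - \fs(0) \bigr| \geq \delta_0 > 0$.

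Next I would perform a blow-up at scale $\eps_n$ around $\rv_n$: rescale $\rv = \rv_n + \eps_n \xv$, pass to the half-plane model (since $\rv_n$ is at distance $\gg \eps_n$ from any corner, the rescaled domain converges to a half-plane and the rescaled potential converges to the linear magnetic potential), and use the uniform $H^1$ bounds on $\glm$ together with elliptic regularity applied to the GL equation $-(\nabla + i\aavm/\eps^2)^2\glm = \eps^{-2}b^{-1}(1-|\glm|^2)\glm$ — after rescaling this becomes an $\eps$-independent semilinear elliptic equation with bounded right-hand side, so $C^{1,\alpha}_{\mathrm{loc}}$ bounds follow. Hence along a subsequence the rescaled order parameters converge in $C^0_{\mathrm{loc}}$ to a limit $\psi_\infty$ solving the limiting half-plane GL equation. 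The $L^2$ estimate \eqref{eq: l2 estimate}, after rescaling, forces $|\psi_\infty|^2 = \fs^2(\dist(\cdot,\partial\{\text{half-plane}\}))$ in an averaged sense; combined with the known uniqueness/rigidity for the 1D limiting profile (Section \ref{sec: 1d}, in particular that $\fs$ is the unique minimizer up to the relevant symmetries and is strictly positive), one concludes $|\psi_\infty(0)| = \fs(0)$, contradicting the persistence of the gap $\delta_0$ under $C^0$ convergence.

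The main obstacle I anticipate is making the blow-up argument robust \emph{uniformly} up to distance $\sim\eps|\log\eps|$ from the corner set: one must ensure that the rescaled domains really do converge to a clean half-plane (no corner appears in the blow-up window) and that the gauge choice and the decay estimates \eqref{eq: agmon} survive the localization with errors that vanish in the limit; this is exactly where the constant $c_2$ in \eqref{eq: osmooth} must be taken large relative to $c_1$. A secondary technical point is converting the $L^2$-closeness into genuine pointwise information about the limit profile — this requires an intermediate local energy estimate showing that on balls of fixed (rescaled) radius the GL energy of $\glm$ is close to the corresponding 1D energy, so that $\psi_\infty$ is not merely a solution but an \emph{energy-minimizing} half-plane configuration, for which the rigidity result applies. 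Once these are in place, the contradiction argument closes and \eqref{eq: pan} follows.
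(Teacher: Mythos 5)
Your proposal follows a qualitative blow-up/compactness route rather than the quantitative energy method the paper actually uses, and as written it has two genuine gaps. First, the step where you claim that the $L^2$ estimate \eqref{eq: l2 estimate}, after rescaling, forces $|\psi_\infty|^2=\fs^2$ in an averaged sense does not work: \eqref{eq: l2 estimate} is a \emph{global} bound $\OO(\eps|\log\eps|)$ on $\| |\glm|^2-\fs^2\|_{L^2(\Omega)}$, and nothing prevents that error from concentrating entirely in the blow-up window $B_{R\eps}(\rv_n)$. After rescaling $\rv=\rv_n+\eps\xv$ one only gets $\int_{B_R}\bigl| |\psi_n|^2-\fs^2\bigr|^2\diff\xv=\OO(|\log\eps|^2)$, which is not even bounded, so the limit $\psi_\infty$ inherits no information from \eqref{eq: l2 estimate}. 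What is needed is precisely a \emph{local} energy estimate at scale $\eps$ --- the point you relegate to a ``secondary technical point'' is in fact the whole content of the proof.

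Second, even granting such a local estimate, the rigidity you invoke for the limiting half-plane configuration is not available: identifying the blow-up limit as $\fs(t)e^{-i\as s}$ up to gauge is essentially Pan's conjecture itself, i.e., the statement being proven, and the paper explicitly notes that uniqueness of the optimal phase $\as$ is unknown, so a classification of energy-minimizing half-plane solutions cannot simply be quoted. The paper's proof avoids both issues: the matching upper bound \eqref{eq: upper bound} and the refined lower bound obtained by retaining the positive terms of \cref{lem: sum En} leave an $o(1)$ remainder which directly controls $\sum_n\int_{\celln}(1-\eps k_n t)f_n^4(1-|u_n|^2)^2$ (see \eqref{eqp: starting BBH}), hence the same quantity in each single cell. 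Combining this with the a priori Lipschitz bound $|\nabla|u_n||\leq C$ in the thin layer $\lf\{\dist(\rv,\partial\Omega)\leq c\eps\ri\}$, where $f_n$ is bounded below uniformly in $\eps$, the classical argument of \cite{BBH2} gives $\bigl| |u_n|-1\bigr|=o(1)$ pointwise there: if $|u_n|$ deviated from $1$ by $\delta_0$ at some point, the gradient bound would propagate the deviation to a ball of fixed radius, making $\int f_n^4(1-|u_n|^2)^2$ bounded below, a contradiction. No compactness, blow-up limit, or classification of entire solutions is needed.
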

	
	\begin{remark}[Uniform distribution of superconductivity]
		\label{rem: uniform}
		\mbox{}	\\
		The estimate \eqref{eq: pan} can in fact be extended to the boundary layer of points $ \rv $ such that $ \dist(\rv, \osmooth) \leq \eps \sqrt{|\log\eps|} $, in the very same way as the analogous result in \cite[Thm. 2]{CR1}. An important consequence is the uniformity of superconductivity in $ \anne $, where one has
		\beq
			\lf|\glm(\: \cdot \:) \ri| \sim \fs\lf( \dist(\: \cdot \:, \partial \Omega)/ \eps \ri),
		\eeq
		not only in weak sense, as proven in \cite{CG}, but also pointwise. Strictly speaking, the corner regions are excluded, but, on the one hand, their overall area is $ \OO( N \eps^2 |\log\eps|^2) $, i.e., much smaller than $ |\anne| $, and, on the other, we do expect the minimizer of the corner problem to be close to $ \fs $ almost everywhere but very close to the corner. An interested reader might wonder whether it is possible to show that $ \glm $ is close to such an effective minimizer in the corner region, but this presumably requires to get some more information about the effective problem \eqref{eq: ecorn} as well as extract a more precise estimate of the remainders in \eqref{eq: gle asympt}.
	\end{remark}
	
	\begin{remark}[Current along $ \partial \Omega $]
		\label{rem: current}
		\mbox{}	\\
		An important consequence of \eqref{eq: pan smooth} in smooth domains is the non-vanishing of $ \glm $ close to the boundary, because of the strict positivity of $ \fs $, and thus surface superconductivity is robust w.r.t. the inclusion of the applied magnetic field. In addition, this allows to estimate the current \eqref{eq: current} along the boundary or, equivalently, the total winding number of $ \glm $ on $ \partial \Omega $ \cite[Thm. 3]{CR1}:
\beq
	\label{eq: deg}
	\deg \lf( \glm, \partial \Omega \ri) = \frac{|\Omega|}{\eps^2} - \frac{\as}{\eps}(1 + o(1)).
\eeq
Such a behavior is similar (although physically different) to the ultrafast rotation regime for angular velocities larger than the third critical one of rotating Bose-Einstein condensates, when vortices are expelled from the boundary region \cite{CRY,CD} (see also \cite{CRv,CY,CPRY} for further results on rotating condensates). In presence of corners, \eqref{eq: pan} guarantees the non-vanishing of $ \glm $ only far from the corners and prevents us to estimate the current on $ \partial \Omega $. Indeed, the pointwise estimate of the gradient \eqref{eq: est grad infty} allows a variation of order $ 1 $ of $ \glm $ on a scale $ \eps $, which is much smaller than the tangential length of the corner region, thus implying that $ \glm $ may a priori vanish there.
	\end{remark}

	\subsection{Corner effective energy}
	\label{sec: corner energy}
		
	We now give more details  about the corner effective problem. Let us start by identifying more precisely the corner region depicted in \cref{fig: corner}. It is meant as a suitable stretching and rescaling (on a scale $ \eps $) of a local area around any corner of $ \Omega $ of tangential and normal lengths both of order $ \eps |\log\eps| $, as $ \eps \to 0 $. For later convenience, however, we consider a region where the tangential length $ L $ along the angle is different from the normal length $ \ell $. Let then $ \corner $ be a triangle-like region as in \cref{fig: corner}, where  $ \beta $ is the opening angle at the vertex $ V $ and side lengths $ L,  \ell > 0 $. In order to reproduce the shape of \cref{fig: corner}, we always assume that
	\beq
		\label{eq: L ell condition}
		\ell \leq \tan\lf( \tx\frac{\beta}{2}\ri) \, L.
	\eeq
	{We recall the definition of the boundaries $ \bdi, \bdbd $ provided in \cref{sec: main results} and denote} by $ \bdo $ the outer boundary $ \overline{AVB} ${, so that $ \partial \corner = \bdo \cup \bdi \cup \bdbd $.}

The effective energy in the corner region is given by a suitably rescaled GL energy with fixed magnetic potential \eqref{eq: fv}. The effective variational model is then
	\beq
		\label{eq: ecornl}
		\ecornl : = - 2 L \eoneo(\ell) + \inf_{\psi \in \doms(\corner)} \glf_{1}\lf[\psi, \fv; \corner\ri],
	\eeq
	where $  \doms(\corner)  $ is defined in \eqref{eq: doms}.
	The heuristics behind the choice \eqref{eq: ecornl} is that in the surface superconductivity regime each portion of the boundary of the sample yields a (leading order) energy contribution proportional to $ \eones $ times its length, which equals $ \eones |\bdo| = 2 L \eones $ in the case of $ \corner $. Indeed, the boundaries $ \bdi $ and $ \bdbd $ are not expected to give any energy contribution. More precisely, $ \bdi $ is immersed in the bulk, where the order parameter is exponentially small in $ \ell $ and it could have been removed from the outset by consider a solid wedge; similarly, $ \bdbd $ is a fictitious boundary, whose role is to separate the corner region from the rest. Mathematically, the non-zero Dirichlet conditions on $ \bdi $ and $ \bdbd $ {in the minimization domain $ \doms $} guarantee that those portions of the boundary do not contribute {to surface superconductivity}. 
	
	Once the boundary energy $ 2 L \eones $ has been subtracted, what remains is precisely the additional energy due to the presence of the corner. Such an energy is indeed of purely geometric nature and is generated by the constraint on the boundary $ \bdo $: in order to reproduce the correct energy along $ \bdo $, the minimizer must behave like the model order parameter  $ \fs(t) e^{-i \as s} $ in a layer of width $ \OO(1) $ around $ \bdo $, but the coordinate $ s $ has a jump on the bisectrix of the domain and thus such a behavior is allowed only close far from the corner. The modulus of the minimizer $ \fs(t) $ is in fact well defined and continuous everywhere, since it depends on the normal coordinate which is continuous as well. Hence, in order to glue together the two model profiles, any minimizer must accommodate a non-trivial phase factor, which must be genuinely 2D, because no 1D function can adjust the jump of $ - i \as s $ along the bisectrix. Unfortunately, the explicit expression of such a phase remains unknown, expect in certain specific cases (for almost flat angles, see \cite{CG2}).
	
	The GL energy functional appearing in \eqref{eq: ecornl} is gauge invariant but we have chosen to work in a prescribed local gauge, i.e., we have made an explicit choice of the vector potential $ \fv $, generating a unit magnetic field. In this respect the GL energy in \eqref{eq: ecornl} is similar to the effective functional studied in \cite[Eq. (1.11)]{BNF}, although both the parameter regime and the domain are slightly different. {Such a difference reflects indeed the different behavior of the minimizer: in the present setting it decays in the distance from the outer boundary, whereas in \cite{BNF}, the decay is in the distance from the corner.}
	
	Recalling that $ L $ and $ \ell $ are obtained via a rescaling from the tangential and normal length of the corner region and thus, in the original problem in $ \Omega $, are actually of order $ |\log\eps| \gg 1 $, we have to study the limit $ L, \ell \to + \infty $ of \eqref{eq: ecornl}.

	\begin{pro}[Corner energy]
		\label{pro: ecorn}
		\mbox{}	\\
		{Let $ \lf\{ \ell_n \ri\}_{n \in \N} $, $ \lf\{ L_n \ri\}_{n \in \N} $ be two monotone sequences with $ \ell_n, L_n \to  + \infty $, as $ n \to + \infty $, and $ \beta \in (0, 2\pi) $, such that $ 1 \ll \ell_n \leq \tan\lf( \beta/2 \ri) \, L_n \leq C \ell_n^{a} $ for some $ a > 0 $. Then, for any $ 1 < b < \theo^{-1} $, the limit}
		\beq
			\label{eq: ecorn def}
			{\lim_{n \to + \infty} \ecorn(L_n,\ell_n) = : \ecorn}
		\eeq
		{exists, it is finite and independent of the chosen sequences.}
	\end{pro}

	As stated in \cref{pro: ecorn} (see also \cref{pro: ecorn bound}), the corner energy $ \ecorn $ is bounded for any $ \beta \in (0,2\pi) $, although we have no information on its sign. In fact, it might as well be zero. In a companion paper \cite{CG2} however we prove that, when $ \beta $ is close to $ \pi $, this is not the case (see also below). 
	
	Once the well-posedness of the model problem has been proven, it is then natural to ask whether one can derive the explicit dependence of $ \ecorn $ on the angle $ \beta $. So far we have not found such an expression but, based on some heuristic arguments, we formulate below an unproven conjecture, which is inspired again by the Gauss-Bonnet theorem. Indeed, the first order correction to the GL energy asymptotics in smooth domains reads equivalently 
	\beq
		\label{eq: smooth gb}
		-\ecorr \int_{0}^{|\partial \Omega|} \diff \ss \: \curv = -2 \pi \ecorr.
	\eeq
	In presence of corner singularities on $ \partial \Omega $, the Gauss-Bonnet theorem has to be modified to take into account the corners: the only correction is that the integral of the curvature must now be performed over the smooth part of $ \partial \Omega $ and each corner yields a contribution proportional to its opening angle 
	\bdm
		\int_{\partial \Omega_{\mathrm{smooth}}} \diff \ss \: \curv + \sum_{j =1}^N (\pi - \beta_j) = 2 \pi.
	\edm
	Therefore, one can think of the above identity as if each corner contributes to the mean curvature with a Dirac mass multiplied by $ \pi - \beta_j $ and the integral is meant in distributional sense, i.e., formally replacing the curvature $ \mathfrak{K}(\ss) $ with
	\bdm
		\mathfrak{K}(\ss) + \sum_{j =1}^N (\pi - \beta_j) \delta(\ss_j),
	\edm
	which,  if substituted on the r.h.s. of \eqref{eq: smooth gb}, yields
	\bdm
		-\ecorr \int_{\partial \Omega_{\mathrm{smooth}}} \diff \ss \: \curv - E_{\mathrm{corr}} \sum_{j =1}^N (\pi - \beta_j).
	\edm
	After a direct comparison with the asymptotics proven in \cref{teo: gle asympt}, i.e.,
	\bdm
		-\ecorr \int_{\partial \Omega_{\mathrm{smooth}}} \diff \ss \: \curv + \sum_{j =1}^N E_{\mathrm{corner}, \beta_j}
	\edm
	it is then very natural to state the conjecture below. Note that, if true, the conjecture would imply that the next-to-leading order term in the GL energy expansion would always be given by $ - 2\pi E_{\mathrm{corr}} $, irrespective of the presence of corners.

	\begin{conj}[Corner energy]
		\label{conj: ecorn}
		\mbox{}	\\
		For any $ 1 < b < \theo^{-1} $ and $ \beta \in (0,2\pi) $, one has
		\beq
			\label{eq: ecorn conj}
			\ecorn = -(\pi - \beta) \ecorr.
		\eeq
	\end{conj}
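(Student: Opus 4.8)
Equation~\eqref{eq: ecorn conj} being only conjectural, what follows is a plan rather than a proof. The goal would be to trap $\ecornl$ between $-(\pi-\beta)\ecorr \pm o(1)$ in the regime $1 \ll \ell \ll L$, so that \cref{pro: ecorn} yields \eqref{eq: ecorn conj}. The guiding heuristic, anticipated in \cref{rem: comparison}, is that a corner of opening $\beta$ ought to behave like a boundary point carrying a Dirac mass $(\pi-\beta)$ of curvature: since, in a smooth domain, a boundary arc with $\int \curv = \theta$ contributes $-\theta\,\ecorr + o(1)$ to the $\OO(1)$ term of the expansion \eqref{eq: gle asympt smooth}--\eqref{eq: gauss-bonnet smooth}, the corner should contribute $-(\pi-\beta)\ecorr$. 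I would attempt to turn this into a proof along two complementary lines: an \emph{additivity} (Gauss-Bonnet) identity for $\ecorn$ as a function of $\beta$, and an \emph{identification} of the resulting proportionality constant.

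\textbf{Gauss-Bonnet additivity.} The first step would be a gluing lemma. Attaching the wedge $\Gamma_{\beta_1}(L,\ell)$ and the wedge $\Gamma_{\beta_2}(L,\ell)$ along a common outer edge, keeping the same vertex so that the magnetic potential \eqref{eq: fv} is consistent across the two pieces, produces --- up to regions where the order parameter is $\OO(\ell^{-\infty})$ --- a wedge of opening $\beta_1+\beta_2$ together with an overlapping straight strip of length $\sim L$ about the shared ray. Because the one-dimensional minimiser $\fs$ obeys Neumann conditions at the boundary, with $\fs(0)\neq 0$ and $\fs'(0)=0$, the two half-line profiles glue to a $C^1$ function across the now-interior shared ray at no extra cost. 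Bookkeeping the subtracted surface energies $2L\eoneo(\ell)$ of the two pieces against those of the combined wedge and of the shared strip --- this is the crucial and delicate point, since it is precisely here that the shift ``$-\pi$'' has to emerge --- one should arrive at
\bdm
	E_{\mathrm{corner},\beta_1+\beta_2-\pi} = E_{\mathrm{corner},\beta_1} + E_{\mathrm{corner},\beta_2}.
\edm
Writing $h(x) := E_{\mathrm{corner},\pi+x}$ this is Cauchy's equation $h(x_1+x_2)=h(x_1)+h(x_2)$; combined with the boundedness of $\ecorn$ from \cref{pro: ecorn} --- hence measurability of $h$ --- or, better, with a continuity estimate in $\beta$ obtained by comparing the wedge domains for nearby angles, it forces $h(x)=cx$, i.e.\ $\ecorn = -c(\pi-\beta)$ for some constant $c$ independent of $\beta$. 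In particular $E_{\mathrm{corner},\pi}=0$, as it must be since $\beta=\pi$ corresponds to the absence of a corner.

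\textbf{Identification of the constant.} It then remains to show $c=\ecorr$. One route is to apply \cref{teo: gle asympt} to a regular $N$-gon $P_N$ inscribed in a disk $B_R$: its sides being straight, $\glee(P_N) = |\partial P_N|\,\eones/\eps + N\,E_{\mathrm{corner},\beta_N} + o_\eps(1)$ with $\beta_N = \pi - 2\pi/N$, whence $N\,E_{\mathrm{corner},\beta_N} = -c\,N(\pi-\beta_N) = -2\pi c$; on the other hand \eqref{eq: gle asympt smooth}--\eqref{eq: gauss-bonnet smooth} give $\glee(B_R) = 2\pi R\,\eones/\eps - 2\pi\ecorr + o_\eps(1)$. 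Letting $N\to\infty$ so that $|\partial P_N|\to|\partial B_R|$ and controlling the leading order (which requires either $N \gg \eps^{-1/2}$ together with a version of \cref{teo: gle asympt} uniform in the number of corners, or an $\eps$-independent comparison of $\glee(P_N)$ with $\glee(B_R)$) would force $c=\ecorr$. Alternatively, once the linearity $\ecorn = -c(\pi-\beta)$ is known, the value $c=\ecorr$ follows from the $\beta\to\pi^-$ behaviour of $\ecorn$ established in the companion paper \cite{CG2}.

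\textbf{Main obstacle.} The hard part is the gluing bookkeeping of the second step and, underneath it, the fact --- stressed in \cref{rem: uniform} --- that no fine (say, pointwise) control on the minimiser of \eqref{eq: ecornl} is currently available. Making the additivity rigorous, and in particular extracting the exact ``$-\pi$'' shift, requires a matching \emph{lower} bound localised in an $\OO(1)$-neighbourhood of the vertex, where the tubular coordinates of \cref{sec: notation} degenerate (the normal coordinate $t$ jumps across the bisectrix) so that the slicing argument underlying the smooth-domain lower bound is unavailable. One would need an Agmon-type localisation isolating such a neighbourhood together with a direct variational lower bound there --- exactly the ingredient that, as the paragraph following \cref{conj: ecorn} indicates, is presently known only for $\beta$ close to $\pi$.
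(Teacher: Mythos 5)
This statement is a conjecture: the paper offers no proof, only the Gauss--Bonnet heuristic (a corner of opening $\beta$ acting as a Dirac mass $(\pi-\beta)$ of curvature) and, via the companion paper \cite{CG2}, the perturbative verification $\ecorn=-\delta\,\ecorr+\OO(\delta^{4/3})$ for $\beta=\pi-\delta$, $\delta\ll 1$. Your heuristic reproduces the paper's faithfully, and your proposed reduction --- establish linearity of $\beta\mapsto \ecorn$ in $(\pi-\beta)$, then read off the slope from the near-flat regime of \cite{CG2} --- is a sensible and genuinely different route from anything in the paper. You are also right that this is a plan, not a proof; but one step of the plan, as formulated, would fail rather than merely remain open.

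The additivity step does not survive the bookkeeping you flag as delicate. If you glue $\Gamma_{\beta_1}(L,\ell)$ and $\Gamma_{\beta_2}(L,\ell)$ along a common outer edge, that edge belongs to $\bdo$ for each separate problem, so each separate minimiser carries a surface layer of energy $\simeq L\,\eoneo(\ell)$ along it; in the combined wedge of opening $\beta_1+\beta_2$ the same ray is \emph{interior}, the Agmon estimates (\cref{lem: agmon 1}) force the order parameter to be $\OO(\ell^{-\infty})$ there, and no such layer exists. Since $\eoneo<0$, the sum of the two separate energies undershoots the combined one by $2L\,\eoneo+\OO(1)$: the mismatch is $\OO(L)$, not the $\OO(1)$ quantity $-\pi\,c$ you need, and it cannot be absorbed by adjusting which multiples of $2L\,\eoneo(\ell)$ are subtracted, because the number of outer edges of length $L$ is two in every one of the three problems. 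The obstruction is also visible at the level of gauges: the local gauge making $\fv$ look like $-t\,\ev_s$ is adapted to the outer boundary of each wedge, and the two gauges are incompatible across the shared ray, so the glued trial state is not admissible (nor low-energy) for the combined problem. The geometrically correct additivity is different: two corners of openings $\beta_1,\beta_2$ sitting on the \emph{same} boundary, separated by a straight segment of length $d$, merge as $d\to 0$ into a single corner of opening $\beta_1+\beta_2-\pi$ (turning angles add). Proving $\ecorn$ continuous through that merging requires control of the effective problem uniformly as two corner regions collide --- precisely the fine information on the minimiser that \cref{rem: uniform} says is unavailable. Your polygon identification of the constant has the secondary issue that \cref{teo: gle asympt} is stated for a fixed domain with fixed $N$ and its $o(1)$ is not uniform in $N$; moreover $\beta_N\to\pi$ as $N\to\infty$, so that route collapses back onto the near-flat regime already covered by \cite{CG2} and adds nothing to it.
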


	\begin{remark}[Acute/obtuse angles]
		\mbox{}	\\
		In the linear case, i.e., for a magnetic Schr\"{o}dinger operator with uniform magnetic field in an infinite wedge, it is expected \cite[Rmk. 1.1]{BNF} and numerically verified \cite{ABN,BNDMV} that the ground state energy changes for acute or obtuse angles: for the former it is a strictly increasing function of the angle, which equals $ \theo $ for flat angles, while it is believed to remain constantly equal to $ \theo $ for any obtuse angle. On the opposite, in the nonlinear case, the above Conjecture would provide the same expression for acute and obtuse angles.
	\end{remark}

	As already anticipated, we prove in \cite{CG2} that in a wedge with opening angle $ \pi - \delta $, $ 0 < \delta \ll 1 $, the corner energy is given by
	\beq
		\ecorn = -\delta \ecorr + \OO(\delta^{4/3}|\log\delta|) + \OO(\ell^{-\infty}),
	\eeq
	i.e., it coincides to leading order in $ \delta $ with the conjectured expression. Furthermore, this also shows that the corner energy $ \ecorn $ is non-trivial, at least for angles close to the flat one.

\bigskip

\begin{footnotesize}
\noindent\textbf{Acknowledgments.} The authors are thankful to \textsc{S. Fournais} and \textsc{N. Rougerie} for useful comments and remarks about this work. The support of MIUR through the FIR grant 2013 ``Condensed Matter in Mathematical Physics (Cond-Math)'' (code RBFR13WAET) and of the National Group of Mathematical Physics (GNFM--INdAM) through Progetto Giovani 2016 ``Superfluidity and Superconductivity'' and Progetto Giovani 2018 ``Two-dimensional Phases'' is  acknowledged. The authors are especially grateful to the Institut Mittag-Leffler, where part of this work was completed.
\end{footnotesize}

\section{Corner Effective Problems}
\label{sec: corner effective}

This section is mainly devoted to the proof of \cref{pro: ecorn}, i.e., the existence of the limit defining the effective energy contribution of each corner, and the discussion of the properties of such a limit. For later convenience, we also study another minimization problem in $ \corn $ with different boundary conditions and show that it asymptotically provides the same effective energy (\cref{pro: DN corner}).

\subsection{Surface superconductivity in a finite strip}
\label{sec: strip}

We start by studying a simple minimization problem in a finite strip. Similar problems have already been studied in \cite{Pa,AH,CR1}, taking into account the limit of an infinite strip. Here, instead, the focus is more on boundary conditions and their effect on the ground state energy. We are going to apply the corresponding obtained results to the minimization in \eqref{eq: ecorn} to derive \cref{pro: DN corner}. 

After a local gauge transformation and blow-up on a scale $ \eps $, the leading contribution to the GL energy of a portion of the boundary layer of $ \Omega $ of tangential length $ \eps L $ and normal length $ \eps \ell $, suitably far from any corner, is {(see, e.g., \cite[Lemmas 2 \& 4]{CR1})}
\beq
	\label{eq: stripf}
	\G\lf[\psi; \rect\ri] : = \int_0^{L} \diff s \int_{0}^{\ell} \diff t \: \lf\{ \lf| \partial_t \psi \ri|^2 + \lf| \lf(\partial_s - i t\ri) \psi \ri|^2 - \frac{1}{2b} \lf( 2 |\psi|^2 - |\psi|^4 \ri) \ri\},
\eeq
where $ L,\ell > 0 $, $ b \in (1, \theo^{-1}) $ and $\rect $ stands for the rectangle
\beq
	\label{eq: strip}
	\rect : = \lf[ 0, L \ri] \times \lf[ 0, \ell \ri],	\qquad		\mbox{with }
	\ell  \gg 1.
\eeq
We study two simple minimization problems associated to the energy \eqref{eq: stripf}. First, we set
\beq
	\label{eq: stripe D}
	\ed(\rect) : = \inf_{\psi \in \domd(\rect)} \G\lf[\psi; \rect\ri],
\eeq
and denote by $ \psid $ any corresponding minimizer. The minimization domain is given by
\bml{
	\label{eq: strip domd}
	\domd(\rect) : = \lf\{ \psi \in H^1(\rect) \: \big| \: 
	\psi(0,t)  = \fol(t), \psi(L,t) = \fol(t) e^{-i \alpha_0 L},	\ri.	\\
	\lf. \psi(s,\ell) = f_0(\ell) e^{-i \alpha_0 s} \ri\},
}
where the boundary conditions are meant in trace $ H^{1/2}$-sense and we recall that $ \fol , \al_0 $ is a minimizing pair (see also \cref{sec: 1d curvature}) of \eqref{eq: eoneo}. The label $ \mathrm{D} $ stands for the Dirichlet-type conditions at
$ s = 0 $ and $ s = L $. The heuristic meaning of such conditions is the following:
\begin{itemize}
	\item on the boundary between the surface and the bulk region, i.e., for $ t = \ell $, the order parameter is exponentially small and the same holds true for $ \fol(\ell) $, so the contribution of the boundary conditions there is expected to be negligible; {for this reason we could as well have set $ \psi = 0 $ at $ t = \ell $, but this would make the analysis more complicated;}
	\item at the normal boundaries $ s = 0 $ or $ s = L $, the order parameter is set equal to the ideal minimizer (see \cref{sec: heuristics});
	\item no condition is set on the boundary $ t = 0 $, which is meant to coincide with a blow-up of a portion of $ \partial \Omega $: this is crucial to capture the key features of surface superconductivity and leads to Neumann conditions along the line $ t = 0 $.
\end{itemize}
	{By setting $ \psi : = \chi + f_0(t) e^{-i\alO s} $, one can reduce the variational problem 
	\eqref{eq: stripe D} to the minimization of a functional of $ \chi $ with zero Dirichlet conditions on the boundaries $ s = 0, L $ and $ t = \ell $. This easily allows to deduce}
	 (see, e.g., \cite[Chapt. 4]{G}) {the existence of} a minimizer, {its smoothness} and {the fact that any minimizer solves} 
\beq
	\label{eq: var eq strip}
	- \lf( \nabla - i t \ev_s \ri)^2 \psi = \frac{1}{b} \lf(1 - \lf|\psi\ri|^2 \ri) \psi.
\eeq

The alternative version of \eqref{eq: strip domd} is provided by a modification of the energy: we define
\beq
	\label{eq: stripft}
	\wG\lf[\psi; \rect\ri] : = \G[\psi; \rect] - \int_0^{\ell}\diff t\, \frac{\Fol(t)}{\fol^2(t)} j_t\lf[\psi\ri] \bigg\vert_{s=0}^{s=L},
\eeq
where $ \Fol $ is the potential function (see also \cref{sec: 1d no curv})
\beq
	\label{eq: Fol}
	\Fol(t) : = 2\int_0^t\diff\eta\, \lf(\eta + \al_0 \ri) \fol^2(\eta), 
\eeq
and $ j_t $ is the normal component of the current $ \jv[\psi] $ given in \eqref{eq: current}, i.e., $ j_t[\psi] = \frac{i}{2} \lf( \psi  \partial_t  \psi^* - \psi^*  \partial_t  \psi  \ri)$. The boundary terms appearing in \eqref{eq: stripft} are non-trivial only if the phase of $ \psi  $ varies  along the normal to the boundary, which is obviously not the case for the reference function $ \fol(t) e^{- i\alpha_0 s} $. The reason why such terms have been added to the energy will become clear later (see the proof of \cref{pro: strip} and in particular \eqref{eq: en Ett0}). The minimization of \eqref{eq: stripft} is performed on a domain without constraints on the boundaries $ s = 0 $ and $ s = L $, i.e., we set
\beq
	\label{eq: stripe N}
	\en(\rect) : = \inf_{\psi \in \domn(\rect)} \widetilde{\G}\lf[\psi; \rect\ri],
\eeq
where
\beq
	\label{eq: strip domn}
	\domn(\rect) : = \lf\{ \psi \in H^1(\rect) \: \big| \: \psi(s,\ell) = f_0(\ell) e^{-i\alpha_0 s} \ri\}
\eeq
and we denote by $ \psin $ any corresponding minimizer, which enjoys the same properties as $ \psid $, except for conditions of magnetic Neumann-type at $ s = 0 $ and $ s = L$, i.e.,
\beq
	\label{eq: neumann conditions}
	\lf. \lf[ \lf( \partial_s + i \alpha_0 \ri) \psin - i \frac{F_0(t)}{f_0^2(t)} \lf( \partial_t \psin \ri) \ri] \ri|_{s = 0, L} = 0.
\eeq

	The surface superconductivity behavior occurs for $ 1 < b < \theo^{-1} $ and is characterized by the emergence of the 1D effective model \eqref{eq: eoneo} or, equivalently, \eqref{eq: eones}.

	\begin{pro}[GL energies on a finite strip]
		\label{pro: strip}
		\mbox{}	\\
		For any $ 1 < b < \theo^{-1} $ and $ L > 0 $, as $ \ell \to \infty $,
		\beq
			\label{eq: strip energies}
			\edn(\rect) = L \lf( \eoneo(\ell) + \exl \ri)
			= L \lf( \eones + \exl) \ri).
		\eeq
	\end{pro}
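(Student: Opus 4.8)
The plan is to prove matching bounds for $\ed(\rect)$ and $\en(\rect)$, the only non‑elementary ingredient being a one‑dimensional comparison between $\Fol$ and $\fol^2$ borrowed from \cref{sec: 1d}. For the upper bound I would use the factorized trial state $\psi_{\mathrm{trial}}(s,t):=\fol(t)\,e^{-i\alpha_0 s}$, which belongs to $\domd(\rect)\subset\domn(\rect)$ because $\alpha_0,\fol$ is a minimizing pair for \eqref{eq: eoneo}. A direct computation gives $\G[\psi_{\mathrm{trial}};\rect]=L\,\eoneo(\ell)$, and since the phase of $\psi_{\mathrm{trial}}$ does not depend on $t$ the normal current $j_t[\psi_{\mathrm{trial}}]$ vanishes identically, so the extra term in \eqref{eq: stripft} is zero and $\wG[\psi_{\mathrm{trial}};\rect]=L\,\eoneo(\ell)$ as well; hence $\ed(\rect),\en(\rect)\leq L\,\eoneo(\ell)$. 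Moreover, every competitor in $\domd(\rect)$ has trace $\fol$ times a $t$‑independent phase at $s=0,L$, so $j_t$ vanishes there and $\wG\equiv\G$ on $\domd(\rect)$; combined with $\domd(\rect)\subset\domn(\rect)$ this yields $\en(\rect)\leq\ed(\rect)$, so it remains to bound $\en(\rect)$ from below.

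For the lower bound I would substitute $\psi=\fol(t)\,e^{-i\alpha_0 s}\,w(s,t)$ — legitimate since $\fol>0$ on $[0,\ell]$ for $b<\theo^{-1}$ — and integrate by parts in $t$, using the $1$D Euler--Lagrange equation $-\fol''+(t+\alpha_0)^2\fol=\tx\frac{1}{b}(1-\fol^2)\fol$ together with the boundary/optimality relations $\fol'(0)=\fol'(\ell)=0$ and $\Fol(\ell)=2\int_0^\ell(\eta+\alpha_0)\fol^2\,\diff\eta=0$. The cross term $-2(t+\alpha_0)\fol^2\,\Im(\bar w\,\partial_s w)=-\Fol'(t)\,\Im(\bar w\,\partial_s w)$, once integrated by parts in $t$ as well, generates precisely the boundary contribution $\int_0^\ell(\Fol/\fol^2)\,j_t[\psi]\big\vert_{s=0}^{s=L}\diff t$ subtracted in \eqref{eq: stripft} — this is the reason that term was inserted there — so one arrives at the exact identity (valid in particular for the smooth minimizer $\psin$)
\beq
	\label{eq: strip reduced energy}
	\wG[\psi;\rect] = L\,\eoneo(\ell) + \int_{\rect}\diff s\,\diff t\lf\{ \fol^2\lf| \lf( \partial_t - i\tx\frac{\Fol}{\fol^2}\partial_s \ri)w \ri|^2 + \frac{\fol^4-\Fol^2}{\fol^2}\lf| \partial_s w \ri|^2 + \frac{1}{2b}\fol^4\lf( |w|^2-1 \ri)^2 \ri\},
\eeq
all boundary terms produced by the integrations by parts having either vanished (those at $t=0,\ell$, by $\fol'(0)=\fol'(\ell)=0$ and $\Fol(0)=\Fol(\ell)=0$) or been cancelled against the current term in \eqref{eq: stripft} (those at $s=0,L$). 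The three summands in the bracket are nonnegative as soon as one has the $1$D pointwise estimate $\Fol^2(t)\leq\fol^4(t)$ on $[0,\ell]$ (from \cref{sec: 1d}; should it hold only up to an $\exl$ correction near $t=\ell$, the correction is harmless because $\fol^2=\exl$ there and $w$ is controlled via \cref{rem: agmon strip}), whence $\wG[\psi;\rect]\geq L(\eoneo(\ell)+\exl)$ on $\domn(\rect)$ and therefore $\en(\rect)=\ed(\rect)=L(\eoneo(\ell)+\exl)$.

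Finally, the second equality in \eqref{eq: strip energies} is just the convergence $\eoneo(\ell)=\eones+\exl$ of the finite‑interval $1$D energy to its half‑line counterpart, established in \cref{sec: 1d}. I expect the main obstacle to be the $1$D comparison $\Fol^2\leq\fol^4$: it is exactly what turns the a priori indefinite cross term into the sum of squares in \eqref{eq: strip reduced energy} and hence drives the whole proof; the remaining point — checking that the integration‑by‑parts boundary terms cancel as claimed against the specially chosen current term in \eqref{eq: stripft} — is mere bookkeeping once organized correctly.
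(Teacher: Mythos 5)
Your proposal is correct and follows essentially the same route as the paper: the same trial state $\fol(t)e^{-i\alpha_0 s}$ for the upper bound, the same splitting $\psi=\fol e^{-i\alpha_0 s}w$ with the same integrations by parts whose $s$-boundary currents are cancelled by the extra term in $\wG$, and the same key 1D input, since your condition $\Fol^2\leq \fol^4$ is (using $\Fol\leq 0$) exactly the positivity of the cost function $\kol=\fol^2+\Fol$ on $\annol$ invoked in the paper. The only differences are cosmetic: you package the cross term as an exact completion of the square instead of the inequality $2|\Im(ab)|\leq|a|^2+|b|^2$, and you deduce $\en\leq\ed$ from the vanishing of $j_t$ on Dirichlet competitors so as to prove a single lower bound, while the paper treats the two cases separately; your treatment of the tail $[\bar\ell,\ell]$ (via $|\Fol|\leq C\ell^2\fol^2$ and the Agmon decay of the minimizer) is likewise equivalent to the paper's.
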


	\begin{remark}[Boundary conditions]
		\label{rem: boundary conditions}
		\mbox{}	\\
		The boundary condition $ \psin(s,\ell) = f_0(\ell) e^{-i\al_0 s} $ is needed for the asymptotics \eqref{eq: strip energies} to hold true. The reason is that otherwise one would get an additional energy contribution from the boundary $ t = \ell $, i.e., the energy would be twice the value appearing in \eqref{eq: strip energies}. Indeed, without the condition at $ t = \ell $, exploiting the gauge invariance of \eqref{eq: stripf} and replacing $ \psi, - t \ev_s $ with $ \psi^* e^{i \ell s}, -(\ell - t) \ev_s $, one can exchange the boundaries $ t = 0 $ and $ t = \ell $, leaving the energy unaffected. 
	\end{remark}

	\begin{proof}
		We first observe that the last estimate is in fact stated in \cref{lem: eoneo approx eones} in  \cref{sec: 1d no curv}. The rest of the statement is actually proven by showing separately that the first estimate holds true for both $ \ed $ and $ \en $.

		Let us first consider $ \ed(\rect) $. For the upper bound, we test $ \G $ on the trial state $ \fol(t) e^{-i \alpha_0 s} $, which immediately yields $  \ed(\rect) \leq L \eoneo(\ell)  $.
		For the corresponding lower bound we use the same energy splitting used, e.g., in \cite{CR2}, i.e., we set
		\beq
			\label{eq: psi splitting strip}
			\psid(s,t) = : \fol(t) e^{-i\alpha_0 s} u(s,t),
		\eeq
		which, via an integration by parts and the variational equation \eqref{eq: fk var}, leads to
		\beq
			\label{eq: en splitting}
			\ed(\rect) = L \eoneo + \mathcal{E}_0\lf[u;\rect\ri],
		\eeq
		where
		\beq
			\label{eq: en E0}
			\mathcal{E}_0\lf[u;\rect\ri]:=\int_0^{L}\diff s\int_0^{\ell}\diff t\, \fol^2 \lf\{\lf |\nabla_{s,t} u \ri|^2 -  2(t+\alpha_0)j_s[u] +\frac{\fol^2}{2b}(1-|u|^2)^2\ri\},
		\eeq
		and $ j_s[\psi] $ is the tangential component of \eqref{eq: current}, i.e., explicitly $ j_s[\psi] = \tx\frac{i}{2} \lf( \psi  \partial_s  \psi^* - \psi^*  \partial_s  \psi \ri) $.
		We stress that the decoupling does not generate any boundary term because $\fol^{\prime} $ vanishes both at $ t = 0 $ and $ t = \ell $ by \eqref{eq: fol nbc}: the only non-trivial computation is the following integration by parts
		\bdm
			\int_0^{L}\diff s\int_0^{\ell}\diff t\,  \lf[ |u|^2 {f_0^{\prime}}^2 + f_0 f_0^{\prime} \partial_t \lf| u \ri|^2  \ri]= - \int_0^{L}\diff s\int_0^{\ell}\diff t\, |u|^2 f_0 f_0^{\prime\prime},
		\edm
		where $ f_0^{\prime\prime} $ can then be replaced via the variational equation \eqref{eq: fk var}.

		{The key ingredient} to bound from below $ \E_0[u] $ {is} the pointwise positivity of the cost function  (see {\eqref{eq: Kol} and} \eqref{eq: kol positive} in \cref{sec: 1d no curv})
		\beq
			\label{eq: K0}
			K_0 : =  \fol^2 + \Fol,	
		\eeq 
		in $ I_{\bar{\ell}} = [0,\bar\ell] $ given by \eqref{eq: annol} (recall that $ \bar{\ell} = \ell + \OO(1) $ by \eqref{eq: barell}).
		Indeed, we integrate by parts twice:
		\bml{
			\label{eq: int by parts}
			 -2 \int_0^{L}\diff s\int_0^{\ell} \diff t \:(t+\alpha_0) f_0^2(t) j_s[u]  = - \int_0^{L}\diff s\int_0^{\ell} \diff t \:  \Fol^{\prime}(t)   j_s[u]
			 = \int_0^{L}\diff s\int_0^{\ell} \diff t \:  \Fol(t) \partial_t j_s[u] \\
			 = 2 \int_0^{L}\diff s\int_0^{\ell} \diff t\, \Fol(t) \, \Im \lf( \partial_t  u^*   \partial_s u \ri) + \int_0^{\ell}\diff t\, \Fol(t)j_t[u]\bigg\vert_{s=0}^{s=L},
		}
		where the boundary terms of the first integration by parts {vanish}, because $ \Fol(0) = \Fol(\ell) = 0 $, and the last terms vanish as well, since, due to boundary conditions, $ u(0,t) = u(L,t) = 1 $ and thus $ j_t[u] = 0 $ there.

		Using \eqref{eq: fol bound} and the simple bound $ 2 |\Im  (a b) | \leq |a|^2 + |b|^2 $, one then obtains as in \cite[Eq. (4.38)]{CR1} (see also \cite[Sect. 2.3 \& Proof of Prop. 4.2]{CR1})
		\bml{
			\label{eq: lb E0}
			\E_0\lf[u; \rect \ri] \geq  \int_0^{L}\diff s\int_0^{\bar\ell}\diff t\,\lf\{ K_0(t) \lf( |\partial_s u|^2+|\partial_t u|^2 \ri) + \frac{1}{2b}f_0^4 (1-|u|^2)^2 \ri\} \\
			+ \int_0^{L}\diff s\int_{\bar{\ell}}^{\ell} \diff t\,  \lf\{ f_0^2 \lf| \nabla u \ri|^2 + 2 \Fol(t) \, \Im \lf( \partial_t  u^*   \partial_s u \ri) \ri\} \\
			\geq \int_0^{L}\diff s\int_{\bar{\ell}}^{\ell} \diff t\,  \lf\{ f_0^2 \lf| \nabla u \ri|^2 + 2 \Fol(t) \, \Im \lf( \partial_t  u^*   \partial_s u \ri) \ri\},
		}
		by \eqref{eq: kol positive} and the positivity of the last term {on the r.h.s. of the first line}. 		
		It thus remains to estimate the quantity on the r.h.s. of \eqref{eq: lb E0} above, which can be done by integrating by parts back:
		\bml{
			\label{eqp: kinetic interval}
			\int_0^{L}\diff s\int_{\bar{\ell}}^{\ell} \diff t\,  \lf\{ f_0^2 \lf| \nabla u \ri|^2 + 2 \Fol(t) \, \Im \lf( \partial_t  u^*   \partial_s u \ri) \ri\} 	\\
			=  \int_0^{L}\diff s\int_{\bar{\ell}}^{\ell} \diff t \: \lf\{ f_0^2 \lf| \nabla u \ri|^2  - 2 (t+\alpha_0) j_s[f_0 u] \ri\} - 2 F_0(\bar{\ell}) \int_0^{L}\diff s \: j_s[u]\bigg\vert_{t = \bar\ell}.
			}
		Now, exploiting \eqref{eq: fkprime decay} and the fact that $ \bar\ell = \ell + \OO(1) $, we deduce that
		\beq
			\fol(t) = \exl,		\qquad		\fol^{\prime}(t) = \exl,		\qquad		\mbox{for any } t \geq \bar\ell.
		\eeq
		Hence, $ \lf| \nabla \psid \ri| = f_0 \lf| \nabla u \ri| + \exl $ in $ \iell \setminus \iellb $. Now, since $ F_0(\ell) = 0 $, $ F_0(\bar\ell) \leq C \ell f^2_0(\bar\ell) $, we can bound the boundary term (last term in \eqref{eqp: kinetic interval}) by 
		\bdm
			C L \sup_{s \in [0,L]} \lf| \psid(s,\bar{\ell}) \ri| \lf| \nabla \psid(s,\bar\ell) \ri| = L \exl,
		\edm
		thanks to \eqref{eq: point agmon strip} and the bound $ \lf\| \nabla \psid \ri\|_{\infty} \leq C $ on the gradient of $ \psid $ (see \eqref{eq: est grad infty}). For the same reason, the first term on the r.h.s. of \eqref{eqp: kinetic interval} can be bounded from below via Cauchy inequality and \eqref{eq: agmon strip} by
		\bdm
			- C \int_0^{L}\diff s\int_{\bar{\ell}}^{\ell} \diff t\, \lf(t + \al_0 \ri)^2 \lf| \psid \ri|^2 = \OO(L \ell^{-\infty}),
		\edm
		which finally yields,
		\beq
			\int_0^{L}\diff s\int_{\bar{\ell}}^{\ell} \diff t\,  \lf\{ f_0^2 \lf| \nabla u \ri|^2 + 2 \Fol(t) \, \Im \lf( \partial_t  u^*   \partial_s u \ri) \ri\} = \OO(L \ell^{-\infty}),
		\eeq
		and thus the statement.

		The proof for the modified functional \eqref{eq: stripft} is very similar. The upper bound is obtained by evaluating the energy on the trial state $ f_0(t) e^{-i\alpha_0 s} $: notice that the phase of such a function is independent of $ t $, then the normal component $ j_t $ of its current is identically zero and therefore the boundary terms in $ \widetilde{\G} $ do not yield any additional contribution. 
		The final outcome is the very same bound $ E_{\mathrm{N}}(\rect) \leq L \eoneo(\ell) $ as before.

		One can then apply the splitting technique, setting (for a different $ u $ than before)
		\beq
			\label{eq: psi splitting strip N}
			\psin(s,t) = : f_0(t) e^{-i\alpha_0 s} u(s,t),
		\eeq
		to get the identity $ \en(\rect) = L \eoneo + \Ett_0[u;\rect] $, where
		\beq
			\label{eq: en Ett0}
			\Ett_0[u;\rect] := \E_0[u; \rect] - \int_0^{\ell}\diff t\, \Fol(t)j_t[u]\bigg\vert_{s=0}^{s=L}.
		\eeq
		The proof of the lower bound is then completely analogous to the one above: the only nontrivial observation is that the first integration by parts in \eqref{eq: int by parts} generates the same outcome, because of the vanishing of $ F_0 $ at the boundaries, and the last terms in \eqref{eq: int by parts} are exactly compensated by the boundary terms in the functional {\eqref{eq: en Ett0}}, so that they sum up to zero. Actually, this was the main reason to add those terms to \eqref{eq: stripft} in first place. The lower bound then follows from the positivity of $ K_0 $, exactly as above. 
	\end{proof}
	
	{A straightforward adaptation of the above arguments leads to the following result on a modified problem with twisted boundary conditions, which is going to play a role later.}
	
	{
	\begin{pro}[GL energy with twisted boundary conditions]
		\label{pro: twisted}
		\mbox{}	\\
		Let $ \varkappa \in [0,2\pi) $, $ 1 < b < \theo^{-1} $ and $ L > 0 $. Let also 
		\beq
			\edk(\rect)  : = \inf_{\psi \in \domk(\rect)} \G[\psi; \rect],	
		\eeq
		\beq
			\domk(\rect)  : = \lf\{ \psi \in \domn(\rect) \: \big| \: \psi(0,t) = f_0(t) e^{i \varkappa}, \psi(L,t) = f_0(t) e^{-i\alO L} \ri\}.
		\eeq
		Then, as $ \ell \to \infty $,
		\beq
			\label{eq: strip energies twisted}
			\eoneo(\ell) L + \OO(L \ell^{-\infty}) \leq \edk(\rect) \leq \eoneo(\ell) L + \frac{C}{L}.
		\eeq
	\end{pro}}

	{\begin{proof}
		The lower bound is obtained via the splitting technique and the positivity of the cost function as discussed in the proof of \cref{pro: strip}. For the upper bound it is sufficient to test the functional on the trial state
		\bdm
			f_0(t) e^{- i \alO s} e^{i \frac{\varkappa (L - s)}{L}},
		\edm
		and recall the optimality of the phase $ \alO $ yielding \eqref{eq: optimal alk}.	
	\end{proof}}

	We conclude this section with a result which will be used later in the paper. In extreme synthesis it states that, if one has an a priori upper bound on $ \E_0[u,\rect] $, then it is possible to extract some useful information on the corresponding order parameter $ \psi(s,t) $ and show for instance that it is pointwise close to $ f_0(t) e^{- i \alpha_0 s} $ up to a smooth phase factor.

	\begin{pro}[Order parameter estimates]
		\label{pro: point est psi}
		\mbox{}	\\
		Let $ \psi $ be a solution of \eqref{eq: var eq strip} in the strip $ \rect $, with $ \ell \geq t_0 > 0 $ and $ L > 0 $, satisfying the boundary conditions in \eqref{eq: strip domn} and \eqref{eq: neumann conditions}, and let $ u $ be defined as in \eqref{eq: psi splitting strip N}. Let also $ \Ett_0[u; \rect] $ be the functional defined in \eqref{eq: en Ett0} in the strip $ \rect $ and assume that
		\beq
			\label{eq: apriori bound Et0}
			\Ett_0[u;\rect] \leq \ee \leq 1,
		\eeq
		for some $ \ee > 0 $. Then, if $  1 < b < \theo^{-1} $,
		\beq
			\label{eq: kinetic u}
			{\lf\| f_0^2 \nabla u \ri\|_{L^2(\rect)}^2 \leq C \ee + \OO(L \ell^{-\infty}).}
		\eeq
		{Moreover, for any $ 0 < T \leq \bar\ell $, there exists a finite constant $ C > 0 $, such that}
				\beq
					\label{eq: point est psi}
					\lf| \lf|\psi(s,t)\ri| - f_0(t) \ri| \leq \frac{{C\ee^{1/4} + \OO(L\ell^{-\infty})}}{\sqrt{\min_{[0,T]} f_0}},	\qquad		\mbox{for any } (s,t) \in R_{L,T},
				\eeq
				\beq
					\label{eq: gradient s boundary}
					\lf. \lf| \partial_s \int_0^\ell \diff t \: \lf| \psi \ri|^2 \ri| \ri|_{s = L} \leq {C \lf\{ \ee  + \sqrt{\ee L }+ \frac{1}{L} \lf[ \frac{\ee^{1/4} + \OO(L\ell^{-\infty})}{\sqrt{\min_{[0,T]} f_0}} + e^{-c(b)T} \ri]  + L \ri\}.}
				\eeq
	\end{pro}
	
	\begin{proof}
		{Applying elliptic regularity theory to the equation satisfied by $ \psi$ one can prove as  in \cref{lem: est gradient infty} that}
		\beq
			\label{eqp: grad est infty}
			\lf\| \nabla \psi \ri\|_{L^{\infty}(\rect)} \leq C,
		\eeq
		Furthermore, $ \psi $ satisfies the Agmon estimates \eqref{eq: agmon strip} and \eqref{eq: point agmon strip}.
		
		The key estimate is then the positivity of the cost function $ K_0 $ in $ \iellb $, as well as the lower bound given by \eqref{eq: Kol coercive}, i.e., 
		\beq
			\label{eqp: K0 lb}
			K_0(t) = f_0^2(t) + F_0(t) \geq {c_b f_0^4(t)},	\qquad		\mbox{for any } t \in \iellb.
		\eeq
		Indeed, by acting as in the proof of \cref{pro: strip}, one immediately gets
		\bml{
			\label{eqp: point est u 0}
			\Ett_0[u, \rect] \geq \int_{R(L,\bar\ell)} \diff s \diff t \:  K_0(t) \lf|\nabla u \ri|^2 {+ \frac{1}{2b} \int_{R(L, \ell)} \diff s \diff t \:   f_0^4(t) \lf(1 - \lf|u\ri|^2 \ri)^2} \\
			+ \OO(L \ell^{-\infty}).
		}
		Plugging in \eqref{eqp: K0 lb} above{, one obtains \eqref{eq: kinetic u} and  
		\beq
			\label{eqp: u close to 1}
			\int_{R(L, \ell)} \diff s \diff t \:   f_0^4(t) \lf(1 - \lf|u\ri|^2 \ri)^2 \leq C \ee + \OO(L\ell^{-\infty}).
		\eeq}
		
		We now address \eqref{eq: point est psi}: the starting point is {provided by \eqref{eqp: u close to 1}}, which essentially implies that $ |u| $ is approximately constant and equal to $ 1 $. The idea of proof goes back to \cite{BBH2} and it has been used several times since then (see, e.g., \cite{CRY}). Fix $  0 < T \leq \bar\ell $ and assume by contradiction that there was a point $ (s_0,t_0) \in R_{L,T}  $, where
		\beq
			\lf| \lf|\psi(s_0,t_0)\ri| - f_0(t_0) \ri| \geq \frac{c \: \bar{\ee}^{1/4}}{\sqrt{\min_{[0,T]} f_0}},
		\eeq
		for suitable $ c > 0 $ and $ \bar{\ee} \geq \ee $ to be adjusted later. Then, by \eqref{eqp: grad est infty} and the analogous bound for $ |f_0^{\prime}(t)| $ (see \eqref{eq: fkprime decay}), we deduce that there would exist also a ball of radius $ \varrho : = c'  \bar{\ee}^{1/4}/\sqrt{f_0(t_0)} $ centered in $ (s_0,t_0) $, with $ c' $ a constant proportional to $ c $ and depending only on the a priori bounds on the gradients, so that
		\bdm
			\lf| \lf|\psi(s,t)\ri| - f_0(t) \ri| \geq \frac{1}{2}\frac{c \:  \bar{\ee}^{1/4}}{\sqrt{\min_{[0,T]} f_0}},	\qquad		\mbox{in } \B_{\varrho}(s_0,t_0) \cap R_{L,T}.
		\edm
		Furthermore, we can also assume that at least one quarter of the ball is contained inside $ R_{L,T} $. Hence,
		\bmln{
			\int_{R(\ell,T)} \diff s \diff t \: f_0^4(t) \lf(1 - \lf|u\ri|^2 \ri)^2 = \int_{R(\ell,T)} \diff s \diff t \: \lf(f_0^2(t)  - \lf|\psi\ri|^2 \ri)^2 \\
			\geq \frac{\pi c^2}{16} \varrho^2  \bar{\ee}^{1/2} \min_{[0,T]} f_0 \geq C c^4 \bar{\ee}
		}
		where $ C $ is a positive constant independent of $ c $. Therefore, by taking $ c $ large enough and $ \bar{\ee} = \ee + \OO(L \ell^{-\infty}) $, we would get a contradiction with \eqref{eqp: u close to 1}, which completes the proof.

		In order to finally get \eqref{eq: gradient s boundary}, we can restrict the integration to the interval $ t \in [0, \bar\ell] $, since the rest is exponentially small. We then compute
		\beq
			\label{eqp: integration of the boundary term}
			  \partial_s \int_0^{\bar\ell} \diff t \: \lf| \psi(s,t) \ri|^2 \bigg|_{s = L} = \int_0^{L} \diff s  \int_0^{\bar\ell} \diff t \: \lf[ \chi(s) \partial_s^2 \lf| \psi(s,t) \ri|^2 + \chi^{\prime}(s) \partial_s \lf| \psi(s,t) \ri|^2 \ri],
		\eeq
		{for any smooth $ \chi  $ such that $ \chi(L) = 1 $. Taking $ \chi(s) = s/L $, we get}
		\bmln{
			\bigg| \int_0^{\delta} \diff s  \int_0^{\bar\ell} \diff t \: \chi^{\prime}(s) \partial_s \lf| \psi(s,t) \ri|^2 \bigg| \leq {\frac{1}{L} \int_0^{\bar{\ell}} \diff t \: \lf| \lf| \psi(\delta,t)\ri|^2 - \lf| \psi(0,t) \ri|^2 \ri|} \\
			{\leq \frac{C}{L} \lf[ \frac{\ee^{1/4} + \OO(L\ell^{-\infty})}{\sqrt{\min_{[0,T]} f_0}} + e^{-c(b)T} \ri],}
		}
		{by \eqref{eq: point est psi} and \eqref{eq: point agmon strip}}. For the first term on the r.h.s. of \eqref{eqp: integration of the boundary term}, we extend the integration in $ t $ to $ \ell $: using Neumann boundary conditions at $ t = \ell $, one gets
		\bdm
			\int_{\bar{\ell}}^{\ell} \diff t \: \partial_s^2 \lf| \psi  \ri|^2 = \int_{\bar{\ell}}^{\ell} \diff t \: \Delta \lf| \psi  \ri|^2 + \lf. \partial_t \lf| \psi \ri|^2 \ri|_{t = \bar\ell} = \int_{\bar{\ell}}^{\ell} \diff t \: \Delta \lf| \psi  \ri|^2 + \exl,
		\edm 
		by \eqref{eq: point agmon strip} and the pointwise bound on the gradient of $ \psi $. Hence, exploiting the Neumann conditions also at $ t = 0 $ and \eqref{eq: var eq strip}, we obtain
		\bmln{
			\int_{0}^{\bar\ell} \diff t \: \partial_s^2 \lf| \psi  \ri|^2 = \int_{0}^{\ell} \diff t \: \Delta \lf| \psi \ri|^2 + \exl = \int_{0}^{\ell} \diff t \: \lf[ 2 \Re \lf( \psi^* \Delta \psi \ri) + 2 \lf| \nabla \psi \ri|^2 \ri] + \exl \\
			= 2 \int_{0}^{\ell} \diff t \: \lf[ \lf| {\lf( \nabla - i t \ev_s \ri)} \psi \ri|^2  - \tx\frac{1}{b} \lf(1 - \lf| \psi \ri|^2 \ri) \lf| \psi \ri|^2 \ri] + \exl,
		}
		which yields, after integration in $ s $,
		\bml{
			\label{eqp: point est u 3}
			\int_0^{L} \diff s \int_0^{\ell} \diff t \: \chi(s) \partial_s^2 \lf| \psi(s,t) \ri|^2 \\
			= 2 \int_{\rect} \diff s \diff t \: \chi(s) \lf[ \lf| {\lf( \nabla - i t \ev_s \ri)} \psi \ri|^2  - \tx\frac{1}{b} \lf(1 - \lf| \psi \ri|^2 \ri) \lf| \psi \ri|^2\ri] + \OO(L \ell^{-\infty}).
		}
		In order to estimate the quantity on the r.h.s. of the expression above we observe that
		{\bmln{
			\int_{\rect} \diff s \diff t \: \chi(s) \lf[ \lf| {\lf( \nabla - i t \ev_s \ri)} \psi \ri|^2  - \tx\frac{1}{b} \lf(1 - \lf| \psi \ri|^2 \ri) \lf| \psi \ri|^2\ri] 	\\
			\leq \Ett_0[u] +\frac{1}{2b}\int_{\rect} \diff s \diff t \lf| f_0^4 - \lf| \psi \ri|^4 \ri|+ 2 \int_{\rect} \diff s \diff t \: \lf(\chi(s) - 1 \ri) (t + \alO) f_0^2 j_s[u]	\\
			 \leq C  \lf[ \ee + \sqrt{\ee L}+ L \ri],	
			}}
		{by \eqref{eqp: grad est infty} and \eqref{eqp: u close to 1}.} Altogether we get \eqref{eq: gradient s boundary}.
	\end{proof}

	\subsection{Properties of $  \ecornl $}
	\label{sec: existence}
	
	In the present and following Sections, we study the effective model introduced in \eqref{eq: ecorn} and specifically prove the existence of the limit as well as its boundedness. The key properties we are going to use in the proof of \cref{pro: ecorn} are:
	\begin{itemize}
		\item change of gauge to replace the magnetic potential $ \fv $ with $ \av \simeq - t \ev_s $ (\cref{lem: gauge choice});
		\item uniform boundedness of $ \ecorn(L,\ell) $ and existence of the limit $ L, \ell \to + \infty $ over suitable subsequences (\cref{pro: ecorn bound});
		\item further properties of the effective model and, in particular, its dependence on the boundary conditions (\cref{sec: further properties}).
	\end{itemize}
	
	We recall the corner energy defined in \eqref{eq: ecornl}
	and set 
	\beq
		\label{eq: gfv}
		\G_{\fv}[\psi] : = \glf_{1}\lf[\psi, \fv; \corner \ri];
	\eeq
	\beq
		\label{eq: glecorn doms}
		\glecorn(L,\ell) : =  \inf_{\psi \in \doms(\corner)} \G_{\fv}[\psi],
	\eeq
	where both the energy $ \glf_{1} $, the minimization domain and the corner region are introduced in \cref{sec: main results}. Any corresponding minimizer is denoted by $ \psilf $.	
{Before proceeding further, we introduce an auxiliary problem in $ \corner $, modified by the addition of analogous boundary terms as in \eqref{eq: stripft}. Such a problem will appear in the proof of the main theorem. We set
\beq
	\label{eq: gltfv en corner}
	\Gt_{\fv}[\psi] : = \glf_{1}\lf[\psi, \fv; \corner\ri]  - \int_0^{\ell}\diff t\, \frac{F_0(t)}{f_0^2(t)} j_t\lf[\psi(\rv(s,t)) e^{\frac{i}{2} st} \ri]  \bigg\vert_{s=-L}^{s=L};
\eeq
\beq
	\label{eq: en n fv corner}
	\glecornt(L,\ell) : =  \inf_{\psi \in \widetilde{\dom}_{\star}(\corner)} \Gt_{{\fv}}[\psi],
\eeq
where
\beq
	\label{eq: domns}
	\widetilde{\dom}_{\star}(\corner) : = \lf\{ \psi \in H^1(\corner) \: \big| \: 
	 \lf. \psi\ri|_{\bdi} = \psi_{\star} \ri\},
\eeq
and $ \psi_\star  $ is defined in \eqref{eq: psi star}.
Note that the boundary terms are slightly different than the ones considered in \eqref{eq: stripft}, which is due to the presence of an additional phase in $ \psi_{\star} $ compared to $ f_0 e^{-i \alO s} $, due to the different choice of the magnetic potential.}

	In the next \cref{lem: gauge choice}, we show that the vector potential $ \fv $ can be replaced with $ \av $, such that far from the corners
	\beq
		\label{eq: av simeq}
		\av(\rv(s,t)) \simeq - t \ev_s,
	\eeq
	in boundary coordinates $ (s,t) $. It is not difficult to figure out that there exists no smooth gauge transformation implementing the above change globally in $ \corner $, in particular close to the bisectrix. More precisely, we define the wedge-domain $ \corner \setminus \cornert $ (as depicted in \cref{fig: cornert}) through
	\beq
		\label{eq: cornert}
		\corner \setminus \cornert : = \lf\{ \rv \in \corner \: \big| \: \tx\frac{1}{2} \beta - \tx\frac{1}{\ell^3} \leq \vartheta \leq \tx\frac{1}{2} \beta + \tx\frac{1}{\ell^3} \ri\},
	\eeq
	in polar coordinates $ (\varrho,\vartheta) \in [0,\ell] \times [0,\beta] $. Hence, we obviously have
	\beq
		\label{eq: cornert area}
		\lf| \corner \setminus \cornert \ri| = \OO(\ell^{-1}).
	\eeq
	The potential $ \av $ is thus such that there exists a gauge phase $ \phi_{\fv} \in H^1(\corner) $ so that 
	\beq
		\label{eq: av gauge}
		\av = \fv + \nabla \phi_{\fv},
	\eeq
	and
	\beq
		\label{eq: av}
		\av = - t \ev_s,	\qquad		\mbox{in } \cornert.
	\eeq
	As already explained, because of the jump of $ \ev_s $ along the bisectrix of the sector, one can not set $ \av = - t \ev_s $ everywhere. However, we require that 
	\beq
		\label{eq: av bound}
		\av = \OO(\ell^4),	\qquad		\mbox{in } \corner,
	\eeq
	which is in fact a constraint only in $ \corner \setminus \cornert $. In next \cref{lem: gauge choice} we investigate the existence of such a phase $ \phi_{\fv} $. Note that 
	\beq
		\label{eq: curl av}
		\curl \: \av  = \curl ( - t \ev_s ) =  1, 	\qquad \mbox{in } {\corner},
	\eeq
	thanks to \eqref{eq: curl} and the gauge invariance of the $ \curl $.

\begin{figure}[!ht]
		\begin{center}
		\begin{tikzpicture}
			\draw (0,0) -- (1,2);
			\draw (1,2) -- (1.5,3);
			\draw (1.5,3) -- (2,4) -- (2.5,3);
			\draw (2.5,3) -- (3,2);
			\draw (3,2) -- (4,0);
			\draw (0,0) -- (1,-0.5);
			\draw (4,0) -- (3,-0.5);
			\draw (1,-0.5) -- (2,1.7);
			\draw (3,-0.5) -- (2,1.7);
			\draw[fill=black!20!white] (2,4) -- (1.6,0.82) -- (2,1.7) --  (2.4,0.82) -- (2,4);
			\node at (2,4.5) {{\footnotesize $V$}};
			\node at (-0.5, 0) {{\footnotesize $A$}};
			\node at (4.5, 0) {{\footnotesize $B$}};
			\node at (1.5, -0.5) {{\footnotesize $C$}};
			\node at (2.5, -0.5) {{\footnotesize $E$}};
			\node at (2,1) {{\footnotesize $D$}};
		\end{tikzpicture}
		\caption{The region $ \corner \setminus \cornert $ (shaded area).}\label{fig: cornert}
		\end{center}{}
	\end{figure}

	\begin{lem}[Gauge choice]
		\label{lem: gauge choice}
		\mbox{}	\\
		For any $ L, \ell > 0 $ satisfying \eqref{eq: L ell condition} and so that
		\beq
			\label{eq: integer condition}
			\frac{\lf|\corner \ri|}{2\pi \lf| \partial \corner \ri|} \in \Z,
		\eeq
		there exists a vector potential $ \av  \in L^{\infty}(\corner) $ and a phase $ \phi_{\fv} \in H^1(\corner) $ satisfying \eqref{eq: av gauge}, \eqref{eq: av} and \eqref{eq: av bound}, such that
		\beqn
			\label{eq: gauge choice}
			\inf_{\psi \in \doms(\corner)} \G_{\mathbf{F}}[\psi]  &=& \inf_{\psi \in \mathscr{D}_{\mathrm{D}}(\corner)} \G_{\mathbf{a}}[\psi],	\\
			\label{eq: gauge choice N}
			\inf_{\psi \in \widetilde{\dom}_{\star}(\corner)} \Gt_{{\fv}}[\psi] &=& \inf_{\psi \in \mathscr{D}_{\mathrm{N}}(\corner)} \Gt_{\mathbf{a}}[\psi],
		\eeqn
		where
		\beqn
			\label{eq: domd}
			\domd(\corner) & : = & \lf\{ \psi \in H^1(\corner) \: \big| \: \lf. \psi\ri|_{\bdbd \cup \bdi} = \psi_{0} \ri\},	\\
	 		\domn(\corner) & : = & \lf\{ \psi \in H^1(\corner) \: \big| \: 
	 \lf. \psi\ri|_{\bdi} = \psi_{0} \ri\},
		\eeqn
		\beq		
	 		\label{eq: psi0}
	 		 {\psi_0(s,t) : = f_0(t) e^{- i \al_0 s}.}
		\eeq
	\end{lem}
	
	\begin{remark}[Constraint on $ L, \ell $]
		\label{rem: integer condition}
		\mbox{}	\\
		The condition \eqref{eq: integer condition} {reads $ L - \frac{2 \ell}{\tan \beta} = c(\beta,L \ell) \mathbb{Z} $ where $ c(\beta,\ell) $ is uniformly bounded as $ {L},\ell \to + \infty ${. More precisely
		\bdm
			c(\beta,L,\ell) \xrightarrow[L \to +\infty]{} 0,	\qquad		c(\beta,L,\ell) \xrightarrow[\ell \to + \infty]{} c(\beta),
		\edm
		uniformly in the other parameters}. Hence, given generic $ \ell, L \to + \infty $, it suffices to replace $ L $ with $ L + \OO(1) $ to enforce \eqref{eq: integer condition}.} 
	\end{remark}
	
	\begin{proof}
		The two different minimization problem can be treated in the same way. It suffices to prove the existence of the gauge phase  $ \phi_{\fv} $ and, in order to recover \eqref{eq: av}, we set
		\beq
			\label{eq: phif}
			\phi_{\fv}(s,t) : = - \tx\frac{1}{2}  s t,	\qquad		\mbox{in } \cornert.
		\eeq
		Note that such a phase is actually the same gauge phase used in \cite[Appendix F]{FH1} or in \cite[Eq. (4.7)]{CR2} with vector potential set equal to $ \fv $ and recovers the additional phase factor in the boundary terms in \eqref{eq: gltfv en corner}.
		Such a phase is in $ H^1(\cornert) $ but its definition can not be extended to the whole $ \corner $. We can however continue $ \phi_{\fv} $ arbitrarily in $ \corner \setminus \cornert $, just requiring continuity through the boundary of the region. There are infinitely many ways of doing that and at least one such that the bound \eqref{eq: av bound} is satisfied (e.g., a linear interpolation).
		
		In order to complete the proof, we need to show that $ \psi e^{i \phi_\fv} $ is still a single-valued function. It is not difficult to see \cite[Appendix F]{FH1} that, to this purpose, one has to correct \eqref{eq: phif} by $ \varpi s $, where
		\bmln{
			\varpi = \frac{1}{2\pi \lf| \partial \corner \ri|} \int_{\corner} \diff \rv \: \curl \fv - \lf\lfloor \frac{1}{2\pi \lf| \partial \corner \ri|} \int_{\corner} \diff \rv \: \curl \fv \ri\rfloor \\
			= \frac{\lf| \corner \ri|}{2\pi \lf| \partial \corner \ri|}  - \lf\lfloor \frac{\lf| \corner \ri|}{2\pi \lf| \partial \corner \ri|}  \ri\rfloor,
		}
		where $ \lf\lfloor \: \cdot \: \ri\rfloor $ stands for the integer part. However, by the assumption \eqref{eq: integer condition}, $ \varpi = 0 $ and no additional phase is needed.
	\end{proof}

	From now on we are to going to study only the minimization on the r.h.s. of \eqref{eq: gauge choice} in \cref{lem: gauge choice}, with the vector potential $ \av $ satisfying \eqref{eq: av gauge}, \eqref{eq: av} and \eqref{eq: av bound}. {In order to guarantee that \eqref{eq: integer condition} is satisfied, however, we restrict the analysis to suitable monotone sequences $ \lf\{ \ell_n \ri\}_{n \in \N} $, $ \lf\{ L_n \ri\}_{n \in \N} $, such that}
	{\beq
		\label{eq: subsequence}
		\ell_n, L_n \xrightarrow[n \to +\infty]{} + \infty,	\qquad		\eqref{eq: L ell condition} \mbox{ and } \eqref{eq: integer condition} \mbox{ hold},
	\eeq
	and consider $ \ecorn(L_n,\ell_n) $ in the following}. More precisely, we are going to study the quantity
	\beq
		\label{eq: en d corner}
		\glecorn(L_n,\ell_n) : =  \inf_{\psi \in \domd(\Gamma_{\beta}(L_n,\ell_n))} \G_{\av}[\psi; \Gamma_{\beta}(L_n,\ell_n))].
	\eeq
	Any minimizer of \eqref{eq: en d corner} is denoted by $ \psil $, i.e.,
	\beq
		\label{eq: psil}
		\glecorn(L_n,\ell_n) : =  \G_{\av}\lf[\psil; \Gamma_{\beta}(L_n,\ell_n))\ri].
	\eeq 
	The existence of such a minimizer follows by standard arguments as well as the fact that any $ \psil $ solves the variational equation
	\beq
		\label{eq: var eq corner}
		\begin{cases}
			- \lf( \nabla + i \av \ri)^2 \psil = \tx\frac{1}{b}(1-|\psil|^2)\psil,		&		\mbox{in } \corner,	\\
			\psil = f_0(t(\rv)) e^{-i \alpha_0 s(\rv)},														&		\mbox{on } \bdbd \cup \bdi,	\\
			\nv \cdot \lf(\nabla + i \av \ri) \psil = 0,								&		\mbox{on } \bdo.
		\end{cases}
\eeq
Note that the equation above coincides with \eqref{eq: var eq strip} far from the vertex, where boundary coordinates are well posed and $ \av = - t \ev_s $. We can thus apply to $ \psil $  the results in \cref{lem: est gradient infty}, \cref{lem: agmon 1}, \cref{lem: agmon 2} and \cref{lem: exp estimate}.

	  \subsection{Boundedness and existence of the limit}
	  \label{sec: boundedness}

We start by proving the uniform boundedness of $ \ecornl $ as a function of $ \ell, L $.

	\begin{pro}[Boundedness of $ \ecornln $]
		\label{pro: ecorn bound}
		\mbox{}	\\
		{Let $ \lf\{ \ell_n \ri\}_{n \in \N}, \lf\{ L_n \ri\}_{n \in \N} $ satisfy \eqref{eq: subsequence}.} Then, for any $ 1 < b < \Theta_0^{-1} $, there exists a finite constant $ C < + \infty $ independent of $ n $, such that
		\beq
			\label{eq: ecorn bound}
			\lf| \ecornln \ri| \leq C.
		\eeq
	\end{pro}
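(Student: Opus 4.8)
The plan is to establish \eqref{eq: ecorn bound} by producing matched upper and lower bounds for $\ecornl$. The upper bound is the easy direction: using \cref{lem: monotonicity 1}, $\ecornl$ is monotone decreasing in $L$, so it suffices to bound it from above for one convenient small value of $L$ (say $L = \ell/\tan(\beta/2)$), where $\corner$ is just a bounded region of area $\OO(\ell^2)$; testing $\G[\psi;\corner]$ on the trial state $\psi_0(\rv) = f_0(t(\rv))e^{-i\alpha_0 s(\rv)}$ extended suitably into the region $\corner \setminus \cornert$ near the bisectrix (using the gauge $\av$ from \cref{lem: gauge choice}), one gets $\glecorn(L,\ell) \le 2L\,\eoneo(\ell) + \OO(1)$, hence $\ecornl \le \OO(1)$ uniformly; the $\OO(1)$ remainder comes from the small region $\corner\setminus\cornert$ of area $\OO(\ell^{-1})$ together with the exponential decay \eqref{eq: agmon strip}, and this persists as $L,\ell\to\infty$.

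The substantive part is the lower bound $\ecornl \ge -C$, i.e.\ $\glecorn(L,\ell) \ge 2L\,\eoneo(\ell) - C$. First I would localize: by the Agmon-type decay \eqref{eq: agmon strip}/\eqref{eq: point agmon strip} applied to $\psil$ (which solves \eqref{eq: var eq corner}, coinciding with \eqref{eq: var eq strip} far from the vertex), the mass and kinetic energy of $\psil$ outside an $\OO(1)$-neighborhood of $\bdo$ are $\OO(L\ell^{-\infty})$, so the inner region contributes nothing to leading order and can be discarded at the price of a negligible error. Next, near $\bdo$ away from the vertex, boundary coordinates $(s,t)$ are well-defined and $\av = -t\ev_s$, so the energy density matches that of the strip functional $\G[\psi;\rect]$ from \cref{sec: strip}. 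The idea is to cover the bulk of the superconducting layer along the two sides $\overline{AV}$ and $\overline{VB}$ by a strip of total tangential length $2L - \OO(1)$, apply the strip lower bound of \cref{pro: strip} (or rather the reduced-energy positivity machinery behind it: the splitting $\psi = f_0 e^{-i\alpha_0 s} u$, the two integrations by parts in \eqref{eq: int by parts}, and the pointwise positivity \eqref{eq: kol positive} of $K_0 = f_0^2 + F_0$), giving $(2L - \OO(1))\eoneo(\ell) + (\text{nonnegative})$, and then absorb the $\OO(1)$ vertex region — of tangential extent $\OO(1)$ and normal extent $\OO(\ell)$, hence area $\OO(\ell)$ — using only the crude lower bound $\G \ge -\tfrac{1}{2b}\|\psi\|_4^4 \ge -C\,(\text{area})$, which is too lossy as stated.

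The main obstacle, therefore, is handling the vertex region without losing an $\OO(\ell)$ or $\OO(|\log\eps|)$-type factor: one cannot simply discard a neighborhood of $V$ of normal depth $\ell$. The resolution should be that, by the exponential decay of $\psil$ in the distance from $\bdo$ — which holds uniformly up to and including the vertex region because the decay estimates (\cref{lem: agmon 1}, \cref{lem: exp estimate}) are proved directly for solutions of \eqref{eq: var eq corner} in $\domd(\corner)$ without reference to boundary coordinates — the effective normal extent of $\psil$ near the vertex is $\OO(1)$, not $\OO(\ell)$; so the vertex contribution is genuinely $\OO(1)$ and bounded below by a constant independent of $L$ and $\ell$. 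A clean way to organize this is to split $\corner$ along the lines $\{\dist(\rv,\bdo) = C_0\}$ for a large fixed $C_0$: the region $\{\dist > C_0\}$ is controlled by Agmon and contributes $\OO(L\ell^{-\infty})$; the region $\{\dist \le C_0\}$ decomposes into two rectangles $[-L+\OO(1),\,L-\OO(1)]\times[0,C_0]$ where the strip lower bound applies, plus a bounded "corner cap" near $V$ of area $\OO(C_0^2) = \OO(1)$ on which the trivial bound $\G \ge -C$ is already good enough. Matching the boundary terms at the interfaces between these pieces — which produce currents $j_s, j_t$ that must be controlled via Cauchy–Schwarz and the Agmon bounds, exactly as in the boundary-term estimates within the proof of \cref{pro: strip} and \cref{pro: point est psi} — is the only remaining delicate bookkeeping, and it yields the claimed uniform bound $|\ecornl| \le C$.
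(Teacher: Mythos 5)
Your upper bound (monotonicity in $L$ plus a trial state on a finite region) matches the paper's, and your toolkit for the lower bound — the splitting $\psi=f_0e^{-i\alpha_0 s}u$, the positivity of the cost function $K_0$, and the Agmon decay — is the right one. The gap is in the geometry of your decomposition. Cutting along the level set $\lf\{\dist(\rv,\bdo)=C_0\ri\}$ for a \emph{fixed} $C_0$ cannot give a bound uniform in $L$: the discarded region $\{\dist>C_0\}$ contributes an error of order $Le^{-c(b)C_0}$, not $\OO(L\ell^{-\infty})$ — the Agmon estimate \eqref{eq: agmon 1} gives $\int_{\{t>C_0\}}|\psil|^2=\OO\lf(Le^{-c(b)C_0}\ri)$, and the $\ell^{-\infty}$ smallness you invoke only appears at depth of order $\ell$. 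Likewise, replacing $\eoneo(\ell)$ by $\eoneo(C_0)$ over tangential length $2L$ costs $\OO\lf(Le^{-cC_0}\ri)$, and the interface terms at $t=C_0$ produced by the integration by parts \eqref{eq: int by parts} do not vanish, since $F_0(C_0)\neq0$ (the proof of \cref{pro: strip} uses $F_0(\ell)=0$ in an essential way). Each of these errors is of the form $\epsilon(C_0)\cdot L$ with $\epsilon(C_0)>0$, hence unbounded as $L\to\infty$. Taking $C_0\to\infty$ with $L$ does not rescue the argument: the two normal strips of depth $C_0$ necessarily overlap over a tangential length proportional to $C_0$, so the ``corner cap'' has mass $\OO(C_0)$ even after using the pointwise decay, and the crude bound there is $-\OO(C_0)$, again unbounded. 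More fundamentally, a vertex region of tangential extent $\OO(1)$ and normal extent $\OO(\ell)$ does not exist as the complement of two disjoint full-depth rectangles, because the normal fibers emanating from the two sides cross each other for $|s|\lesssim\ell$.

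The paper resolves exactly this point by cutting in the \emph{angular} rather than the normal direction: an IMS partition $\chi^2+\eta^2=1$ with $\eta$ supported in a cone around the bisectrix. The two connected components $T_\pm$ of $\supp(\chi)$ retain the full normal depth $[0,\ell]$, so the splitting $\chi\psil=f_0e^{-i\alpha_0 s}u_\pm$ and the positivity \eqref{eq: kol positive} of $K_0$ apply with no interface at an intermediate depth, and all boundary terms in \eqref{eq: int by parts Tpm} vanish because $u_\pm=0$ on $\partial\supp(\chi)$ and $F_0$ vanishes at $t=0$ and $t=\ell$. The bisectrix region and the localization errors $\int|\nabla\chi|^2|\psil|^2$ are then controlled not by smallness of area but by the fact that the distance from that region to $\bdo$ grows linearly in the distance from the vertex, so the pointwise decay \eqref{eq: decay corner} yields the convergent integral $\int_0^{\infty}\diff s\,|s|\,e^{-c(b)|s|\tan(\beta/4)}=\OO(1)$. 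To repair your proof you should replace the level-set cut $\{\dist(\cdot,\bdo)=C_0\}$ by this angular partition; the rest of your outline then goes through.
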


	\begin{proof}
		{We first discuss the boundedness from below, which is the most difficult property to prove, and show that}
		\beq
			\label{eq: lb boundedness}
			\ecorn(L_n,\ell_n) \geq - C,
		\eeq
		for some finite $ 0 < C < +\infty $.
		The key tool is a suitable partition of unity, which isolates the region where we want to retain the energy and allow us to discard the rest. We thus consider two smooth positive functions $ \chi $ and $ \eta $, such that $\chi^2 + \eta^2=1$ and whose supports are described, e.g., in \cref{fig: partition}: we assume that $ \eta \equiv 1 $ inside  the shaded area, while $ \chi \equiv 1 $ in the white area.
		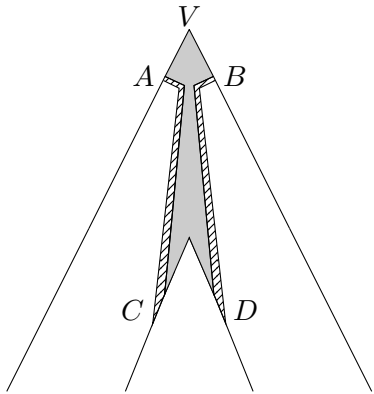
\begin{figure}[ht!]
			\begin{center}
			\begin{tikzpicture}[scale=1.2]
				\draw (1.735, 3.48) -- (0.9,1.8);
				\draw (2.265,3.48) -- (3.1,1.8);
				\draw (0.9,1.8) -- (0.5, 1);
				\draw (3.1,1.8) -- (4,0);
				\draw (1.948,3.38) -- (1.8,1.803);
				\draw (1.8,1.803) -- (1.73, 1.07);
				\draw (2.16,1.33) -- (2,1.7);
				\draw  (1.84, 1.33) -- (1.7, 1);
				\draw (0.5, 1) -- (0,0);
				\draw  (2.16,1.33) -- (2.3, 1);
				\draw (2.3, 1) -- (2.7,0);
				\draw(1.7,1) -- (1.3,0);
				\node at (2, 4.15) {$V$};
				\node at (1.5, 3.48) {$A$};
				\node at (2.5,3.48) {$B$};
				\node at (1.38, 0.9) {$C$};
				\node at (2.62, 0.9) {$D$};
				\draw [fill = black!20!white] (2,4) -- (1.735, 3.48) -- (1.948,3.38) -- (1.8, 1.803)  -- (1.8,1.803) -- (1.73, 1.07) -- (2,1.7) -- (2.27, 1.07) -- (2.052,3.38) -- (2.265,3.48) -- (2,4);
				\draw [pattern=north east lines] (1.735, 3.48) -- (1.948, 3.38) --  (1.8,1.803)  -- (1.73, 1.07) -- (1.6, 0.75)  -- (1.889, 3.35) -- (1.718,3.435);
				\draw [pattern=north east lines] (2.265,3.48) -- (2.052,3.38) -- (2.2,1.803) -- (2.27, 1.07) -- (2.4,0.75) -- (2.111, 3.35) -- (2.282, 3.435);
			\end{tikzpicture}
			\caption{The partition of unity $ \chi $, $\eta $.}
			\label{fig: partition}
			\end{center}
		\end{figure}
		The dashed regions is where the supports of the two functions overlap. We choose the angle $ \widehat{CVD} $ equal to $ \beta/2 $ for concreteness but any angle of order $ 1 $ would work. The distance of the points $ A $ and $ B $ from the vertex $ V $ is also taken of order $ 1 $. Furthermore, the width of the transition regions can also be taken in such a way that
		\beq
			\label{eq: bounds partition}
			|\nabla\chi| = \OO(1),	\qquad |\nabla\eta| = \OO(1).
		\eeq
		
		{The rationale behind the choice of the partition of unity is that the energy contribution coming from the support of $ \chi $ reconstructs the leading term $ 2 L \eoneo(\ell) $, up to an $ \OO(1) $ error, while the rest provides a correction of order $ \OO(1) $. Therefore, the support of $ \chi $ must contain the outer boundary $ \bdo $ up to $ \OO(1) $ regions and the magnetic potential must be equal to $ - t \ev_s $ there. Hence, the area close to the bisectrix is included in the support $ \eta $, because there the magnetic potential is unknown.}

		{The key ingredient of the proof is then the} IMS formula \cite[Thm. 3.2]{CFKS}{, which} yields
		\bml{
			\label{eq: ims corner}
			\glecorn(L_n,\ell_n) = \G \lf[\chi\psil\ri] + \G \lf[\eta\psil\ri]
			 - \int_{ \corner }\diff\rv\, |\nabla\chi|^2|\psil|^2	
			 -\int_{ \corner   } \diff \rv \: |\nabla\eta|^2|\psil|^2		\\
			 {= \G \lf[\chi\psil\ri] + \G \lf[\eta\psil\ri] + \OO(1),}
		}
		{where we have exploited the decay \eqref{eq: decay corner} to bound the contributions on the supports of $ \nabla \chi, \nabla \eta $.} 		
		We now claim that there exists a finite constant independent of $ n  $ so that
		\beqn
			\G_{\av}\lf[\eta\psil\ri]  & \geq &  -C,		\\
			\G_{\av}\lf[\chi\psil\ri] & = & 2 L_n \eoneo + \OO(1), \label{eq: ecorn main energy}
		\eeqn
		which combined with \eqref{eq: ims corner} yields \eqref{eq: ecorn bound}. 
		
		Let us first consider the first estimate above: dropping from the energy all the positive terms, we get
		\beq
			\G_{\av}\lf[\eta\psil\ri] \geq - C \int_{\supp(\eta)} \diff \rv \: \lf| \psil(\rv) \ri|^2 \geq - C,
		\eeq
		by the decay of $ \psil $ as above.
		To complete the proof it remains only to deal with \eqref{eq: ecorn main energy}: since $ \supp(\chi) $ is actually composed of two disconnected sets, denoted by $ T_- $ (on the right of \cref{fig: partition}) and $ T_+ $, we can use boundary coordinates in both regions $ T_{\pm} $. We can then apply the splitting technique described in the proof of \cref{pro: strip} and set
		\beq
			\label{eq: splitting boundedness}
			\chi(s,t) \psil(\rv(s,t)) = :
			\begin{cases}
				f_0(t) e^{-i \alpha_0 s} u_-(s,t),	&	\mbox{in } T_-,	\\
				f_0(t) e^{-i \alpha_0 s} u_+(s,t),	&	\mbox{in } T_+.
			\end{cases}
		\eeq
		The same computation which leads to \eqref{eq: en splitting} yields now (recall \eqref{eq: en E0})
		\beq
			\G_{\av}\lf[\chi\psil\ri] = - \frac{1}{b} \int_{T_- \cup T_+} \diff s \diff t \: f_0^4(t) + \E_0[u_-;T_-] + \E_0[u_+;T_+].
		\eeq
		Finally, as long as $ 1 < b < \theo^{-1} $, one can prove that the energies $ \E_0[u_-;T_-] $ and $ \E_0[u_+;T_+] $ are both positive, exactly as in \eqref{eq: lb E0}, leading to
		\beq
			\label{eq: int Tpm}
			\G_{\av}\lf[\chi\psil\ri] \geq - \frac{1}{b} \int_{T_- } \diff s \diff t \: f_0^4(t) - \frac{1}{b} \int_{T_+ } \diff s \diff t \: f_0^4(t).
		\eeq
		The last step is the estimate of the two integrals on the r.h.s. of \eqref{eq: int Tpm} above: the identity \eqref{eq: eonek identity} and the exponential decay \eqref{eq: fk decay 1} (both with $ k = 0 $)  imply
  		\bdm
  			-\frac{1}{b} \int_{T_\pm}\diff s\diff t \: f_0^4(t) \geq L_n \eoneo - C \int_0^{\ell}\diff t \: t  e^{-2(t+\alpha_0)^2} + \OO(1) \geq L_n \eoneo + \OO(1),
  		\edm
  		which together with \eqref{eq: int Tpm} completes the lower bound proof.
  		
  		{The opposite side of the inequality \eqref{eq: lb boundedness} can be proven by simply using $ \chi f_0(t) e^{-i \alpha_0 s} $ as a trial state (more precisely, setting $ u_{\pm} = 1 $ in \eqref{eq: splitting boundedness}). We omit the calculations, since they are totally analogous to the ones above.}
	\end{proof}

	We are now in position to prove the first important result of this section.
	
	\begin{proof}[Proof of \cref{pro: ecorn}]
		{The first important observation is that $ \ecorn(L,\ell) $ is a monotone non-increasing function of $ L $ and as such it admits a limit. Indeed, for any $ L_a < L_b $, one can easily construct a trial state for the energy in $ \corn(L_b, \ell) $ by extending the minimizer in $ \corn(L_a, \ell) $ and setting the trial state equal to $ f_0(t) e^{-i \alpha_0 s} $ where the minimizer is not defined. The outcome of the trivial computation is the inequality $ \ecorn(L_a, \ell) \leq \ecorn(L_b,\ell) $.}
		
		{Let $ \lf\{ \ell_n \ri\}_{n \in \N}, \lf\{ L_n \ri\}_{n \in \N} $ be two monotone subsequences such that $ \lim_{n \to +\infty} \ell_n = \lim_{n \to +\infty} L_n = + \infty $ and \eqref{eq: subsequence} is satisfied (see \cref{rem: integer condition}). By the monotonicity in $ L $ of the energy and its boundedness, we know that for any $ \eps > 0 $ and any given $ \bar{n} \in \N $, there exists $ \bar{n}_2(\bar{n}) \in \N $, such that}
		\beq
			\label{eq: cauchy 1}
			{\lf| \ecorn(L_n, \ell_{\bar{n}}) - \ecorn(L_m, \ell_{\bar{n}}) \ri| < \tx\frac{1}{3} \eps,}
		\eeq
		{for any $ n,m > \bar{n}_2 $.}
		
		{Furthermore, by the exponential decay of the minimizer and its derivatives \eqref{eq: agmon 2}, one gets}
		\bdm
			{\lf| \ecorn(L_n, \ell_n) - \ecorn(L_n, \ell_m) \ri| \leq C L_n e^{- c \min\lf\{ \ell_n, \ell_m \ri\}}.}
		\edm
		{Hence, if the sequences satisfy the condition}
		\beq
			\label{eq: Ln condition}
			{L_n \leq C \ell_n^a,		\qquad		\mbox{for some } a > 0,}
		\eeq
		{we can conclude that there exists $ \bar{n}_1 \in \N $, such that}
		\beq
			\label{eq: cauchy 2}
			{\lf| \ecorn(L_n, \ell_n) - \ecorn(L_n, \ell_m) \ri| < \tx\frac{1}{3} \eps}
		\eeq
		{for $ n,m > \bar{n}_1 $.}
		
		{In conclusion, we can estimate}
		\bml{
			{\lf|  \ecorn(L_n, \ell_n) - \ecorn(L_m, \ell_m) \ri| \leq \lf|  \ecorn(L_n, \ell_n) - \ecorn(L_n, \ell_{\bar{n}_1+1}) \ri|}	\\
			{ +  \lf| \ecorn(L_n, \ell_{\bar{n}_1+1}) - \ecorn(L_m, \ell_{\bar{n}_1+1})  \ri| } \\
			{+  \lf| \ecorn(L_m, \ell_{\bar{n}_1+1}) - \ecorn(L_m, \ell_m)  \ri| < \eps}
		}
		{for any $ n, m > \max\lf\{ \bar{n}_1, \bar{n}_2(\bar{n_1}+1) \ri\} $, so that the sequence is Cauchy and the limit exists. The independence of the chosen subsequences relies on the uniqueness of the limit, while the uniform boundedness has been proven in \cref{pro: ecorn bound}.}
	\end{proof}

	\subsection{{Neumann and Dirichlet problems} in $ \corner $}
	\label{sec: further properties}

{We are going to study the Neumann problem \eqref{eq: en n fv corner} on the monotone subsequences $ \lf\{ \ell_n \ri\}_{n \in \N} $, $ \lf\{ L_n \ri\}_{n \in \N} $ introduced in the previous \cref{sec: existence}, i.e., such that \eqref{eq: subsequence} holds. Our main goal here is to} show {that, as in the case of the strip,} the Dirichlet and Neumann energies coincide asymptotically as  $ n \to + \infty $.  This is going to play a key role in the proof of our main result{, since it implies the identity
\bml{
			\label{eq: ecorn alt}
			\ecorn  = \lim_{n \to +\infty} \lf( - 2 L \eoneo(\ell_n) + \glecorn(L_n,\ell_n) \ri) \\
			= \lim_{n \to +\infty}  \lf( - 2 L_n \eoneo(\ell_n) + \glecornt(L_n,\ell_n) \ri).
		}}
	
	{Before proving the result we need however a technical lemma on a variational problem with twisted boundary conditions, whose proof is postponed at the end of the section. Let then $ \varkappa \in [0,2\pi) $ as in \cref{pro: twisted} and set
	\beq
			\label{eq: edx}
			E_{\beta,\varkappa}(L,\ell)  : = \inf_{\psi \in \domdx(\corner)} \Gt_{\av}[\psi],	
	\eeq
	\beq
		\label{eq: domdx}
			\domdx(\corner) : = \lf\{ \psi \in H^1(\corner) \: \Big| \:
	 \lf. \psi\ri|_{\bdi \cup \{ s = - L \}} = \psi_{0},  \lf. \psi \ri|_{s = L} = \psi_0 e^{i \varkappa} \ri\}.
	\eeq}

	\begin{lem}
		\label{lem: twisted D energies}
		\mbox{}\\
		{Let $ \lf\{ \ell_n \ri\}_{n \in \N} $, $ \lf\{ L_n \ri\}_{n \in \N} $ be two monotone subsequences such that \eqref{eq: subsequence} holds. Then,
		\beq
			\label{eq: twisted D energies}
			  \glecorn(L_n,\ell_n) = E_{\beta,\varkappa}(L_n,\ell_n) +  o_n(1).
		\eeq}
	\end{lem}

	\begin{pro}[{Dirichlet and Neumann energies}]
		\label{pro: DN corner}
		\mbox{}\\
		{Let $ \lf\{ \ell_n \ri\}_{n \in \N} $, $ \lf\{ L_n \ri\}_{n \in \N} $ be two monotone subsequences such that \eqref{eq: subsequence} holds. Then,} for any $ 1 < b < \Theta_0^{-1} $,
		\beq
			\label{eq: DN corner}
			\glecorn(L_n,\ell_n) - \glecornt(L_n,\ell_n)  = {o_{n}(1)}.
		\eeq
	\end{pro}

	\begin{proof}
		{In view of the vanishing of the boundary terms in the functional $ \Gt_{\av}[\psi] $ on any $ \psi $ belonging to $ \domd(\cornern) $ (see also the proof of \cref{pro: strip}) and the trivial inclusion $ \domd(\cornern) \subset  \domn(\cornern) $, we deduce the inequality
		\beq
			\label{eq: ND up bd}
			\glecornt(L_n,\ell_n) \leq \glecorn(L_n,\ell_n).
		\eeq}
		Hence, we only have to prove the opposite inequality, i.e., 
		\beq
			E_\beta(L_n,\ell_n) \leq \widetilde{E}_\beta(L_n,\ell_n)  + o_{n}(1).
		\eeq
		
		{Preliminarily, we observe that the quantity $ \ecornt(L_n,\ell_n) $ admits a limit, which is independent of the chosen sequences, exactly as $ \ecorn(L_n,\ell_n) $. The argument to prove it is the same as in the proof of \cref{pro: ecorn}; therefore we spell in detail only the estimates showing that $ \ecornt(L,\ell) $ is monotone in $ L $ for fixed $ \ell $, up to an exponentially small error term: let $ L_a < L_b $, then we have
		\beq
			\label{eq: glecorn bd below}
			\glecornt(L_b,\ell) = \Gt_{\av}\lf[ \psit_{L_b,\ell}; \corn(L_a,\ell) \ri] + \Gt_{\av}\lf[ \psit_{L_b,\ell};  R_{\pm} \ri] 			\geq \glecornt(L_a,\ell) + \Gt_{\av}\lf[ \psit_{L_b,\ell}; R_{\pm}  \ri]{,}
		\eeq
		where $ R_\pm $ are the rectangular regions $ [L_a, L_b] \times [0, \ell] $ and $ [-L_b, -L_a] \times [ 0, \ell] $, respectively. Applying \cref{pro: strip}, we get}
		\bdm
			\Gt_{\av}\lf[ \psit_{L_b,\ell}; R_{\pm}(L_b-L_a,\ell) \ri] \geq (L_b - L_a) \eoneo(\ell) + \OO\lf((L_b-L_a) \ell^{-\infty}\ri),
		\edm
		which, plugged into \eqref{eq: glecorn bd below}, yields
		\beq
			\ecornt(L_b,\ell) \geq \ecornt(L_a,\ell) + \OO\lf((L_b-L_a) \ell^{-\infty}\ri).
		\eeq

		{Let $ \lf\{ \delta_n \ri\}_{n \in \N} $ be such that $ 0 \leq \delta_n \leq 1 $ and the pair of sequences $ \lf\{ L_n - \delta_n \ri\}_{n \in \N} $, $ \lf\{ \ell'_n \ri\}_{n \in \N} $ satisfies the same conditions \eqref{eq: subsequence} as $ \lf\{ L_n \ri\} $, $ \lf\{ \ell_n \ri\} $. Note that we have also $ \ell'_n = \ell_n + \OO(\delta_n) $, because of \eqref{eq: integer condition} (see also \cref{rem: integer condition}). We denote by $ \psit_n $ and $ \psit_{n,\delta_n} $ for short any energy minimizer in $ \cornern $ and $ \corn(L_n-\delta_n,\ell'_n) $, respectively. The splitting technique used to derive \eqref{eq: en splitting}, yields (recall \eqref{eq: psi splitting strip}, \eqref{eq: en Ett0} and \eqref{eq: splitting boundedness})
		\bml{			
			\label{eq: larger energy lb}
			\glecornt(L_n, \ell_n) = \Gt_{\av} \big[\psit_{n}; \corn(L_n-\delta_n,\ell_n^{\prime}) \big] + 2 \eoneo(\ell^{\prime}_n) \delta_n  
			+ \Ett_0\lf[{u_-}; R_-\ri] + \Ett_0\lf[{u_+}; R_+\ri]+ \OO(\ell_n^{-\infty})	\\
			\geq \glecornt(L_n-\delta_n,\ell^{\prime}_n) + 2 \eoneo(\ell^{\prime}_n) \delta_n + \Ett_0\lf[{u_-}; R_-\ri] + \Ett_0\lf[{u_+}; R_+\ri] + \OO(\ell_n^{-\infty}),
		}
		where $ R_- = [-L_n,-L_n+\delta_n]\times[0,\ell_n^{\prime}]  $ and $ R_+ = [L_n-\delta_n,L_n]\times[0,\ell_n^{\prime}]  $ and $ u_{\pm} $ are defined as in \eqref{eq: psi splitting strip}. Hence, we get that (recall that $ \Ett_0[u] \geq \exl $ if $ 1< b < \theo^{-1} $)
		\beq
			\label{eq: est reduced en}
			 \Ett_0\lf[{u_{\pm}}; R_{\pm} \ri]  \leq \ecornt(L_n,\ell_n) - \ecornt(L_n - \delta_n,\ell'_n) + \OO(\ell_n^{-\infty})
			 = : \ee_n = o_n(1),
		\eeq
		for any $ \delta_n \leq 1 $, since the two quantities $ \ecornt(L_n,\ell_n) $, $ \ecornt(L_n - \delta_n,\ell'_n) $ admit the same limit, as proven above.}
		
		Now, we claim that \eqref{eq: est reduced en} implies that, up to a phase, $ \psit_{n} $ is pointwise close to $ f_0(t) e^{-i\alpha_0 s} $ in the region $ R_- \cup R_+ $ and, in particular, along the boundary $ \bdbd $. Indeed, applying \cref{pro: point est psi} to the functionals $  \Ett_0\lf[u_{\pm}; R_{\pm}\ri] $ (with $ \delta_n $ in place of $ L $), we get that
		{\beq
			\label{eqp: kinetic u}
			\int_{- L_n}^{-L_n + \delta_n} \diff s \int_{0}^{\bar{\ell}_n} \diff t \: f_0^4 \lf| \nabla u_- \ri|^2 + \int_{L_n - \delta_n}^{L_n} \diff s \int_{0}^{\bar{\ell}_n} \diff t \: f_0^4 \lf| \nabla u_+ \ri|^2  \leq C \ee_n + \exln.
		\eeq}
		Furthermore, fixing some $ 0 < T_n \leq \bar\ell_n $, then, for any $ t \in [0,T_n] $ and any $s \in [L_n- \delta_n, L_n] $ or $ s \in [ -L_n,-L_n+\delta_n] $,
		\beq
			\label{eqp: point est psi}
			\lf| \big|\psit_{n} (\rv(s,t))\big| - f_0(t) \ri| \leq {\frac{C\ee_n^{1/4} + \OO(\ell_n^{-\infty})}{\sqrt{\min_{[0,T_n]} f_0}}},
		\eeq
		\beq
			\label{eqp: gradient s boundary}
			\lf. \lf| \partial_s \int_0^{\ell_n} \diff t \: \big| \psit_{n} \big|^2 \ri| \ri|_{s = \pm L_n} \leq {C \lf\{ \sqrt{\ee_n} + \frac{1}{\delta_n} \lf[ \frac{{\ee_n}^{1/4} + \OO(\delta_n \ell_n^{-\infty})}{\sqrt{\min_{[0,T_n]} f_0}} + e^{-c(b)T_n} \ri]  + \delta_n \ri\}.}
		\eeq
		
		{In order to simplify the discussion, let us assume that the errors $ \exln $ appearing on the r.h.s. of \eqref{eqp: kinetic u} and \eqref{eqp: point est psi} are much smaller than $ \ee_n $, since, if this is not the case, i.e., $ \ee_n $ is exponentially small in $ \ell_n $, then the argument is actually much simpler.
		Then, if we pick $ T_n $ in such a way that
		\beq
			\label{eq: positivity of psit 2}
			f_0(T_n) = \ee_n^{1/12} = o_n(1),
		\eeq}
		if the r.h.s. is larger than $ f_0(\bar\ell_n) $, or $ T_n = \bar\ell_n $ otherwise,  then 
		{\beq
			\label{eqp: kinetic u refined}
			\lf\| \nabla u_{\pm} \ri\|_{L^2(\widetilde{R}_{\pm})}^2 \leq C \ee_n^{2/3} = o_n(1),
		\eeq
		\beq
			\label{eqp: point est psi refined}
			\lf\| 1 - |u_{\pm}|  \ri\|_{L^{\infty}(\widetilde{R}_{\pm})} \leq C \ee_n^{1/8} = o_n(1),	\qquad		\big\| \psit_n \big\|_{L^{\infty}(R_{\pm} \setminus \widetilde{R}_{\pm})} \leq  C e^{-\frac{1}{2} c(b) T_n},
		\eeq
		where $ \widetilde{R}_+ : = [L_n - \delta_n, L_n] \times [0, T_n] $ and we used \eqref{eq: decay corner}. Note that, by the pointwise lower bound on $ f_0 $ stated in \eqref{eq: fk decay 1}, we find that
		\beq
			\label{eqp: Tn}
			C \sqrt{\lf| \log \ee_n \ri|} \geq T_n \geq 2 \sqrt{\lf| \log \ee_n \ri|} (1 + o_n(1)) \gg 1.
		\eeq}
		
		{Now, we claim that \eqref{eqp: kinetic u refined} and \eqref{eqp: point est psi refined} imply that $ u_{\pm} $ is close in $ L^2$ sense to a constant phase factor $ e^{i \varkappa_{\pm}} $, $ \varkappa_{\pm} \in \R $, or, equivalently, $ \psit_n \simeq f_0(t) e^{-i(\alO s - \varkappa_{\pm})} $ in $ \widetilde{R}^{\pm} $. By applying the Poincar\'{e} inequality
		\bdm
			\int_{\widetilde{R}_+} \diff s \diff t \: \lf| h - \langle h \rangle \ri|^2 \leq C \int_{\widetilde{R}_+} \diff s \diff t \: \lf\{ T_n^2 \lf| \partial_t h \ri|^2 + \delta_n^2 \lf| \partial_s h \ri|^2 \ri\},
		\edm
		where $ \langle h \rangle $ is the average of $ h $ over $ \widetilde{R}_+ $, to $ h = u_+/|u_+| $, which is well posed since $ u_+ $ does not vanish in $ \widetilde{R}_+ $ by \eqref{eqp: point est psi refined}, we obtain that there exists $ \varkappa_+ \in [0,2\pi) $ such that
		\bdm
			\lf\| \frac{u_+}{|u_+|} - e^{i \varkappa_+} \ri\|_{L^2(\widetilde{R}_+)}^2 \leq C T_n^2 \ee_n^{2/3} = o_n(1),
		\edm
		thanks to \eqref{eqp: Tn}. This in turn yields the desired estimate via \eqref{eqp: point est psi refined}:
		\beq
			\label{eqp: L2 est u_+}
			\lf\| u_+ - e^{i \varkappa_+} \ri\|_{L^2(\widetilde{R}_+)}^2 \leq C \lf[ T_n^2 \ee_n^{2/3} + \delta_n T_n \ee_n^{1/4} \ri]  = o_n(1).
		\eeq}
		
		The idea is now to exploit the information collected above to construct a trial state and prove an upper bound on $  E_{\beta}(L_n,\ell_n) $ in terms of $ \glecornt(L_n-\delta_n,\ell_n) $ via \cref{lem: twisted D energies}: we set $ \trial(\rv) := \psit_n(\rv) $ close to the corner, while sufficiently far from it,
		\beq
			\label{eqp: DN corner trial}
			\trial(\rv) : = \psit_n(\rv) + \eta(s(\rv)) \lf( f_0(t(\rv)) e^{-i\alpha_0 s(\rv)} e^{i\varkappa_{\pm}} - \psit_n(\rv) \ri),
		\eeq
		where the phases $ \varkappa_{\pm} $ are the constants appearing in \eqref{eqp: L2 est u_+}. {The function $ \eta $ is smooth and satisfies $ \eta(\pm L_n) = 1 $, $ \supp(\eta) \subset [-L_n, -L_n + \delta_n] \cup [L_n - \delta_n, L_n] \times [0, \ell_n] $ and $ |\nabla \eta| = \OO(\delta_n^{-1}) $. Obviously, the trial state $ \trial $ does not belong to $ \domd $ but $ e^{- i \varkappa_-} \trial \in \domdx $ (recall \eqref{eq: domdx}) with $ \varkappa = \varkappa_+ - \varkappa_- $. Hence, using \cref{lem: twisted D energies}, we can estimate
		\bml{
			\label{eqp: transition region}
			\glecorn(L_n,\ell_n) \leq E_{\beta, \varkappa_+ - \varkappa_-}(L_n,\ell_n) + o_n(1) \leq \G_{\av}[\trial;\cornern]+ o_n(1) 	\\
			\leq \glecornt(L_n-\delta_n,\ell_n) +\lf. \int_0^{\ell_n}\diff t\, \frac{F_0(t)}{f_0^2(t)} j_t \big[\psit_{n} \big] \ri\vert_{s=-L_n-\delta_n}^{s=L_n-\delta_n}	 + \G_{\av}\lf[\trial; R_+ \cup R_-\ri]+ o_n(1).
		}}
		
		Let us first consider the boundary terms at $ L_n $, since the ones at $ - L_n $ are perfectly equivalent: thanks to the boundary conditions \eqref{eq: neumann conditions} satisfied by $ \psit_{n} $, we get
		{\bdm
			\lf. \int_0^{\ell_n}\diff t\, \frac{F_0(t)}{f_0^2(t)} j_t \big[\psit_{n} \big] \ri\vert_{s=L_n-\delta_n} = - \lf. \frac{1}{2} \int_0^{\ell_n}\diff t\, \partial_s \big| \psit_{n} \big|^2 \ri\vert_{s=L_n} + \int_{L_n - \delta_n}^{L_n} \int_0^{\ell_n}\diff t\, \frac{F_0(t)}{f_0^2(t)} \partial_s j_t \big[\psit_{n} \big].
		\edm
		Integrating by parts as in \eqref{eq: int by parts} and using the Agmon bound provided by \cref{lem: agmon 2} as well as the inequalities \eqref{eq: kol positive} and \eqref{eq: fol bound} , one can show that the second term on the r.h.s. of the expression above is bounded by
		\bml{
			 \lf| \int_{L_n - \delta_n}^{L_n} \int_0^{\ell_n}\diff t\, \frac{F_0(t)}{f_0^2(t)} \partial_s j_t \big[\psit_{n} \big] \ri| \\
			 \leq 2 \int_{L_n - \delta_n}^{L_n} \int_0^{T_n}\diff t\, \big| \partial_t \psit_n \big| \big| \partial_s \psit_n \big| 
			 + 2 e^{-c(b) T_n} \int_{R_+ \setminus \widetilde{R}_+} \diff s \diff t \: e^{c(b) t} \big| \nabla \psit_n \big|^2 	\\
			 \leq C \lf[ \lf\| f_0 \nabla u \ri\|_{L^2(\widetilde{R}_+)}^2 + e^{-c(b) T_n} \ri]  \leq C \lf( \ee_n^{5/6} + e^{-c(b) T_n} \ri),
		}
		 while the first one can be estimated via \eqref{eqp: gradient s boundary}, so obtaining 
		\beq
			\label{eqp: boundary terms transition}
			\lf| \lf. \int_0^{\ell_n}\diff t\, \frac{F_0(t)}{f_0^2(t)} j_t \big[\psit_{n} \big] \ri\vert_{s=L_n-\delta_n} \ri| \leq C \lf[ \sqrt{\ee_n} + \frac{{\ee_n}^{5/24}+ e^{-c(b)T_n}}{\delta_n}  + \delta_n \ri].
		\eeq}
	
		We now focus on the energy contributions of the regions $ R_{\pm} $ (third term on the r.h.s. of \eqref{eqp: transition region}): For simplicity, we are going to consider only the energy in the region $ R_+ $, since the corresponding one in $ R_- $ can be bounded in the very same way. We have
		\bml{
			\label{eqp: energy R+}
			\G_{\av}\lf[\trial; R_+ \ri] \leq 10 \lf\| \lf( \nabla - i t \ev_s \ri) \psit_n \ri\|_{L^2(R_+)}^2 + 4 \lf\| \nabla \eta \lf(f_0  e^{-i\alpha_0 s + i\varkappa} - \psit_n \ri)  \ri\|_{L^2(R_+)}^2 \\
			+ 8 \lf\| \lf( \nabla - i t \ev_s \ri) f_0 e^{-i\al_0 s + i \varkappa} \ri\|_{L^2(R_+)}^2 + \frac{1}{b} \lf( \big\| \psit \big\|_{L^4(R_+)}^4 + \lf\| f_0 \ri\|_{L^4(R_+)}^4 \ri)	\\
			\leq {\frac{C}{\delta_n^{2}} \lf\| f_0  e^{-i\alpha_0 s + i\varkappa} - \psit_n  \ri\|_{L^2(R_+)}^2 + C \lf( \ee_n^{5/6} + e^{-c(b) T_n} \ri) + \OO(\delta_n)}
		}
		thanks to \eqref{eq: fk decay 1} and \eqref{eq: agmon 2} and where {the first term on the r.h.s. has been bounded by Cauchy inequality, exploiting \eqref{eqp: kinetic u}, \eqref{eq: agmon 2}, \eqref{eq: decay corner} and the splitting technique:
		\bml{
			\lf\| \lf( \nabla - i t \ev_s \ri) \psit_n \ri\|_{L^2(R_+)}^2 \leq \int_{R_+} \diff s \diff t \: f_0^2 \lf\{ \lf| \nabla u_+ \ri|^2  -  2(t+\alpha_0)  j_s[u_+] + \tx\frac{1}{b} \lf(1 - f_0^2 \ri)  |u_+|^2 \ri\} \\
			\leq C \lf[ \lf\| f_0 \nabla u \ri\|_{L^2(\widetilde{R}_+)}^2 + e^{-c(b) T_n} \ri] + \OO(\delta_n)
			\leq C \lf( \ee_n^{5/6} + e^{-c(b) T_n} \ri) + \OO(\delta_n).
		}
		We now exploit \eqref{eqp: point est psi refined} and \eqref{eqp: L2 est u_+} to deduce that
		\bml{
			\label{eqp: distance on R+}
			\lf\| f_0  e^{-i\alpha_0 s + i\varkappa} - \psit_n  \ri\|_{L^2(R_+)}^2 \leq \int_{\widetilde{R}_+} \diff s \diff t \: f_0^2 \lf| u - e^{i \varkappa} \ri|^2 + C \delta_n e^{- \frac{1}{2} c(b) T_n}	\\
			\leq C \lf[ T_n^2 \ee_n^{2/3} + \delta_n T_n \ee_n^{1/4} + e^{- \frac{1}{2} c(b) T_n }\ri].	
		}}

		{Putting together \eqref{eqp: transition region} with \eqref{eqp: boundary terms transition}, \eqref{eqp: energy R+} and \eqref{eqp: distance on R+}, we finally get
		\bml{
			\glecorn(L_n,\ell_n) \leq \glecornt(L_n-\delta_n,\ell_n) \\
			+C \lf\{ \sqrt{\ee_n} + \frac{{\ee_n}^{5/24}+ e^{-c(b)T_n}}{\delta_n}   + \frac{T_n^2 \ee_n^{2/3} + \delta_n T_n \ee_n^{1/4} + e^{- \frac{1}{2} c(b) T_n }}{\delta_n^2} + \delta_n \ri\}  \\
			\leq \glecornt(L_n-\delta_n,\ell_n) + C \lf[  \frac{T_n^2 \ee_n^{2/3} + \delta_n T_n \ee_n^{1/4} + e^{- \frac{1}{2} c(b) T_n }}{\delta_n^2} + \delta_n \ri] + o_n(1)	\\
			\leq \glecornt(L_n-\delta_n,\ell_n) + C \lf[ \max\lf\{ T_n^{2/3} \ee_n^{2/9}, T_n^2 \ee_n^{1/8} \ri\} + e^{- \frac{1}{10} c(b) T_n } \ri] + o_n(1) \\
			= \glecornt(L_n,\ell_n) + o_n(1),
		}
		where we have optimized over $\delta_n $ by taking $ \delta_n = \max\{ T_n^{2/3} \ee_n^{2/9}, T_n^2 \ee_n^{1/8} \} + e^{- \frac{1}{5} c(b) T_n} $ and used that $ T_n = \OO(\sqrt{|\log \ee_n|}) $.}
		\end{proof}

	{\begin{proof}[Proof of \cref{lem: twisted D energies}]
		We first observe that the existence of the limit as $ n \to + \infty $ of $ E_{\beta,\varkappa}(L_n,\ell_n) - 2 \eoneo(\ell_n) L_n $ can be shown as in the proof of \cref{pro: ecorn}. Hence for any $ 1 \ll \delta_n \ll \min\{\ell_n, L_n\} $, we have
		\bdm
			E_{\beta,\varkappa}(L_n,\ell_n) - E_{\beta,\varkappa}(L_n-\delta_n,\ell_n)  + 2 \eoneo(\ell_n) \delta_n = o_n(1).
		\edm
		By a trivial testing of the functional, exploiting the above estimate as well as \cref{pro: strip} and \cref{pro: twisted}, one gets
		\bml{
			\glecorn(L_n,\ell_n) \leq E_{\beta,\varkappa}(L_n-\delta_n,\ell_n) + \edk(R(\delta_n,\ell_n)) \\
			\leq E_{\beta,\varkappa}(L_n-\delta_n,\ell_n) + 2 \eoneo(\ell_n) \delta_n + \OO(\delta_n^{-1}) + o_n(1) = E_{\beta,\varkappa}(L_n,\ell_n) + o_n(1).
		}
		The proof of the opposite inequality is identical.
	\end{proof}}

\section{Proof of the Energy Lower Bound}
\label{sec: lower bound}

In this Section we prove the lower bound to the GL energy which in combination with the upper bound proven in \cref{pro: upper bound}, stated in next Section, will provide the proof of \cref{teo: gle asympt}. 

	\begin{pro}[GL energy lower bound]
		\label{pro: lower bound}
		\mbox{}	\\
		Let $\Om\subset\R^2$ be any bounded simply connected domain satisfying \cref{asum: boundary 1} and \cref{asum: boundary 2}. Then, for any fixed $ 1< b < \theo^{-1} $,		 as $\eps \to 0$, it holds
		\beq
			\label{eq: lower bound}
			\gle  \geq \disp\frac{|\partial\Om| \eoneo}{\eps} - \ecorr \int_{0}^{|\partial \Omega|} \diff \ss \: \curv + \sum_{j = 1}^N E_{\mathrm{corner}, \beta_j} + o(1). 
		\eeq
	\end{pro}

We recall the definition of the superconducting boundary layer
\bdm
	\anne : = \lf\{ \rv \in \Omega \: \big| \: \dist\lf(\rv, \partial \Omega\ri) \leq \eps \elle \ri\},
\edm
with (see \eqref{eq: elle}) $ \elle = c_1  |\log\eps| $, for a large constant $ c_1 $. {The smooth part of the boundary layer is defined as
\beq
	\label{eq: acute}
	\acute : = \lf\{ \rv \in \anne \: \big| \lf| s(\rv) - \ss_j \ri| \geq \eps \leps \ri\},
\eeq
where $ \ss_j $ is the coordinate along $ \partial \Omega $ of the $j$-th corner and
\beq
	\label{eq: leps}
	\leps = c_2(\eps)  |\log\eps|,
\eeq
for some 
\beq
	\tx\frac{c_1}{\tan\lf(\beta/2\ri)} \leq c_2(\eps) \leq C ,
\eeq	
so that \eqref{eq: subsequence} holds. The corner regions are denoted by $ \corneru $, $ j \in \lf\{ 1, \ldots, N \ri\} $, and coincide with the complement of $ \acute $:
\beq
	\label{eq: corneru}
	\corneru : =  \lf( \anne \setminus \acute \ri) \cap \lf\{ \rv \in \anne \: | \: \dist(\rv, \rv_j) \leq C \eps \leps \ri\}.
\eeq}
In $ \acute $, one can use the tubular coordinates $ (\ss, \tt) $ defined in \eqref{eq: tc} as well as their rescaled counterparts given in \eqref{eq: rescaled tc}. We denote by $ \ann $ the rescaling of the boundary layer $ \anne $. Similarly, the set obtained via rescaling of the domain $ \acute $ is denoted by $ \acut $, i.e., with a little abuse of notation,
\beq
	\label{eq: acut}
	\acut : = \lf( \lf[0, s_{1} - \leps \ri] \cup \lf[ s_1 + \leps, s_2 - \leps \ri] \cup \cdots \cup \lf[ s_N + \leps, \tx\frac{|\partial \Omega|}{\eps} \ri] \ri) \times [0, c_1 |\log\eps|],
\eeq
while $ \Gamma_j $ stands for the rescaling of the domain $ \corneru $, i.e., $ \Gamma_j : =\lf\{ \rv' \in \R^2 \: \big| \: \rv_j + \eps \rv' \in \corneru \ri\} $.

\begin{figure}[!ht]
		\begin{center}
		\begin{tikzpicture}
			\draw (0,0) to[bend right=11] (2,4);
			\draw (2,4) to[bend left=15] (4.2,-0.5);
			\draw (0,0) -- (1,-0.7);
			\draw (4.2,-0.5) -- (3,-0.8);
			\draw (1,-0.7) to[bend right=7] (2.3,1.5);
			\draw (3,-0.8) to[bend right=8] (2.3,1.5);
		\end{tikzpicture}
		\caption{A typical corner region $ \corneru $ (or, after rescaling, $ \Gamma_j $) before the rectification.}
		\label{fig: corner curve}
		\end{center}{}
	\end{figure}
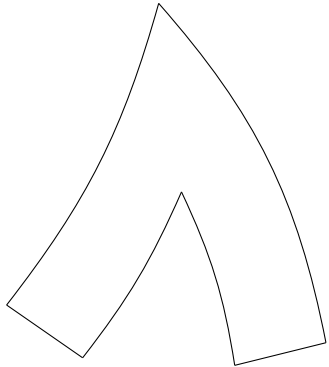

Before proceeding further, we summarize the main steps of the proof of the lower bound. We are going to treat the smooth part of the layer and the corner regions differently. In order to extract the $ \OO(1) $ contributions to the energy, it is indeed necessary to retain in the smooth part of the layer the terms depending on the boundary curvature. The same precision is not needed close to the corners. There, however, the procedure is more involved, since we have to reconstruct the model problem discussed in \cref{sec: corner energy}. 
\begin{itemize}
	\item The first step is the \textit{replacement of the magnetic vector potential} (\cref{sec: replacement magnetic field}). The idea is to replace $\aav^{\mathrm{GL}}$ with $-\tt \ev_{\ss} {+ \frac{1}{2} \curv \tt^2 + \eps \deps} $ far enough from the corners by means of a suitable gauge {change}. Close to the corners, on the other hand, we replace $ \aavm $ with $ \fv $ (\cref{lem: replacement corners}) by means of a priori bounds of the difference between $ \aavm $ and $ \fv $ (see \cref{sec: elliptic}); 
	\item The second step is the \textit{rectification of the corner regions} (\cref{sec: rectification}): via a suitable diffeomorphism, we map the corner region as in \cref{fig: corner curve} onto a domain with the same shape as $ \corner $ in \cref{fig: corner}; this allows us to reduce the lower bound to the corner effective problem introduced in \eqref{eq: ecorn};
	\item The third step is simply the \textit{completion of the lower bound} (\cref{sec: completion}), where we just glue together the lower bounds near the corners with the one in the smooth part of the domain discussed in \cref{sec: smooth energy}.
\end{itemize}

\subsection{Replacement of the magnetic field}
\label{sec: replacement magnetic field}

In $ \acute $ we aim at bounding from below {the GL energy} by the reduced energy functional $ \gep[\psi; \acut] $, where
\bml{
	\label{eq: gep}
	\gep[\psi; \acut] : = \int_{\acut} \diff s \diff t \: \lf(1 - \eps k(s) t \ri) \lf\{ \lf| \partial_t \psi \ri|^2 + \tx\frac{1}{(1 - \eps k(s) t)^2} \lf| \lf( \partial_s + i \aeps(s,t) \ri) \psi \ri|^2 \ri.	\\
	\lf. - \tx\frac{1}{2 b} \lf( 2 |\psi|^2 - |\psi|^4 \ri) \ri\}.
}
\beq
	\aave(s,t) : = \aeps(s,t) \ev_{s},		\qquad			\aeps(s,t) = - t + \tx\frac{1}{2} k(s) t^2 + \eps \deps,
\eeq
and $ \psi(s,t) = \glm(\rv(s,t)) e^{-i\phi_\eps(\rv(s,t))} $, with $ \phi_{\eps} $ a suitable gauge phase (see \eqref{eq: gauge phase} below). 

The replacement procedure by means of a local gauge choice is well described in \cite[Appendix F]{FH1}  for smooth domains and, in more details, in \cite[Sect. 5.1]{CR2}. A similar discussion is extended in presence of corners at the boundary in \cite[Sect. 2.4]{CG}, where however the energy of the corner regions is dropped.

	\begin{lem}[Replacement of the magnetic potential in $ \acute $]
		\label{lem: replacement smooth}
		\mbox{}	\\
		Under the assumptions of \cref{pro: lower bound}, there exists $ \phi_{\eps} \in C^{\infty}(\Omega) $ such that, setting $ \psi(s,t) : = \glm(\rv(s,t)) e^{i \phi_{\eps}(\rv(s,t))} $, we get, as $ \eps \to 0 $,
		\beq
			\label{eq: replacement smooth}
			\glfe\lf[\glm, \aavm; \acute\ri] \geq \gep[\psi; \acut] + \OO(\eps^2 |\log\eps|^2).
		\eeq
	\end{lem}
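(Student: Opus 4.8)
The statement is the standard ``replacement of the magnetic potential'' lemma adapted to the smooth part $\acute$ of the boundary layer, and the plan is to follow the argument of \cite[Sect.~5.1]{CR2} (see also \cite[Appendix~F]{FH1} and \cite[Sect.~2.4]{CG}) essentially verbatim, the only novelty being that one must be careful that the construction is performed only on $\acute$, away from the corners, where tubular coordinates $(\ss,\tt)$ are a genuine diffeomorphism. First I would recall that in $\acute$ the induced magnetic field $\curl\aavm$ is close to the constant field $1$: by the a priori elliptic estimates on the minimizing configuration (referred to in \cref{sec: technical}, and analogous to \cite[Sect.~5.1]{CR2}), one has $\|\curl\aavm-1\|_{L^\infty(\anne)} = \OO(\eps)$, hence in particular $\|\aavm - \aavm_{\mathrm{approx}}\|$ is controlled once a good local gauge is chosen.

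\textbf{Construction of the gauge.} The key step is to write, in tubular coordinates on $\acute$,
\beq
	\phi_\eps(\rv(\ss,\tt)) := -\int_0^\tt \diff\eta\, \aavm(\rv(\ss,\eta))\cdot\ev_\tt - \int_{\ss_0}^{\ss}\diff\xi\, \aavm(\rv(\xi,0))\cdot\ev_\ss + \varpi\,\ss,
\eeq
with $\varpi$ a constant fixing single-valuedness of $e^{i\phi_\eps}$ on each connected component of $\acute$ (as in \cite[Appendix~F]{FH1}), and then set $\psi(s,t):=\glm(\rv(\eps s,\eps t))e^{-i\phi_\eps(\rv(\eps s,\eps t))}$. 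A direct computation using \eqref{eq: curl} shows that the transformed vector potential $\taav := \aavm + \nabla\phi_\eps$ satisfies $\taav\cdot\ev_\tt = 0$ and $\taav\cdot\ev_\ss = -\int_0^\tt \diff\eta\,(1-\curv\eta)\curl\aavm(\rv(\ss,\eta)) + \OO(\eps^2)$; plugging in $\curl\aavm = 1 + \OO(\eps)$ and Taylor-expanding the Jacobian factor $(1-\curv\eta)$ yields $\taav\cdot\ev_\ss = -\tt + \tfrac12\curv\tt^2 + \OO(\eps \tt) + \OO(\eps\tt^2)$ in $\acute$, i.e.\ $\taav = \eps\,\aave(s,t)$ after rescaling, up to the error term $\eps\deps$ absorbed into the definition of $\aeps$. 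One then inserts this into the GL energy written in tubular coordinates via \eqref{eq: transformation gradient}, \eqref{eq: jacobian} and \eqref{eq: curl}, discards the (nonnegative) self-field term $\eps^{-4}\int_{\R^2}|\curl\aavm-1|^2$, and uses the pointwise bounds $|\glm|\leq 1$ and $\|(\nabla+i\aavm/\eps^2)\glm\|_{L^2} = \OO(\eps^{-1}\sqrt{|\acute|})$ together with the Agmon decay \eqref{eq: agmon} to bound the cross terms generated by the $\OO(\eps \tt)$, $\OO(\eps\tt^2)$ remainders in $\taav$. Since $|\acute| = \OO(\eps|\log\eps|)$ and the remainders carry an extra factor $\eps$, each such error is at worst $\OO(\eps^2|\log\eps|^2)$, which gives \eqref{eq: replacement smooth}.

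\textbf{Main obstacle.} The delicate point is \emph{not} the algebra of the gauge change but the quantitative control, uniformly in $\acute$, of $\curl\aavm - 1$ and of the current $\jv[\glm]$ that enters the constant $\eps\deps$; this requires the elliptic a priori estimates and the Agmon-type exponential decay collected in \cref{sec: technical}, applied on the portion $\acute$ of the layer where the corner contributions are excluded. One must check that cutting out the corner sectors $\corneru$ does not spoil these estimates — but since the Agmon estimates hold on all of $\anne$ and $\acute\subset\anne$, the restriction is harmless, and the single-valuedness constant $\varpi$ on each component of $\acute$ only contributes a harmless constant phase. A secondary bookkeeping point is that the constant $\deps$ (and hence $\aeps$) must be chosen so that $\int_{\acut}$ of the reduced integrand matches the expected leading term; this is exactly as in \cite[Sect.~5.1]{CR2} and requires no new idea. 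I would therefore present the proof as: (i) recall the a priori estimates on $\aavm$; (ii) define $\phi_\eps$ and compute $\taav$ in tubular coordinates using \eqref{eq: curl}; (iii) expand the Jacobian and the field, identifying $\aave$ and the $\OO(\eps)$ remainders; (iv) substitute into $\glfe$, drop the self-field energy, and estimate the remainders by $\OO(\eps^2|\log\eps|^2)$ using $|\acute| = \OO(\eps|\log\eps|)$ and \eqref{eq: agmon}.
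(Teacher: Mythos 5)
Your construction coincides with the paper's: the gauge phase you propose is exactly \eqref{eq: gauge phase} (up to the rescaling prefactors, and with the single-valuedness constant fixed in the paper globally through the mean flux $\gamma_0$ rather than per connected component), and the identities $\taav\cdot\ev_{\tt}=0$, $\taav\cdot\ev_{\ss}=-\int_0^{\tt}(1-\curv\,\eta)\,\curl\aavm\,\diff\eta$ are the standard computation from \cite[Appendix F]{FH1}. The gap is in the quantitative step. You claim $\|\curl\aavm-1\|_{L^\infty(\anne)}=\OO(\eps)$ and then assert that the resulting remainders are $\OO(\eps^2|\log\eps|^2)$ because ``$|\acute|=\OO(\eps|\log\eps|)$ and the remainders carry an extra factor $\eps$''. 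Neither half holds up. First, an $L^\infty$ bound of order $\eps$ on $\curl\aavm-1$ is not among the a priori estimates of \cref{sec: technical}: \eqref{eq: est curl a} is an $L^2$ bound, and the $W^{1,2}$/$W^{2,p}$ control of $\aavm-\fv$ is of order $\|\glm\|_{L^2}\|\glm\|_{L^\infty}\sim\sqrt{\eps|\log\eps|}$; a genuine pointwise bound in the layer comes instead from the second GL equation $\nabla^{\perp}\curl\aavm=-\eps^{2}\jv_{\aavm}[\glm]$ together with $\curl\aavm=1$ outside $\Omega$ and the Agmon decay of the current, and it is of order $\eps^2$ there --- a different statement from the one you invoke. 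Second, and decisively, even granting your bound, the induced error $\delta a$ in the tangential component of $\taav$ is $\int_0^{\tt}\OO(\eps)=\OO(\eps\tt)$, i.e. $\OO(\eps|\log\eps|)$ in the rescaled variable; its cross term with the magnetic kinetic term is controlled by $2\|(\nabla+i\aave)\psi\|_{L^2(\acut)}\,\|\delta a\,\psi\|_{L^2(\acut)}$, and since both $\|(\nabla+i\aave)\psi\|_{L^2(\acut)}$ and $\|\psi\|_{L^2(\acut)}$ are of order $\eps^{-1/2}$ (up to logs), this gives $\OO(|\log\eps|)$ at best --- not $\OO(\eps^2|\log\eps|^2)$, and not even $o(1)$. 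The smallness of $|\acute|$ does not rescue this, because the kinetic term carries the factor $\eps^{-2}$ in front of the potential and only one power of $\eps^{-1}$ is absorbed by the layer width.

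What actually closes the argument, and what the paper's proof rests on, is the sharp pointwise estimate \eqref{eq: magnetic difference 1} (i.e. \cite[Eq. (4.23)]{CR1}): the gauge-transformed, rescaled potential differs from $\aave$ by $\OO(\eps^{3}|\log\eps|^{2})$ in $L^{\infty}(\ann)$. Only with this order of smallness does the cross term come out as $\OO(\eps^{-1/2})\cdot\OO(\eps^{3}|\log\eps|^{2})\cdot\OO(\eps^{-1/2})=\OO(\eps^{2}|\log\eps|^{2})$, which is the remainder claimed in \eqref{eq: replacement smooth}. You should therefore either quote that estimate, as the paper does, or reprove it from the $W^{2,p}$ bound \eqref{eq: est A-F W2p} on $\aavm-\fv$ combined with the exponential decay \eqref{eq: agmon} of $\glm$ and of the supercurrent; the crude field bound you propose is not sufficient. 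The remaining points of your outline (restriction to $\acute$ where tubular coordinates are a diffeomorphism, discarding the nonnegative self-field term, harmlessness of cutting out the corner sectors) are correct and match the paper.
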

	
	\begin{proof}
		As described above there are three operations, which are performed simultaneously, to get \eqref{eq: replacement smooth}:
		\begin{itemize}
			\item change to boundary tubular coordinates $ (\ss,\tt) $;
			\item extraction of a suitable gauge phase to replace $ \aavm $ with $ - \tt \ev_{\ss} {+ \frac{1}{2} \curv \tt^2 + \eps \deps} $;
			\item rescaling of all the lengths (e.g., via \eqref{eq: rescaled tc}).
		\end{itemize}
		As anticipated, the above procedures have been already discussed in the literature{, therefore we omit the details for the sake of brevity. We only provide the expression of the gauge phase for later convenience}
		\begin{equation}
			\label{eq: gauge phase}
			\phi_\varepsilon(s,t) := -\frac{1}{\varepsilon}\int_0^{t} \, \diff\eta\, \aavm(\rv(\eps s, \eps \eta))\cdot \nuv(\eps s) - \frac{1}{\eps}\int_0^{s} \, \diff \xi\, \aavm(\rv(\eps \xi,0))\cdot \gamv^\prime(\eps \xi) + \OO(\eps) s.
		\end{equation}
	\end{proof}

In the corner regions, on the opposite, it suffices to use a priori bounds on the solutions of the GL equations to substitute $ \aavm $ with $ \fv $ (recall \eqref{eq: fv}). Before doing that, we need however a preparatory lemma:

	\begin{lem}
		\label{lem: kinetic energy corner}
		\mbox{}	\\
		For any $ j  {= 1, \ldots, N} $, as $ \eps \to 0 $,
		\beq
			\label{eq: kinetic energy corner}
			\lf\| \lf( \nabla + i \tx\frac{\aavm}{\eps^2} \ri) \glm \ri\|_{L^2(\corneru)} = \OO(|\log\eps|).
		\eeq
	\end{lem}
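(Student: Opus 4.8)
\textbf{Proof plan for \cref{lem: kinetic energy corner}.}

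The plan is to derive the bound from an energy estimate combined with the a priori pointwise bound $|\glm| \leq 1$ (which follows from the maximum principle for solutions of the GL equations, see \cref{sec: technical}) and the Agmon-type exponential decay of $\glm$ away from the boundary. First I would localize: recall that $\corneru$ is contained in the boundary layer $\anne$, and has tangential extent $\OO(\eps|\log\eps|)$ and normal extent $\OO(\eps|\log\eps|)$, so $|\corneru| = \OO(\eps^2|\log\eps|^2)$. The natural route is to test the GL functional $\glfe[\cdot\,;\corneru]$ on a suitable trial state supported in (a neighborhood of) $\corneru$ and compare with $\glfe[\glm,\aavm;\corneru]$; but since $\glm$ is only a global minimizer, the cleaner approach is to use the variational (GL) equations satisfied by $(\glm,\aavm)$ directly.

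Concretely, I would multiply the first GL equation by $\overline{\glm}\,\zeta^2$, where $\zeta$ is a smooth cutoff equal to $1$ on $\corneru$ and supported in a slightly enlarged corner region $\widetilde\Gamma_{j,\eps}$ (still of area $\OO(\eps^2|\log\eps|^2)$) with $|\nabla\zeta| = \OO((\eps|\log\eps|)^{-1})$, integrate by parts, and use the magnetic Neumann boundary condition on $\partial\Omega$ to obtain an identity of the form
\beq
	\int \zeta^2 \lf| \lf( \nabla + i\tx\frac{\aavm}{\eps^2} \ri)\glm \ri|^2 = \int \zeta^2 \tx\frac{1}{b\eps^2}\lf(1 - |\glm|^2\ri)|\glm|^2 - \int \lf( \nabla\zeta^2 \ri)\cdot \Re\lf( \overline{\glm}\lf(\nabla + i\tx\frac{\aavm}{\eps^2}\ri)\glm \ri).
\eeq
The first term on the right is bounded by $\tx\frac{1}{b\eps^2}|\widetilde\Gamma_{j,\eps}| = \OO(|\log\eps|^2)$ using $|\glm|\leq 1$. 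The second (cross) term is handled by Cauchy--Schwarz: it is bounded by $\tfrac12\int\zeta^2|(\nabla + i\aavm/\eps^2)\glm|^2 + 2\int|\nabla\zeta|^2|\glm|^2$, and the first piece is absorbed into the left-hand side while the second is $\OO(|\nabla\zeta|^2 |\widetilde\Gamma_{j,\eps}|) = \OO((\eps|\log\eps|)^{-2}\cdot\eps^2|\log\eps|^2) = \OO(1)$. Rearranging gives $\int_{\corneru}|(\nabla + i\aavm/\eps^2)\glm|^2 = \OO(|\log\eps|^2)$, i.e. the $L^2$ norm is $\OO(|\log\eps|)$, which is exactly \eqref{eq: kinetic energy corner}.

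The main obstacle — really the only nontrivial point — is making sure the integration by parts against $\zeta^2$ produces no uncontrolled boundary contribution: on the part of $\partial\widetilde\Gamma_{j,\eps}$ lying on $\partial\Omega$ the natural boundary condition $\nuv\cdot(\nabla + i\aavm/\eps^2)\glm = 0$ kills the boundary term, while on the remaining (interior) part of $\partial\widetilde\Gamma_{j,\eps}$ one should either take $\zeta$ to vanish there or, if one prefers $\zeta$ not to vanish on the inner boundary $\{\tt = \eps\elle\}$, invoke the exponential smallness of $\glm$ and its gradient there (the Agmon estimates \eqref{eq: agmon}, giving $\glm = \OO(\eps^\infty)$ at distance $\gtrsim \eps|\log\eps|$ from $\partial\Omega$). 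In either case the boundary term is negligible. I would also note that the same computation is essentially the one used in the smooth case in \cite{CR1,CR2}, just localized to the corner, so the argument is routine once the cutoff is set up; the estimate is stated in this crude form because this level of precision ($\OO(|\log\eps|)$ rather than $\OO(1)$) is all that is needed in the subsequent replacement step \cref{lem: replacement corners}.
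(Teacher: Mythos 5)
Your argument is correct, and its skeleton coincides with the paper's: both proofs test the first GL equation against (a multiple of) $\overline{\glm}$, integrate by parts, kill the boundary term on $\partial\Omega$ with the magnetic Neumann condition, and bound the bulk nonlinear term by $\tx\frac{C}{\eps^2}|\corneru| = \OO(|\log\eps|^2)$ using $|\glm|\leq 1$. The difference lies entirely in how the remaining boundary contribution is controlled. The paper integrates by parts directly on $\corneru$, with no cutoff, and estimates the boundary integral over the lateral and inner pieces of $\partial\corneru$ (total length $\OO(\eps|\log\eps|)$) by combining $|\glm|\leq 1$ from \eqref{eq: est infty} with the pointwise magnetic-gradient bound $\lf\|(\nabla + i\tx\frac{\aavm}{\eps^2})\glm\ri\|_{L^\infty} \leq C/\eps$ of \cref{lem: est gradient infty} (applicable there since those pieces lie at distance $\gtrsim\eps$ from the vertex), which yields an $\OO(|\log\eps|)$ boundary term. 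You instead introduce a cutoff $\zeta$ and run a Caccioppoli-type absorption, so that the only extra term is $\int|\nabla\zeta|^2|\glm|^2 = \OO(1)$; this requires nothing beyond $|\glm|\leq1$ (and, if $\zeta$ does not vanish on the inner boundary, the pointwise Agmon decay \eqref{eq: glm exp small}, which the paper does provide). Your route is thus slightly more self-contained, as it bypasses the elliptic-regularity input of \cref{lem: est gradient infty}, at the modest cost of enlarging the region to $\widetilde\Gamma_{j,\eps}$; since the final bound is dominated by the $\OO(|\log\eps|^2)$ bulk term in either case, the two implementations are interchangeable here.
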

	
	\begin{proof}
		The idea is to exploit the variational equation for $ \glm $ in \eqref{eq: GL eqs}, to compute
		\bml{
			\lf\| \lf( \nabla + i \tx\frac{\aavm}{\eps^2} \ri) \glm \ri\|^2_{L^2(\corneru)} = \frac{1}{\eps^2} \int_{\corneru} \diff \rv \: \lf( 1 - \lf| \glm \ri|^2 \ri) \lf| \glm \ri|^2 \\
			+ \int_{\partial \corneru \setminus \partial \Omega} \diff x \: {\glm}^* \nuv \cdot \lf( \nabla +  i \tx\frac{\aavm}{\eps^2} \ri) \glm = \OO(|\log\eps|^2),
		}
		by the bounds \eqref{eq: est infty}, \eqref{eq: est grad infty} and the boundary conditions on $ \glm $ (recall \eqref{eq: GL eqs}).
	\end{proof}
	
	We can now perform the vector potential replacement:

	\begin{lem}[Replacement of the magnetic field in $ \corneru $]
		\label{lem: replacement corners}
		\mbox{}	\\
		{For any $ j = 1, \ldots, N $,} there exists $ \psi_j \in H^1(\anne) $, so that, as $ \eps \to 0 $,
		\beq
			\label{eq: replacement corners}
			\glfe\lf[\glm, \aavm; \corneru \ri] \geq \glf_1\lf[\psi_j, \fv; \Gamma_j \ri] + \OO(\eps^{3/5}).
		\eeq
	\end{lem}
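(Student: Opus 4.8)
\textbf{Proof proposal for Lemma \ref{lem: replacement corners}.}

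The plan is to perform, inside each corner region $\corneru$, a change of gauge and a rescaling so as to pass from the GL energy with the minimizing potential $\aavm$ to the $\eps$-independent functional $\glf_1[\cdot, \fv; \Gamma_j]$ appearing in the corner effective problem. First I would rescale lengths by $\eps$: writing $\rv = \rv_j + \eps \rv'$, a point of $\corneru$ is sent to a point of $\Gamma_j$, and the GL functional $\glfe[\psi, \aavm; \corneru]$ becomes, up to the Jacobian, an integral over $\Gamma_j$ of $|(\nabla' + i \eps^{-1}\aavm(\rv_j + \eps\rv'))\psi|^2 - \tfrac{1}{2b}(2|\psi|^2 - |\psi|^4)$ (the self-field term being dropped, as it is nonnegative). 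The natural candidate for the new order parameter is $\psi_j(\rv') := \glm(\rv_j + \eps \rv') e^{-i \varphi_j(\rv')}$ with a gauge phase $\varphi_j$ chosen so that $\eps^{-1}\aavm(\rv_j + \eps\cdot) - \nabla'\varphi_j$ is close to $\fv = \tfrac12 (\rv')^{\perp}$ (recall $\curl\fv = 1$, $\nabla\cdot\fv = 0$, so this is consistent with $\curl\aavm \simeq \eps^2$). The phase can be taken as the standard magnetic gauge $\varphi_j(\rv') = \eps^{-1}\int_{\text{segment}} \aavm$, as in \cite[Appendix F]{FH1}, possibly with the same kind of integer correction as in \cref{lem: gauge choice} to keep $\psi_j$ single-valued; since we only need a lower bound this is harmless.

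The core of the estimate is then controlling the error term $\av_{\mathrm{err}} := \eps^{-1}\aavm(\rv_j + \eps\cdot) - \nabla'\varphi_j - \fv$. The key input is the a priori elliptic bound on $\curl\aavm - 1$ (equivalently $\curl(\aavm/\eps^2) - 1$ is small in $L^2(\R^2)$ and, by elliptic regularity, $\aavm/\eps^2 - \fv$ is controlled in a Sobolev norm on any bounded set), which is exactly what \cref{sec: elliptic} provides; combined with the scaling, one gets $\|\av_{\mathrm{err}}\|_{L^p(\Gamma_j)} \le C \eps^{a}\, |\log\eps|^{b}$ for suitable $p, a, b$. Expanding $|(\nabla' + i\fv + i\av_{\mathrm{err}})\psi_j|^2 = |(\nabla'+i\fv)\psi_j|^2 + 2\,\Re\langle (\nabla'+i\fv)\psi_j, i\av_{\mathrm{err}}\psi_j\rangle + |\av_{\mathrm{err}}|^2 |\psi_j|^2$, one bounds the cross term by Cauchy–Schwarz using $\|(\nabla'+i\fv)\psi_j\|_{L^2(\Gamma_j)} = \|(\nabla + i\aavm/\eps^2)\glm\|_{L^2(\corneru)} = \OO(|\log\eps|)$ from \cref{lem: kinetic energy corner}, together with $\|\psi_j\|_\infty \le C$ from \eqref{eq: est infty}; Hölder on $|\av_{\mathrm{err}}|^2|\psi_j|^2$ absorbs the rest. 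Balancing the powers of $\eps$ and $|\log\eps|$ is what produces the stated remainder $\OO(\eps^{3/5})$ (the exponent $3/5 < 1$ being dictated by the relatively crude $\OO(|\log\eps|)$ kinetic bound on the corner region rather than the sharper curvature-level estimates used in the smooth part).

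The main obstacle I anticipate is precisely this bookkeeping of the error: unlike in $\acute$, where \eqref{eq: magnetic difference 1} gives an $L^\infty$ bound of order $\eps^3|\log\eps|^2$ and the replacement is essentially free, here we do not have (and do not need) such precision, so the argument has to be arranged so that the relatively weak a priori control on $\aavm$ near the corner, combined with the $\OO(|\log\eps|)$ kinetic energy and the $L^\infty$ bounds on $\glm$, still yields an error that is $o(1)$ — indeed a genuine power of $\eps$. One must also be slightly careful that the rescaled region $\Gamma_j$ is not yet the model domain $\corner$ (it has a curved outer boundary, cf.\ \cref{fig: corner curve}); but the rectification is deferred to \cref{sec: rectification}, so at this stage it suffices to keep $\Gamma_j$ as is and only perform the magnetic-potential replacement, exactly as the statement asserts.
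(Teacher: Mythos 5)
Your proposal follows essentially the same route as the paper: the replacement is obtained by expanding the square, bounding the cross term via Cauchy--Schwarz against the $\OO(|\log\eps|)$ kinetic bound of \cref{lem: kinetic energy corner}, controlling $\lf\|\aavm-\fv\ri\|_{L^{2p}}$ through the elliptic a priori estimate \eqref{eq: vector potential p} combined with H\"older and the small area of $\corneru$, and then optimizing the Cauchy--Schwarz parameter to land on $\OO(\eps^{3/5})$, after which the rescaling and a constant gauge shift removing $\fv(\rv_j)$ produce $\glf_1[\psi_j,\fv;\Gamma_j]$. The one (harmless but unnecessary) deviation is your extra line-integral gauge phase $\varphi_j$: since the target potential in the corner is $\fv$ itself rather than $-\tt\,\ev_{\ss}$, no such gauge is needed --- $\aavm-\fv$ is already small in $L^p$ and the paper substitutes it directly --- and if $\varphi_j$ were built from tubular-coordinate segments as in \cite[Appendix F]{FH1} it would not even be well defined across the corner bisectrix, while the single-valuedness/integer correction you invoke is moot because $\corneru$ is simply connected.
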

	
	\begin{remark}[Kinetic energy in the corner regions]
		\label{rem: kinetic energy corner}
		\mbox{}	\\
		Combining \cref{lem: kinetic energy corner} with \cref{lem: replacement corners}, one can easily deduce that \eqref{eq: kinetic energy corner} holds true with $ (\psi_j, \fv) $ in place of $( \glm, \aavm) $. More precisely, let $ {\psi_j} $ as in \eqref{eq: replacement corners}, then
		\beq
			\label{eq: kinetic energy corner fv}
			\lf\| \lf( \nabla + i \fv \ri) \psi_j \ri\|_{L^2(\Gamma_j)} = \OO(|\log\eps|).
		\eeq
	\end{remark}

	\begin{proof}
		A straightforward computation yields
		\bml{
			\label{eqp: replacement}
 			\lf\| \lf( \nabla + i \tx\frac{\aavm}{\eps^2} \ri) \glm \ri\|^2_{L^2(\corneru)} - \lf\| \lf( \nabla + i \tx\frac{\fv}{\eps^2} \ri) \glm \ri\|^2_{L^2(\corneru)}  \\
			= - 2 \Im \int_{\corneru} \diff \rv \:   \lf[ \lf( \nabla + i \tx\frac{\aavm}{\eps^2} \ri) \glm \ri]^*  \lf( \tx\frac{\aavm}{\eps^2} - \tx\frac{\fv}{\eps^2} \ri) \glm - \int_{\corneru} \diff \rv \:  \lf| \tx\frac{\aavm}{\eps^2} - \tx\frac{\fv}{\eps^2} \ri|^2 \lf| \glm \ri|^2	\\
		\geq - \delta \lf\| \lf( \nabla + i \tx\frac{\aavm}{\eps^2} \ri) \glm \ri\|^2_{L^2(\corneru)}  - \tx\frac{1}{\eps^4} \lf( \tx\frac{1}{\delta} + 1\ri) \lf\| \aavm - \fv \ri\|^2_{L^2(\corneru)} 	\\
		\geq - C \delta |\log\eps|^2 - \tx\frac{1}{\eps^4} \lf(\tx\frac{1}{\delta} + 1\ri) \lf\| \aavm - \fv \ri\|^2_{L^{2p}(\Omega)} \lf| \corneru \ri|^{1 - \frac{1}{p}},
 		}
 		for any $ p \in [2, \infty) $, where we have used \cref{lem: kinetic energy corner}. Plugging now \eqref{eq: vector potential p}, which reads for $ p' \in [2, +\infty) $
 		\beq
 			\label{eq: vector potential p'}
 			\lf\| \aavm - \fv \ri\|_{L^{p'}(\Omega)} = \OO(\eps^{7/4}),
		\eeq
		{thanks to} the Agmon decay \eqref{eq: agmon}, we get
		\bml{
			\lf\| \lf( \nabla + i \tx\frac{\aavm}{\eps^2} \ri) \glm \ri\|^2_{L^2(\corneru)} - \lf\| \lf( \nabla + i \tx\frac{\fv}{\eps^2} \ri) \glm \ri\|^2_{L^2(\corneru)}\geq - C \lf[ \delta |\log\eps|^2 \ri. \\
			\lf. - \tx\frac{1}{\sqrt{\eps}} \lf(\tx\frac{1}{\delta} + 1\ri)  \lf(\eps^2 |\log\eps|^2 \ri)^{1 - \frac{1}{p}} \ri] \geq - C \eps^{3/4 - 1/p} |\log\eps|^2
		}
		after an optimization over $ \delta $ (i.e., taking $ \delta = \eps^{3/4-1/p} |\log\eps|^{{-1/p}} $). The proof is then completed by exploiting the invariance of the energy under combined translations and global gauge change.		
	\end{proof}

\subsection{Rectification of the corner regions}
\label{sec: rectification}

We aim at estimating from below the energy close to the corners ({first} term on the r.h.s. of \eqref{eq: replacement corners}) by the minimal energy of the model problem introduced in \eqref{eq: ecorn} and discussed in \cref{sec: corner effective}. To this purpose there are two difficulties to overcome. First, one needs to force the boundary conditions in the minimization of the corner energies appearing in \eqref{eq: replacement corners}, since an unconstrained minimization would lead to unwanted contribution from the normal boundaries at $ s_j \pm \leps $. Such a problem will however be solved  by exploiting \cref{pro: DN corner}.

The second issue is less trivial: the model problem is indeed defined on a domain whose boundary is straight, while typically the boundaries of the domains $ \Gamma_j $ have a non-trivial curvature, as in \cref{fig: corner curve}. Of course, being the corner regions rather small and the curvature bounded,  the corrections induced by this adjustment are of lower order (\cref{lem: rectification}). 

Let us introduce some notation: we are going to denote by $ \cornerr $ the corner region of opening angle $ \beta_j $ with straight sides, longitudinal length $ \leps $ and normal width $ \elle (1 + o(1)) $. We do not require the inner boundaries of $ \cornerr $ to be straight since the exponential decay of any GL minimizer makes such a boundary irrelevant. We also choose the coordinates in such a way that the corner coincides with the origin. Hence, except for the inner boundaries, $ \cornerr $ coincide with the region described in \cref{fig: corner}, up to a rotation:
\beq
	\label{eq: rotation}
	\cornerr \simeq \mathcal{R} \cornere,
\eeq
with $ \mathcal{R} $ a rotation around the axis perpendicular to the plane passing through the corner.

	\begin{lem}[Rectification of the corner]
		\label{lem: rectification}
		\mbox{}	\\
		Let $ \psi_j $ be the $ H^1 $ function in \eqref{eq: replacement corners}. Then, there exists a diffeomorphism $ \rrv: \Gamma_j \to \cornerr $, so that, setting $ \widetilde\psi_j(\rrv) : = \psi_j(\rv(\rrv)) $,
		\beq
			\label{eq: rectification}
			 \glf_1\lf[\psi_j, \fv; \Gamma_j \ri] = \glf_1\big[\widetilde\psi_j, \fv; \cornere \big] + \OO(\eps \logi).
		\eeq
	\end{lem}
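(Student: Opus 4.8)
\textbf{Proof sketch for Lemma~\ref{lem: rectification}.}
The plan is to construct an explicit diffeomorphism $\rrv \colon \Gamma_j \to \cornerr$ which straightens the smooth pieces of $\partial\Gamma_j$ while leaving the opening angle $\beta_j$ unchanged, and then to track how each term of the functional $\glf_1[\,\cdot\,,\fv;\,\cdot\,]$ transforms under this change of variables. The key point is that $\Gamma_j$ is the $\eps^{-1}$-rescaling of a region of diameter $\OO(\eps|\log\eps|)$ sitting around a corner of $\partial\Omega$; after rescaling, its outer boundary consists of two smooth arcs meeting at the origin with interior angle $\beta_j$, each of length $\leps = c_2|\log\eps|$ and with curvature $\OO(\eps)$ in the rescaled variables (the curvature of $\partial\Omega$ is $\OO(1)$, and it picks up a factor $\eps$ under the blow-up). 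So each arc differs from the straight segment it should be mapped to by a $C^\infty$ perturbation of size $\OO(\eps|\log\eps|^2)$ in sup norm over a length $\OO(|\log\eps|)$, with all derivatives controlled by powers of $\eps$ times powers of $|\log\eps|$. First I would build the map near each of the two outer arcs using the tubular coordinates $(\ss,\tt)$ adapted to that arc: send the point at arclength $\ss$ and normal distance $\tt$ to the point at the same $(\ss,\tt)$ relative to the straightened segment. These two local maps agree to leading order near the vertex (both are the identity there since the arcs are tangent to the segments at the corner), and I would patch them together with a cutoff supported in a ball of radius $\OO(1)$ around the vertex, and extend arbitrarily (smoothly) across the small inner boundary region. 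One checks that the resulting $\rrv$ is a diffeomorphism with Jacobian $D\rrv = \mathrm{Id} + \OO(\eps|\log\eps|^2)$ and similar bounds on $(D\rrv)^{-1}$ and on $\mathrm{det}\,D\rrv$.

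Once the map is in hand, the estimate \eqref{eq: rectification} follows by comparing the two functionals term by term. Writing $\widetilde\psi_j(\rrv) := \psi_j(\rv(\rrv))$, the gradient term transforms as $(\nabla_{\rv}\psi_j)(\rv(\rrv)) = (D\rrv)^T (\nabla_{\rrv}\widetilde\psi_j)$, so
\[
	\int_{\Gamma_j} |(\nabla + i\fv)\psi_j|^2 \, d\rv
	= \int_{\cornerr} \big|(D\rrv)^T(\nabla + i\widetilde{\fv})\widetilde\psi_j\big|^2 \, |\mathrm{det}\,D\rrv|^{-1} \, d\rrv,
\]
where $\widetilde\fv = (D\rrv)^{-1}\fv\circ\rv$; using $D\rrv = \mathrm{Id} + \OO(\eps|\log\eps|^2)$ and the a priori kinetic bound $\|(\nabla+i\fv)\psi_j\|_{L^2(\Gamma_j)} = \OO(|\log\eps|)$ from \cref{rem: kinetic energy corner}, this differs from $\int_{\cornerr}|(\nabla + i\fv)\widetilde\psi_j|^2$ by $\OO(\eps|\log\eps|^2 \cdot |\log\eps|^2) = \OO(\eps\logi)$. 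Here one also has to absorb the discrepancy between $\widetilde\fv$ and $\fv$ itself, which is again $\OO(\eps|\log\eps|^2)$ pointwise on a set of area $\OO(|\log\eps|^2)$ where $|\psi_j| = \OO(1)$, contributing the same order; the cross term $2\Re\langle \nabla\widetilde\psi_j, i(\widetilde\fv - \fv)\widetilde\psi_j\rangle$ is handled by Cauchy--Schwarz against the kinetic bound. The potential term $\frac{1}{2b}\int(2|\psi|^2 - |\psi|^4)$ transforms just by the Jacobian factor, and since $|\psi_j|\le C$ on a region of area $\OO(|\log\eps|^2)$, the error is $\OO(\eps|\log\eps|^2\cdot|\log\eps|^2) = \OO(\eps\logi)$ as well.

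The main obstacle is bookkeeping rather than conceptual: one must ensure the patching near the vertex does not destroy the opening angle or introduce a Jacobian error larger than $\OO(\eps\logi)$, and one must confirm that the $L^\infty$ bounds on $\psi_j$ and $\nabla\psi_j$ (inherited from \eqref{eq: est infty} and \eqref{eq: est grad infty} through \cref{lem: replacement corners} and \cref{rem: kinetic energy corner}) hold up to the corner, so that the region of area $\OO(\eps^2|\log\eps|^2)$ immediately around the vertex—where the gradient bound may degenerate—contributes only $\OO(\eps\logi)$ by crude estimates using $|\psi_j|\le 1$. The inner boundary region, where $\psi_j$ is exponentially small by the Agmon estimates \eqref{eq: agmon 1}--\eqref{eq: agmon 2}, contributes a negligible $\OO(\eps^\infty)$ regardless of how $\rrv$ is defined there, which is why no straightening of the inner boundary is required. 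Assembling these pieces gives \eqref{eq: rectification}. \qed
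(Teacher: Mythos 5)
Your proposal is correct and follows essentially the same strategy as the paper: a diffeomorphism with Jacobian $\mathrm{Id} + \OO(\eps|\log\eps|)$ built from the boundedness of the curvature, combined with the a priori kinetic bound of \cref{rem: kinetic energy corner} and a Cauchy--Schwarz/optimization step, yields the $\OO(\eps\logi)$ error, while the inner boundary is disposed of via the Agmon decay. The only real difference is in the explicit construction of the map: the paper works globally in polar coordinates centered at the vertex (rescaling $\varrho$ and linearly reparametrizing $\vartheta$ between the outer-boundary angles), which sidesteps your patching of two tubular-coordinate charts near the vertex, but both constructions deliver the same Jacobian estimate and the subsequent energy comparison is identical.
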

	
	\begin{proof}
		We want to map the region $ \Gamma_j $ onto $ \cornerr $ via a suitable diffeomorphism and exploit the fact that, thanks to the boundedness of curvature and the size of the region, such a map is suitably close to the identity. A similar a trick has already been used, e.g., in \cite{BNF}. Indeed, there exists a smooth map $ \rrv(\rv) : \Gamma_j \to \cornerr $ which is one-to-one, such that $ \rrv(0) = 0 $ and 
		\beq
			\label{eq: diffeomorphism}
			R_j(\rv) = r_j \lf( 1 + \OO(\eps |\log\eps|) \ri),		\qquad		\partial_j R_k(\rv) = \delta_{jk} +  \OO(\eps \logi).
		\eeq	
		Using such a map, we get
		\beq
			\label{eqp: rectification 1}
			\glf_1\lf[\psi_j, \fv; \Gamma_j \ri] = \int_{\cornerr} \diff \rrv \lf\{ \lf| \lf( J \nabla_{\rrv} + i \widetilde\fv\ri) \widetilde\psi_j \ri|^2 - \tx\frac{1}{2b} \lf( 2 \big| \widetilde\psi_j \big|^2 - \big| \widetilde\psi_j \big|^4 \ri) \ri\},
		\eeq
		where $ J $ is the jacobian matrix associated to the change of coordinates $ \rv \to \rrv $ and
		\beq
			\label{eq: rectification 1}
			\widetilde\psi_j(\rrv) : = \psi_j(\rv(\rrv)),		\qquad			\widetilde\fv(\rrv) : = \fv(\rv(\rrv)).
		\eeq
		By \eqref{eq: diffeomorphism}, 
		\beq
			\widetilde\fv(\rrv) = \tx\frac{1}{2} \rrv^{\perp} \lf(1 + \OO(\eps|\log\eps|) \ri).
		\eeq
		Therefore, we can estimate from below the r.h.s. of \eqref{eqp: rectification 1} exactly as in \eqref{eqp: replacement}, using \eqref{eq: kinetic energy corner fv} in \cref{rem: kinetic energy corner}, to get
		\bml{
			\glf_1\lf[\psi_j, \fv; \Gamma_j \ri] \geq \int_{\cornerr} \diff \rv \lf\{ \lf| \lf( \nabla + i \fv \ri) \widetilde\psi_j \ri|^2 - \tx\frac{1}{2b} \lf( 2 \big| \widetilde\psi_j \big|^2 - \big| \widetilde\psi_j \big|^4 \ri) \ri\} \\
			 - C \lf[ \delta |\log\eps|^2 + \lf(1 + \tx\frac{1}{\delta}\ri) \eps^2 |\log\eps|^4 \ri] 			\geq \glf_1\big[\widetilde\psi_j, \fv; \cornerr \big] + \OO(\eps\logi).
		}
		The last step is then the replacement of the region $ \cornerr $ with $ \cornere $, which can be done exploiting the rotational invariance of the GL functional, and the exponential decay of $ \glm $ given by \eqref{eq: agmon} (and thus of $ \psi $).
	\end{proof}

\subsection{Completion of the lower bound}
\label{sec: completion}
 We are now in position to complete the proof of the lower bound.

	\begin{proof}[Proof of \cref{pro: lower bound}]
		Combining the results proven in \cref{lem: replacement smooth}, \cref{lem: replacement corners} and \cref{lem: rectification}, we get
		\beq
			\label{eqp: lb start}
			\gle \geq  \gep[\psi; \acut] + \sum_{j = 1}^N \glf_1\big[\widetilde\psi_j, \fv; \cornere \big]  {+  \OO(\eps^{3/5})}.
		\eeq
		At this stage the energy contributions of the smooth part of the domain and its complement have been completely decoupled, so we can bound them from below separately. In fact, the lower bound to $ \gep[\psi; \acut] $ can be simply taken from \cite[Prop. 4]{CR2}: 
		\bml{
			\label{eqp: first lb}
			\gle \geq  \disp\frac{|\partial\Om| \eoneo}{\eps} - 2 \leps \eoneo - \ecorr \int_{0}^{|\partial \Omega|} \diff \ss \: \curv  \\
			+ \sum_{j = 1}^N \lf[ \glf_1\big[\widetilde\psi_j, \fv; \cornere \big] -  \int_0^{c_0|\log\eps|}\diff t\, \lf.\frac{F_0(t)}{f_0^2(t)} j_t \lf[\psi(s,t)\ri] \ri|_{s=s_j - L_\eps}^{s= s_j +L_\eps} \ri]
			{+  \OO(\eps^{3/5})}. 
		}	
		Now, recalling the definitions of $ \psi $ in \cref{lem: replacement smooth} and $ \widetilde{\psi}_j $ in \cref{lem: rectification}, we claim that
		\beq
			\label{eq: reconstruction gt}
			\glf_1\big[\widetilde\psi_j, \fv; \cornere \big]  - \int_0^{c_0|\log\eps|}\diff t\, \lf.\frac{F_0(t)}{f_0^2(t)} j_t \lf[\psi(s,t)\ri] \ri|_{s=s_j - L_\eps}^{s= s_j +L_\eps}
			= \Gt_{\fv}\big[\widetilde{\psi}_j \big] + \OO({\eps^{3/5}}),
		\eeq
		where the functional $ \Gt_{\fv} $ is defined in \eqref{eq: gltfv en corner}.
		
		We note that, because of the rigid translation and rotation, the boundaries $ \partial\Gamma_{\beta_j, \mathrm{bd}} $ coincides with the portion of the lines $ s = s_j \pm \leps $ in $ \mathcal{A} $. Therefore, in order to replace $ \psi $ with $ \widetilde\psi_j $ in the boundary terms in \eqref{eq: reconstruction gt}, we need to estimate the contribution of the gauge phase as well as the effect of the rectification. 
		Next, we observe that
		\bml{
			\lf. j_t[\psi] \ri|_{s = s_j \pm \leps} 
			= \lf.  j_t\lf[ \psi_j((\rv(\eps s, \eps t)) - \rv_j)/\eps) e^{-i\phi_\eps(s,t)} e^{ - i \frac{ \fv(\rv_j) \cdot \rv'}{\eps} } \ri] \ri|_{s = s_j \pm \leps}	\\
			= \lf.  \lf[ j_t\lf[ \psi_j((\rv(\eps s, \eps t)) - \rv_j)/\eps) \ri] - \partial_t \lf( \phi_\eps(s,t) + \tx\frac{1}{\eps} \fv(\rv_j) \cdot \rv' \ri) \ri] \ri|_{s = s_j \pm \leps}.
		}
		Recalling \eqref{eq: rectification 1}, we immediately see that the first term on the r.h.s. of the expression above in fact reconstructs the boundary terms in $ \Gt_{\fv} $. Moreover, by direct computation
		\bdm
			- \partial_t \lf( \phi_\eps + \tx\frac{1}{\eps} \fv(\rv_j) \cdot \rv' \ri) = \tx\frac{1}{\eps} \lf( \aavm(\rv) - \fv(\rv_j) \ri) \cdot \ev_t  = \tx\frac{1}{\eps} \lf( \aavm(\rv) - \fv(\rv) \ri) \cdot \ev_t + \frac{1}{2} {\rv^{\prime}}^{\perp} \cdot \ev_t,
		\edm
		where we have used the change of coordinates $ \rv = \rv_j + \eps \rv^{\prime} $. Furthermore, the properties of the diffeomorphism discussed in the proof of \cref{lem: rectification} imply that
		\bdm
			\tx\frac{1}{2} {\rv^{\prime}}^{\perp} \cdot \ev_t = \tx\frac{1}{2} s + \OO(\eps \logi).
		\edm
		Hence, each boundary term can be rewritten
		\bml{
			\int_0^{c_0|\log\eps|}\diff t\, \lf. \frac{F_0(t)}{f_0^2(t)} j_t \lf[\psi(s,t) \ri] \ri|_{s=s_j - L_\eps}^{s= s_j +L_\eps}  =  \int_0^{c_0|\log\eps|}\diff t\, \lf. \frac{F_0(t)}{f_0^2(t)} \lf[  j_t \lf[\widetilde\psi(\rv(s,t))\ri] + \tx\frac{1}{2} s \ri] \ri|_{s=- L_\eps}^{s= +L_\eps}  \\
			 - \frac{1}{\eps} \int_0^{c_0|\log\eps|}\diff t\, \lf. \frac{F_0(t)}{f_0^2(t)} \lf( \aavm(\rv(s,t)) - \fv(\rv(s,t)) \ri) \cdot \ev_t \ri|_{s=s_j - L_\eps}^{s= s_j +L_\eps} + \OO(\eps\logi)
		}
		and it only remains to bound the last term. This can be done exploiting once more \eqref{eq: vector potential p}: setting $ g(\rv) : = F_0(t(\rv))/f_0^2(t(\rv)) $ for short and using the vanishing of  $ {F_0} $ at $  t = 0, c_1 |\log\eps| $, we {get}
		\bml{
			\label{eqp: reconstruction}
			\frac{1}{\eps^3} \lf| \int_{\corneru} \diff \rv \: \nabla \cdot \lf[ g \lf( \aavm  - \fv  \ri) \ri] \ri| 			= \frac{1}{\eps^3} \lf| \int_{\corneru} \diff \rv \: \lf( \nabla g  \ri) \cdot \lf( \aavm - \fv  \ri) \ri| 	\\
			\leq \frac{C |\log\eps|^5}{\eps^3} \lf\| \aavm - \fv \ri\|_{L^1(\corneru)},
		}
		where we have used that both $ \aavm $ and $ \fv $ are divergence free and the estimate
		\beq
			\lf| \nabla g(\rv) \ri| = \lf| \partial_t  \frac{F_0(t)}{f_0^2(t)} \ri| \leq 2 \lf| t + \al_0 \ri| + \lf| \frac{F_0(t) f_0^{\prime}(t)}{f_0^3(t)} \ri| \leq C |\log\eps|^5,
		\eeq
		by \eqref{eq: fol}, \eqref{eq: fkprime decay} and the simple bound \eqref{eq: fol bound}. On the other hand, by \eqref{eq: vector potential p'}, 
		\beq
			\lf\| \aavm - \fv\ri\|_{L^1(\corneru)} \leq \lf\| \aavm - \fv\ri\|_{L^p(\corneru)} \lf| \corneru \ri|^{1 - \frac{1}{p}} 
			\leq C \eps^{7/4} \lf( \eps^2 |\log\eps| \ri)^{1 - \frac{1}{p}},
		\eeq
		for $ p\in [2, +\infty) $, which implies \eqref{eq: reconstruction gt} via \eqref{eqp: reconstruction}.
		
		Putting together \eqref{eqp: first lb} with \eqref{eq: reconstruction gt} and observing that, by the Agmon decay, we can impose the boundary condition $ \widetilde{\psi} = \psi_0  $ along the interior boundary $ \bdi $ up to $ \OO(\eps^{\infty}) $ errors, we thus get
		\bml{
			\gle \geq  \disp\frac{|\partial\Om| \eoneo}{\eps} - 2 \leps \eoneo - \ecorr \int_{0}^{|\partial \Omega|} \diff \ss \: \curv \\
			+ \sum_{j \in \Sigma} \inf_{\psi \in \widetilde\dom_{\star}(\Gamma_{\beta_j}(\leps,\elle))} \Gt_{{\fv}}\lf[ \widetilde\psi, \Gamma_{\beta_j}(\leps,\elle) \ri]
			+ \OO(\eps^{3/5}) \\
			= \disp\frac{|\partial\Om| \eoneo}{\eps} - \ecorr \int_{0}^{|\partial \Omega|} \diff \ss \: \curv
			+ \sum_{j \in \Sigma} \lf( \Et_{\beta_j}(\leps,\elle) - 2 \leps \eoneo \ri)
			+ \OO(\eps^{3/5}).
		}
		The final step of the proof is the application of \cref{pro: DN corner}, which yields (see \eqref{eq: ecorn alt})
		\beq
			\Et_{\beta_j}(\leps,\elle) - 2 \leps \eoneo = E_{\mathrm{corner},\beta_j} + o(1),
		\eeq
		and thus the result.
	\end{proof}

\section{Other Proofs}	
\label{sec: other proofs}

We complete in this Section the proofs of the results proven in the paper, i.e., specifically, we prove the energy upper bound matching the lower bound proven in \cref{pro: lower bound}. Finally, we show how the energy asymptotics can be used to deduce a {pointwise estimate} of the order parameter.

\subsection{Upper bound and energy asymptotics}
\label{sec: upper bound}

We state the main result of this section in the following \cref{pro: upper bound}. Note that \cref{pro: lower bound} and \cref{pro: upper bound} together completes the proof of \cref{teo: gle asympt}, with the simple exception of the replacement of $ \eoneo $ with $ \eones $, which can be done up to remainders of order $ \OO(\eps^{\infty}) $ by \cref{lem: eoneo approx eones}.

	\begin{pro}[GL energy upper bound]
		\label{pro: upper bound}
		\mbox{}	\\
		Let $\Om\subset\R^2$ be any bounded simply connected domain satisfying \cref{asum: boundary 1} and \cref{asum: boundary 2}. Then, for any fixed $ 1< b < \theo^{-1} $,		 as $\eps \to 0$, it holds
		\beq
			\label{eq: upper bound}
			\gle  \leq \disp\frac{|\partial\Om| \eoneo}{\eps} - \ecorr \int_{0}^{|\partial \Omega|} \diff \ss \: \curv + \sum_{j = 1}^N E_{\mathrm{corner}, \beta_j} + o(1). 
		\eeq
	\end{pro}
	
	\begin{proof}
		As usual the upper bound is obtained by testing the GL functional on a suitable trial state. As a vector potential, we pick $ \mathbf{F} = \frac{1}{2} (-y,x) $. The order parameter on the other hand is much more involved: the idea is to recover the trial state given in \cite[Eq. (4.14)]{CR2} far from the corners and glue it to the minimizers of the effective energies in every corner. {To retain the curvature corrections, as in \cite{CR1} (see also \cref{sec: smooth energy}), we decompose the smooth part of the layer into cells of tangential length of order $ \eps $.} The order parameter is constructed in such a way that its modulus is close to $ f_{k_n}(\eps \tt) $ (see \cref{sec: 1d curvature}) in each cell, $ k_n $ being the average curvature, and to the modulus of the corner minimizer $ \psi_{\beta_j}(\rv) $ in the $j-$th corner region, respectively. The phase of $ \trial $ on the other hand is given by a gauge phase analogous to \eqref{eq: gauge phase}, but defined in terms of the vector potential $ \mathbf{F} $, plus the optimal phase $ \exp\{- i \alpha_{k_n} \ss / \eps \} $  in each  cell. An additional phase is then added to patch together such factors.
		Explicitly, we set
		\beq
			\label{eq: trial order parameter}
			\trial(\rv) : = \chi\lf( t(\rv) \ri) \cdot
			\begin{cases}
				g(s(\rv),t(\rv)) e^{- i S(s(\rv))} e^{i \phitrial(s(\rv),t(\rv))},		&	\mbox{for } \rv \in \acutet,	\\
				\trialjt\lf(\mathcal{R}^{-1}(\rv-\rv_j)/\eps \ri),									&	\mbox{for } \rv \in \corneru,
			\end{cases}
		\eeq
		where $ \mathcal{R} $ is the rotation defined in \eqref{eq: rotation} and $ \varkappa_j $ a suitable phase factor, given in \eqref{eq: varkappaj} below. Moreover, for some $ a > 0 $,
		\beq
			\label{eq: acutet}
			\acutet : = \lf\{ \rv \in \acute \: \big| \: \lf| s(\rv) - s_j \ri| \geq \leps + \eps^{a}, \forall j {= 1, \ldots, N} \ri\},
		\eeq
		is a subdomain of $ \acute $ where boundary coordinates are well defined. In $ \acute \setminus \acutet $, we take care of the transition from the smooth part of the layer to the corner region: for any given $ j {= 1, \ldots, N}  $, we concretely set 
		\beq
			\trial(\rv) : = \zeta(s) f_0 \lf(t\ri) e^{- i S(s_j \pm \leps)} +\lf(1 - \zeta(s) \ri) f_{k_{\pm}}\lf(t\ri) e^{- i S(s_j \pm \leps \pm \eps^{a})},	
		\eeq
		for any $ |s(\rv) - s_j \pm \leps| \leq \eps^{a} $ and where we have denoted $ k_{\pm} $ the average curvature in the cells $ \cell_{\pm} $ adjacent to the $j$-th corner region. The smooth cut-off function $ \zeta $ is chosen in such a way that it is positive and
		\bdm
			\zeta(s_j \pm \leps) = 1,		\qquad		\zeta(s_j \pm \leps \pm \eps^{a}) = 0,	\qquad		|\zeta^{\prime}| \leq C \eps^{-a} .
		\edm
		Furthermore,
		\begin{itemize}
			\item the smooth cut-off function $ \chi(t) $ equals $ 1 $ for any $ t \leq c_1 |\log\eps| $ and vanishes for $ t \geq |\log\eps|^2 $, so that its gradient is bounded by $ \OO(|\log\eps|^{-2}) $;
			\item the gauge phase $ \phitrial $ is given by \eqref{eq: gauge phase} with $ \fv $ in place of $ \aavm $; to leading order such a phase equals $ - \frac{1}{2} st $, so recovering part of the phase of $ \psi_{\star} $ (recall \eqref{eq: psi star});
			\item the function $ g(s,t) $ is taken directly from \cite[Eq. (4.15)]{CR2} (see also  \cite[Eq. (4.18)]{CR2}): it equals $ f_{k_n} $ in $ \cell_n $, up to a smaller correction $ \chi_n $, which allows the continuous transition from $ f_{k_n} $ to $ f_{k_{n+1}} $;
			\item similarly, the phase $ S(s) $ is given by \cite[Eqs. (4.20) \& (4.21)]{CR2}: to leading order $ S(s) = - i \alpha_{k_n} s $ in $ \cell_n $, but, as for the density, one needs to add a higher order correction taking into account the jump from $ \alpha_{k_n} $ to $ \alpha_{k_{n+1}} $;
			\item $ \trialjt(\rv') $ is close to the minimizer of the effective energy \eqref{eq: edx} in $ \cornerj $, where $ \varkappa_j : = \varkappa_{+,j} - \varkappa_{-,j} $ and
			\beq
				\label{eq: varkappaj}
				\varkappa_{\pm, j} : = - S(s_j \pm \leps) + \al_0 (s_j \pm \leps) = - \int_0^{s_j \pm \leps} \diff \xi \: \lf( \alpha_{k(\xi)} - \alO \ri) + \OO(\logi).
			\eeq
			Applying the rectification procedure described in \cref{lem: rectification}, we set
			\beq
				\label{eq: trialjt}
				\trialjt(\rv') = \psi_{\beta_j,\varkappa_j}(\rrv(\rv)),
			\eeq
			where $ \rrv $ is the diffeomorphism of \cref{lem: rectification} and $ \psi_{\beta,\varkappa} $ any minimizer of \eqref{eq: edx}.
		\end{itemize}
		
		We now sketch the main steps in the computation of the energy of $ \lf( \trial, \mathbf{F} \ri) $, which were already discussed elsewhere and focus afterwards on the new estimates:		
		\begin{itemize}
			\item since $ \curl \mathbf{F} = 1 $ in $ \Omega $, the last term in the GL energy functional \eqref{eq: glf} vanishes;
			\item the integration can be restricted to $ \anne $, where the cut-off function $ \chi $ is 1 and all the rest of the energy can be discarded thanks to the exponential decay of the modulus of $ \trial $ as well as its derivatives (inherited from $ f_0$  and $ f_k $, see \eqref{eq: fk decay 1});
			\item the gauge phase $ \phitrial $ allows to replace $ \fv $ with $ \lf( - t + \frac{1}{2} \eps k(s) t^2 \ri) \ev_s $ in $ \acutet $, as in \cref{lem: replacement smooth} up to an error of order $ \OO(\eps \logi) $;
			\item the energy bound in $ \acutet $ is taken from \cite{CR2} and stated in \cref{pro: ub smooth}:
			\bdm
				\glf \lf[ \trial, \fv; \acute \ri] \leq \disp\frac{|\partial \Omega| \eoneo}{\eps} - 2 \leps N \eoneo - \eps \ecorr \int_{0}^{\frac{|\partial\Om|}{\eps}} \diff s \: k(s) + o(1).
			\edm
		\end{itemize}
		
		Given the discussion above, it remains to compute the energy of $ \trial $ in the region $ \acute \setminus \acutet $ as well as the energy contributions of all the corner regions $ \corneru $. Let us start by considering the latter: getting rid of the diffeomorphism up to small errors, close to each corner we recover $ E_{\beta_j,\varkappa_j}(\leps,\elle) = E_{\beta_j}(\leps,\elle) + o(1) $ by \cref{lem: twisted D energies}. Summing up, we get
		\beq
			\label{eqp: corner en}
			{\sum_{j = 1}^N} E_{\beta_j}(\leps,\elle) + o(1) 
			= {\sum_{j = 1}^N} E_{\mathrm{corner}, \beta_j} +  2 N \eoneo \leps + o(1),
		\eeq
		where we have exploited the existence of the limit proven in \cref{pro: ecorn}.
		
		Finally, let us consider the energy in $ \acute \setminus \acutet $ and restrict ourselves to the interval $ [s_j - \leps - \eps^a, s_j - \leps] $: the area of the region is of order $ \eps^{2+a} |\log\eps| $ and we can thus discard all the terms involving $ f_0 $, $ f_{k_{\pm}} $ and their derivatives there up to errors of order $ \OO(\eps^a\logi) $ by \eqref{eq: fkprime bound} and \eqref{eq: fk decay 1}. The only non-trivial term to estimate is thus the kinetic energy of the cut-off function $ \zeta $: by grouping together the terms in a convenient way, one has to bound at the boundary $ s_j - \leps $ the quantity
		\bml{
			C \int_{s_j - \leps - \eps^a}^{s_j -\leps} \diff s \int_0^{\elle} \diff t \: \lf|\zeta'(s)\ri|^2 \lf| f_0 \lf(t\ri) e^{-iS(s_j - \leps)} - f_{k_{-}}\lf(t\ri) e^{- i S(s_j - \leps - \eps^{a})} \ri|^2 \\
			\leq C \eps^{-2a} \int_{s_j - \leps - \eps^a}^{s_j -\leps} \diff s \int_0^{\elle} \diff t \:  \lf[ \lf| f_0 \lf(t\ri)  - f_{k_{-}}\lf(t\ri) \ri|^2  + f_0^2(t) \lf| e^{- i S(s_j - \leps)} - e^{- i S(s_j - \leps - \eps^{a})} \ri|^2 \ri]	\\
			\leq C \lf[ \eps^{1-a} + \eps^a \logi \ri],
		}
		by the identity
		\beq
			S(s) = \al_0 s + \int_0^{s} \diff \xi \: \lf( \alpha_{k(\xi)} - \alO \ri) + \OO(\logi). 
		\eeq

		Putting together all the energy contributions, we get \eqref{eq: upper bound}.
	\end{proof}

\subsection{Order parameter}
\label{sec: order parameter}
	
As proven in \cite{CG}, any minimizing $ \glm $ is such that its modulus is suitably close in $ L^2(\anne) $ to the 1D profile $ f_0(\dist(\: \cdot \:, \partial \Omega)/\eps) $. The presence of the corners affects the estimate only at the precision one can approximate $ |\glm| $ with $ f_0 $, since the result in \cite[Thm. 1.1]{CG} is proven by neglecting the corner regions. The improved energy asymptotics of \cref{teo: gle asympt} obviously suggests that such an estimate can in fact be strengthened. Indeed, we prove here that a pointwise estimate of the difference $ |\glm| - f_0 $ holds true in the smooth part of the boundary layer.

	\begin{proof}[Proof of \cref{pro: pan}]
		The starting point is the combination of the energy upper bound \eqref{eq: upper bound}, with the stronger lower bound which can be obtained by combining the arguments of the proof of \cref{pro: lower bound} with \cref{lem: sum En}: in each cell contained in the smooth part of the boundary layer, we can retain the positive contribution appearing on the r.h.s. of \eqref{eq: est sum En}. The final outcome is the estimate
		\beq	
			\label{eqp: starting BBH}
			\sum_{{n} = 1}^{M_{\eps}} \int_{\celln}\diff s\diff t\, (1-\eps k_nt)f_n^4(1-|u_n|^2)^2 = o(1).
		\eeq
		A direct consequence is the estimate stated in \eqref{eq: refined l2 estimate} in \cref{rem: refined l2 estimate}. Furthermore, \eqref{eqp: starting BBH} is the key ingredient of a typical argument (first used in \cite{BBH2}) to deduce a pointwise estimate of $ |u_n| $ and thus $ |\glm| = f_0 |u_n| $ (see \cite[Proof of Thm. 2, Step 2]{CR2} but also the proof of \eqref{eq: point est psi} in \cref{pro: point est psi}).
		
		Instead of providing all the details, we comment only on the needed adaptations. First of all, one has to select a subdomain of $ \celln $, where a suitable lower bound on the density $ f_n $ holds true. In our case, we can restrict the analysis to the layer $ \lf\{ \dist(\rv, \partial \Omega) \leq c \eps \ri\} $, where $ f_n $ is bounded from below by a positive constant independent of $ \eps $. As a consequence, the argument of \cite[Proof of Thm. 2, Step 1]{CR2} leads to
		\beq
			\lf| \nabla \lf|u_n \ri| \ri| \leq C,	\qquad		\mbox{in } \lf\{ \dist(\rv, \partial \Omega) \leq c \eps \ri\} \cap \celln.
		\eeq
		With such a bound at disposal, the aforementioned argument applies straightforwardly in the boundary region $ \lf\{ \dist(\rv, \partial \Omega) \leq c \eps \ri\} $, leading to the pointwise estimate
		\beq
			\lf| \lf| u_n \ri|  - 1 \ri| = o(1),		\qquad		\mbox{in } \lf\{ \dist(\rv, \partial \Omega) \leq c \eps \ri\} \cap \celln,
		\eeq
		which immediately yields \eqref{eq: pan}.
	\end{proof}

\appendix

\section{One-dimensional Effective Energies}
\label{sec: 1d}

In this Appendix we recall some known results about the effective one-dimensional problems, which are known to play a role in surface superconductivity. More details can be found, e.g., in \cite{CR1,CR2,CR3,CDR}.

\subsection{Effective model on the half-line}
\label{sec: 1d disc}

The model problem describing the behavior of the order parameter along the normal direction to the boundary $ \partial \Omega $ in the surface superconductivity regime is given in first approximation by the energy
\begin{equation}
	\label{eq: fone}
	\fone_{\star,\alpha}[f] := \displaystyle\int^{+\infty}_0 \mbox{dt} \left\{ |\partial_t f|^2 + (t+\alpha)^2 f^2 -\frac{1}{2b} (2f^2-f^4)\right\},
\end{equation}
where $ t $ is the rescaled distance to the boundary and $ \alpha \in \R  $ is a parameter.

For any $ \alpha \in \R $, the functional \eqref{eq: fone} admits (see, e.g., \cite[Prop. 3.1]{CR1} or \cite[Prop. 5]{CR2}) a unique minimizer in the domain $ \onedom = \lf\{ f \in H^1(\R^{{+}}; \R) \: | \: t f(t) \in L^2(\R^{{+}}) \ri\} $ with ground state energy $ \eone_{\star,\alpha} $. The optimal profile is obtained by optimizing over $ \alpha $, as in \eqref{eq: eones}. The infimum can be easily shown to actually be a minimum, i.e., there exists an $ \as \in \R $, where the minimum is achieved. We also denote by $ \fs $ the corresponding profile, i.e., the minimizer of $ \eone_{\star,\as} ${, which satisfies
\beq
	\label{eq: fs decay}
	0 < \fs \leq C  \exp \lf\{ - \tx\frac{1}{2} \lf( t + \as \ri)^2 \ri\}.
\eeq}

\subsection{Curvature-dependent one-dimensional models}
\label{sec: 1d curvature}

It is convenient to introduce a generalization of \eqref{eq: fone}, which takes into account the effects of the boundary curvature:
\begin{equation}
	\label{eq: fonekal}
	\fonekal[f] := \displaystyle\int^{\ell}_0 \mbox{dt} (1 - \eps k t) \left\{ |\partial_t f|^2 + \potkal(t) f^2 -\frac{1}{2b} (2f^2-f^4)\right\},
\end{equation}
where $ k \in \R $ is the rescaled mean curvature, which is assumed to be constant here,
\beq
	\label{eq: potkal}
	\potkal(t) = \frac{1}{(1 - \eps k t)^2} \lf(t + \alpha - \tx\frac{1}{2} \eps k t^2 \ri)^2
\eeq
and $ \ell = \ell(\eps) \gg 1 $ is an $ \eps$-dependent quantity satisfying
\beq
	\label{eq: ell conditions}
	|\log\eps| \lesssim \ell \ll \eps^{-1}.
\eeq
For any $ \alpha \in \R $ we denote the ground state energy of \eqref{eq: fonekal} by $ \eone_{k,\alpha} $. The corresponding optimal energy is
\beq
	\label{eq: eonek}
	\eonek : = \inf_{\alpha \in \R} \eone_{k,\alpha} = \inf_{\alpha \in \R} \inf_{f \in \onedomk} \fonekal[f],
\eeq
where $ \onedomk = H^1(\iell; \R) $, with $ \iell : = [0, \ell] $, and one can prove the existence of a minimizing $ \alpha_k \in \R $, i.e., $ \eonek = \eone_{k,\alpha_k} $. The corresponding profile is then denoted by $ \fk $, which is therefore the unique minimizer of $ \eone_{k,\alpha_k} $. We also set $ \fone_{k} : = \fone_{k, \alpha_k} $, accordingly. Note that, unlike $ \fone_{\star,\alpha} $, the new energy functional $ \fonekal $ depends on $ \eps $ in the measure, in the potential $ \potkal $ and possibly in the upper extreme of the integration domain $ \ell $. 

The dependence on the curvature $ k $ of the model problem \eqref{eq: eonek} is investigated in \cite[Props. 1 \& 2]{CR2}, where it is shown that all the relevant quantities are essentially continuous in $ k $. We sum up here the main properties of the limiting functionals \eqref{eq: fonekal} and the corresponding minimizers (see \cite[Sect. 3]{CR1} and \cite[Appendix A]{CR2} for the proofs):
\begin{itemize}
	\item $ \fk $ is a smooth non-negative function monotonically decreasing for $ t \gg 1 $, such that $ \lf\| \fk \ri\|_{\infty} \leq 1 $ and
		\beq
			\label{eq: fk var}
			- \fk^{\prime\prime} + \tx\frac{\eps k}{1 - \eps k t} \fk^{\prime} + \potkal(t) \fk = \tx\frac{1}{b} \lf( 1 - \fk^2 \ri) \fk,
		\eeq
		in $ \iell $ with Neumann boundary conditions $ \fk^{\prime}(0) = \fk^{\prime}(\ell) = 0 $;
	\item for any $ 1 \leq b < \theo^{-1} $, $ \fk $ is strictly positive and satisfies the {\it optimality condition}
		\beq
			\label{eq: optimal alk}
			\int_0^{\ell} \diff t \: \frac{1}{1 - \eps k t} \lf(t + \alk - \tx\frac{1}{2} \eps k t^2 \ri) \fk^2(t) = 0;
		\eeq
	\item for any $ k \in \R $,
		\beq
			\label{eq: eonek identity}
			\eonek = - \frac{1}{2b} \int_0^{\ell} \diff t \: (1 - \eps k t) \fk^4(t).
		\eeq

	\item for any $ 1 \leq b < \theo^{-1} $ and for $ \eps $ small enough, there exist two {positive and} finite constants $ {c, c'} > 0 $, such that
		\beq
			\label{eq: fk decay 1}
			{c} \exp \lf\{ - \tx\frac{1}{2} \lf( t + \frac{1}{2} \ri)^2 \ri\} \leq \fk(t) \leq {c'} \exp \lf\{ - \tx\frac{1}{2} \lf( t + \alk \ri)^2 \ri\}.
		\eeq
	
\end{itemize}

We add to the above bounds the following estimate, which is quite similar to what is proven in \cite[Lemma 9]{CR2}:

	\begin{lem}
		\label{lem: fkprime decay}
		\mbox{}	\\
		For any $ 1 \leq b < \theo^{-1} $ and for $ \eps \ll 1 $, there exists a finite constant $ C > 0  $, such that
		\beqn
			\label{eq: fkprime decay}
			\lf|\fk^\prime(t)\ri| & \leq & C\, e^{-\frac{1}{4}t^2},	\qquad		\mbox{for any } t \in [0,\ell],	\\
			\label{eq: fkprime bound}
			\lf|\fk^\prime(t) \ri| & \leq & C t^3 \fk(t),		\qquad 	\mbox{for any } t \in [1,\ell].
		\eeqn
	\end{lem}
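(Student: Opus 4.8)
The final statement to prove is \cref{lem: fkprime decay}, consisting of two bounds on $\fk^\prime$: a Gaussian pointwise decay \eqref{eq: fkprime decay} valid on all of $[0,\ell]$, and a comparison bound \eqref{eq: fkprime bound} controlling $\fk^\prime$ by $\bar t^3 \fk$ on $[0,\bar t]$.

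The plan is to proceed from the variational equation \eqref{eq: fk var}, namely
\bdm
	- \fk^{\prime\prime} + \tx\frac{\eps k}{1 - \eps k t} \fk^{\prime} + \potkal(t) \fk = \tx\frac{1}{b} \lf( 1 - \fk^2 \ri) \fk,
\edm
together with the already-established decay \eqref{eq: fk decay 1}--\eqref{eq: fk decay 2} for $\fk$ itself and the Neumann conditions $\fk'(0)=\fk'(\ell)=0$. For \eqref{eq: fkprime decay}, I would first rewrite the ODE as a first-order identity for $\fk'$ by noting that the coefficient of $\fk'$ equals $-\partial_t \log(1-\eps k t)$, so multiplying through by the integrating factor $(1-\eps k t)^{-1}$ (equivalently using the measure $(1-\eps k t)\,\diff t$ as in \eqref{eq: fonekal}) gives
\bdm
	\partial_t\lf[ (1-\eps k t)\fk^\prime \ri] = (1-\eps k t)\lf[ \potkal(t) - \tx\frac1b(1-\fk^2)\ri]\fk.
\edm
Integrating this from $t$ to $\ell$ and using $\fk^\prime(\ell)=0$ expresses $(1-\eps k t)\fk^\prime(t)$ as $-\int_t^\ell (1-\eps k \eta)[\potkal(\eta)-\tfrac1b(1-\fk^2)]\fk(\eta)\,\diff\eta$. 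Since $\potkal(\eta)=(\eta+\alk-\tfrac12\eps k\eta^2)^2/(1-\eps k\eta)^2 = \OO(\eta^2)$ uniformly for $\eta\le\ell\ll\eps^{-1}$, and $1-\eps k t$ is bounded above and below by positive constants on this range, the integrand is bounded by $C(1+\eta^2)\fk(\eta)$; plugging in the upper bound in \eqref{eq: fk decay 1}, $\fk(\eta)\le c'e^{-\frac12(\eta+\alk)^2}$, and estimating $\int_t^\ell (1+\eta^2)e^{-\frac12(\eta+\alk)^2}\diff\eta \le C e^{-\frac14 t^2}$ for $t$ large (and trivially for bounded $t$, where $\fk'$ is bounded by elliptic regularity) yields \eqref{eq: fkprime decay}. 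The exponent $\tfrac14$ rather than $\tfrac12$ simply absorbs the polynomial prefactor and the shift $\alk$.

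For \eqref{eq: fkprime bound}, I would use the same integral representation but now exploit that on $[0,\bar t]$ the profile $\fk$ does not oscillate and, more importantly, that $\fk(\eta)\ge \fk(\bar t)$ fails in general — instead I would note that $\fk$ is log-concave-ish / monotone past a fixed point, but the clean route is: from $(1-\eps k t)\fk^\prime(t) = -\int_t^\ell(1-\eps k\eta)[\potkal(\eta)-\tfrac1b(1-\fk^2)]\fk(\eta)\diff\eta$, split the integral at some $\bar t$; for $t\le\bar t$ the tail $\int_{\bar t}^\ell$ is exponentially small by \eqref{eq: fk decay 2}, hence $\OO(e^{-c\bar t^2})=\OO(\bar t^3\fk(t))$ using the lower bound $\fk(t)\ge c\, e^{-\frac12(t+\frac12)^2}\ge c\,e^{-\frac12(\bar t+\frac12)^2}$ in \eqref{eq: fk decay 1}; and for the main part $\int_t^{\bar t}$, bound $|\potkal(\eta)-\tfrac1b(1-\fk^2)|\le C(1+\eta^2)\le C\bar t^2$ and $\fk(\eta)\le \fk(t)$ when $\eta\ge t$ and $\fk$ is decreasing there (for $t$ past the turning point of \eqref{eq: fk decay 1}; for small $t$ one handles the finitely-many-scales region directly, where $\fk$ is bounded below by a constant and $\fk'$ by a constant), getting $|\fk^\prime(t)|\le C\bar t^2 \cdot \bar t \cdot \fk(t) = C\bar t^3\fk(t)$ — the extra factor $\bar t$ coming from the length of the interval $[t,\bar t]$. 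I expect the main obstacle to be the bookkeeping near $t=0$ and near the turning point of $\fk$, where monotonicity is not available: there one cannot directly replace $\fk(\eta)$ by $\fk(t)$, and I would instead invoke a uniform lower bound $\fk\ge c_0>0$ on $[0,t_0]$ for a fixed $t_0$ (which follows from strict positivity of $\fk$ and the convergence $\fk\to\fs$ from \cref{lem: fol approx fs}-type estimates, $\fs$ being strictly positive), together with the already-proven pointwise bound \eqref{eq: fkprime decay}, so that on $[0,t_0]$ we trivially have $|\fk'|\le C\le (C/c_0)\fk\le C\bar t^3\fk$. This reduces \eqref{eq: fkprime bound} to the region where $\fk$ is monotone decreasing, where the argument above applies cleanly, mirroring \cite[Lemma 9]{CR2}.
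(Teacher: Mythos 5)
Your proof of \eqref{eq: fkprime decay} is correct and is exactly the paper's argument: integrate the variational equation \eqref{eq: fk var} (with the weight $1-\eps k t$ as integrating factor) from $t$ to $\ell$, use $\fk'(\ell)=0$ to get $|\fk'(t)|\leq C\int_t^\ell \eta^2\fk(\eta)\,\diff\eta$, and conclude from the Gaussian decay of $\fk$. No issues there.

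For \eqref{eq: fkprime bound} there is a genuine gap in your tail estimate. You claim $\int_{\bar t}^{\ell}\eta^2\fk(\eta)\,\diff\eta=\OO(e^{-c\bar t^2})=\OO(\bar t^3\fk(t))$ by comparing with the lower bound $\fk(t)\geq c\,e^{-\frac12(\bar t+\frac12)^2}$. But the upper and lower Gaussian bounds in \eqref{eq: fk decay 1} have \emph{different centres}: the tail integral is bounded (sharply) by $C\bar t\, e^{-\frac12(\bar t+\alk)^2}$, while the lower bound on $\bar t^3\fk(\bar t)$ is $c\,\bar t^3 e^{-\frac12(\bar t+\frac12)^2}$, and since $\alk<0<\frac12$ their ratio contains a factor $e^{(\frac12-\alk)\bar t}$ that no polynomial in $\bar t$ absorbs. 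So the required inequality fails precisely at the worst point $t=\bar t$, where the whole integral is the tail. Your fallback of using monotonicity, $\fk(\eta)\leq\fk(t)$ on all of $[t,\ell]$, only yields $|\fk'(t)|\leq C\ell^3\fk(t)$, not $C\bar t^3\fk(t)$, so neither route in your proposal delivers the stated bound for intermediate $\bar t$ (say $1\ll\bar t\ll\ell$). The paper itself does not prove this part but defers to \cite[Lemma 9]{CR2}; the robust argument goes through the potential function rather than the integral representation: integrating the variational equation against $\fk'$ gives the analogue of \eqref{eq: Fs explicit}, namely $\fk'^2(t)=-F_k(t)+\potkal(t)\fk^2(t)-\tfrac1b(1-\fk^2)\fk^2+\OO(\eps\cdot\dots)$, and then the positivity of the cost function \eqref{eq: cost positive} gives $|F_k(t)|\leq \fk^2(t)$ on $[0,\btik]$, whence $|\fk'(t)|\leq C(1+t)\fk(t)$ there (with the cruder $|F_k|\leq C\ell^2\fk^2$ of type \eqref{eq: fol bound} handling the remaining $\OO(1)$ window near $t=\ell$). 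Alternatively one can establish a \emph{relative} decay $\fk(\eta)\leq C\fk(\bar t)e^{-(\eta-\bar t)}$ for $\eta\geq\bar t$ by a comparison argument on the ODE ($\fk''\geq\fk$ where $\potkal\geq 1+\tfrac1b$), which would repair your tail estimate; but some such additional ingredient is indispensable, since the two one-sided Gaussian bounds you invoke are not sharp enough to be played against each other.
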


	\begin{proof}
		For the proof of \eqref{eq: fkprime decay} we simply notice that, by integrating the variational equation \eqref{eq: fk var} multiplied by $ \fk(t)$ in $ [t, \ell] $ and using Neumann boundary conditions, we obtain
		\[
			\big|f_k^\prime(t)\big|\leq C\int_t^{\ell}\diff\eta\, \eta^2 \: f_k(\eta)
		\]
		Then, the result is a consequence of the decay of $ \fk $ \eqref{eq: fk decay 1}.

		The proof of \eqref{eq: fkprime bound} follows along the same lines of \cite[Proof of Lemma 9]{CR2}.
	\end{proof}

As first discussed in \cite[Sect. 3]{CR1}, a key role in the study of the effective 1D models is played by the following {\it potential function}
\bml{
	\label{eq: pot function}
	F_k(t):=2\int_0^t\diff\eta\,\frac{1}{1-\eps k \eta} \lf(\eta +\alk-\tx\frac{1}{2}\eps k \eta^2 \ri) f^2_k(\eta)	\\
	= - {f_k^{\prime}}^2(t) +  (t + \alk)^2 f_k^2(t) - \tx\frac{1}{b} \lf(1 - \frac{1}{2} f_k^2(t) \ri) f_k^2(t) + \OO(\eps k).
}
which heuristically provides the energy gain of a single vortex at a distance $ \eps t $ from the boundary. Similarly, the overall energy cost of a vortex is given by the {\it cost function}
\beq
	\label{eq: cost function}
	K_k(t) : =  \fk^2(t) + F_{k}(t),
\eeq

The properties of both functions are summed up below (see \cite[Sect. 3]{CR1} and \cite[Appendix A]{CR2}): for any $ 1 < b < \theo^{-1} $ and $ k \in \R $,
\begin{itemize}
	\item $ F_k(t) \leq 0 $, for any $ t \in [0, \ell] $;
	\item $ F_k(0) = F_k(\ell) = 0 $;
	\item let $ \btik > 0  $ be such that ($ \btik $ is uniquely defined by monotonicity of $ \fk $ for large $ t $)
		\beq
			\label{eq: annk}
			\annbk : = \lf\{ t \in (0, \ell) \: \big| \: \fk(t) \geq \ell^3 \fk(\ell) \ri\} = \lf[0, \btik \ri],
		\eeq
		then,
		\beq
			\label{eq: cost positive}
			K_k(t) \geq 0, 	\qquad		\mbox{for any } t \in [0,\btik].
		\eeq
\end{itemize}

	\subsection{Effective model on an interval with $ k = 0 $}
	\label{sec: 1d no curv}
	A special case of the 1D models discussed in the previous Section is the one obtained for $ k = 0 $. It is in fact an approximation of the 1D effective energy $ \eones $ obtained by minimizing the energy on a finite interval $ [0,\ell] $, $ \ell \gg 1 $, rather than in the whole of $ \R^+ $ (see \cite[Sect. 3]{CR1}). There is indeed a unique minimizing pair $ \fol, \alpha_0 $ of $ \fone_{0,\alpha}[f] $ over $f \in H^1([0,\ell]) $ positive and $ \alpha \in \R $. Like $ \fk $, $ \fol $ solves the variational equation \eqref{eq: fk var} in the interval $ [0,\ell] $ with $ \alpha_0 $ in place of $ \alk $. In addition, $ \fol $ satisfies Neumann boundary conditions
	\beq
		\label{eq: fol nbc}
		\fol^{\prime}(0) = \fol^{\prime}(\ell) = 0.
	\eeq
	Furthermore, all the properties \eqref{eq: optimal alk} -- \eqref{eq: fkprime bound} (with $ k = 0 $) hold true for $ \fol $ as well. In particular, $ f_0 $ is monotonically decreasing for $ t \geq t_0 $, where  $ t_0 $ is the unique maximum of $ f_0 $, satisfying
	\beq
		\label{eq: t0}
		0 < t_0 \leq |\alpha_0| + \tx\frac{1}{\sqrt{b}}.
	\eeq

	\begin{lem}
		\label{lem: eoneo approx eones}
		\mbox{}	\\
		For any $ 1 < b < \theo^{-1} $, if $ \ell \gg 1 $, then
		\beq
			\label{eq: eoneno approx eones}
			\eoneo = \eones + \OO(\ell^{-\infty}).
		\eeq
	\end{lem}

	\begin{proof}
		It suffices to prove that for any finite $ \alpha \in \R $,
		\beq
			\label{eq: en fol approx eone}
			\eone_{\star,\alpha} - \eone_{0,\alpha} = \exl,
		\eeq
		which immediately implies \eqref{eq: eoneno approx eones}, since the minima of both functionals are achieved for bounded $ \alpha $ (see, e.g., \cite[Cor. 3.2 \& Lemma 3.1]{CR1}).
		However, \eqref{eq: en fol approx eone} above is a trivial consequence of the exponential decays \eqref{eq: fs decay} and \eqref{eq: fk decay 1}.
	\end{proof}

	A very important consequence of 	the properties of $ \fol $  \cite[Prop. 3.5]{CR1} is that, as for $ \fk $, if we set (recall \eqref{eq: Fol})
	\beq
		\label{eq: Kol}
		K_0(t) : =  \fol^2(t) + \Fol(t),
	\eeq
	where 
	\bdm	
		\label{eq: fol}
		\Fol(t) : = 2\int_0^t\diff\eta\, \lf(\eta + \al_0 \ri) \fol^2(\eta) = - 2 \int_t^\ell \diff \eta \, \lf(\eta + \al_0 \ri) \fol^2(\eta),
	\edm
	then, for $ 1 {<} b < \theo^{-1} $,
	\beq
		\label{eq: kol positive}
		K_0(t) \geq 0, 	\qquad		\mbox{for any } t \in \annol,
	\eeq
	with
	\beqn
		\label{eq: annol}
		\annol &: =& \lf\{ t \in (0,\ell) \: \big| \: \fol(t) \geq \ell^{3} \fol(\ell) \ri\} = \lf[0, \bar{\ell} \ri];	\\
		\label{eq: barell}
		\bar{\ell} &=& \ell + \OO(1).
	\eeqn
	In the whole interval $ [0, \ell] $, we can use \eqref{eq: pot function} in combination with \eqref{eq: fkprime bound} to estimate
	\beq
		\label{eq: fol bound}
		\lf| F_0(t) \ri| = - F_0(t) \leq C t^6 f_0^2(t),	\qquad		\forall t \in [1, \ell].
	\eeq

	\begin{remark}[Positivity of the cost function]
		\label{rem: negativity K0}
		\mbox{}	\\
		An interested reader might wonder whether $ K_0(t)  $ is in fact positive in the whole of $ [0, \ell] $. There is however a simple argument showing that this is not the case and $ \exists  t_{m} \in {\iell} $, where
		\beq
			K_0(t_m) < 0.
		\eeq
		To prove this, one first note that $ K_0 $ is convex. Hence, since $ K_0^{\prime}(0) = 2\alO f_0^2(0) < 0 $, $ K_0^{\prime}(\ell) = 2(\ell + \alO)^2 f_0^2(\ell) > 0 $, there must be a minimum point at $ 0 < t_m < \ell $. In fact, by a close inspection of the condition $ K_0^{\prime}(t_m) = 0 $, it is possible to prove that $ t_m = \ell + \OO(1) $. However, the analogue of \eqref{eq: pot function} and the criticality condition $ K^{\prime}(t_m) $ imply that
		\bdm
			K_0(t_m) = \lf(1 - \tx\frac{1}{b} + \tx\frac{1}{2b} f_0^2(t_m) \ri) f_0^2(t_m) - \ell^2f_0^2(\ell) (1 + o(1)) = - \lf(\ell^2 + o\lf(\ell^2\ri) \ri) f_0^2(\ell)< 0.
		\edm
	\end{remark}
	
	{We complete the section, by showing that the positivity in \eqref{eq: kol positive} can in fact be strengthen and promoted to a sort of coercivity of $ K_0 $.}

	{\begin{pro}[Coercivity of $ K_0 $]
		\label{pro: new positive}
		\mbox{}	\\
		For any $ 1 < b < \theo^{-1} $, if $ \ell \gg 1 $, there exists a constant $ c_b > 0 $, such that
		\beq
			\label{eq: Kol coercive}
			K_0(t) \geq c_b f_0^4(t),	\qquad \mbox{for any } t \in \annol.
		\eeq	
	\end{pro}}
	
	\begin{proof}
		{The proof idea is quite similar to the one used in the proof of \cite[Prop. 3.5]{CR1}. We provide the details for the sake of completeness. We set
		\beq
			\label{eqp: Kt positive}
			\Kt(t) : = K_0(t) - \tx\frac{1}{\ell^{4}} f_0^2(t) - c_b f_0^4(t) + \gamma_{\ell},
		\eeq
		where $ \gamma_{\ell} : =  (\ell + \alO)^2 \fol^2(\ell) - \tx\frac{1}{b} \lf(1 - \frac{1}{2} \fol^2(\ell) \ri) \fol^2(\ell) $, so that, by the identity \eqref{eq: pot function}, one gets
		\beq
			\label{eqp: Kt alternative}
			\Kt(t) = \lf[ 1 - \tx\frac{1}{\ell^{4}} - \tx\frac{1}{b} + \lf(\tx\frac{1}{2b} - c_b \ri) f_0^2(t) \ri] f_0^2(t) - {\fol^{\prime}}^2(t) +  (t + \alO)^2 \fol^2(t).
		\eeq
		Now, if one can prove that $ \Kt \geq 0 $ in $ [0,\ell] $, the result  then easily follows because
		\bdm 
			\min_{\annol} \lf( \tx\frac{1}{\ell^{4}} f_0^2(\ell) - \gamma_{\ell} \ri) \geq \min_{\annol} \lf[  \tx\frac{1}{\ell^{4}} f_0^2(t)  - \ell^2 \fol^2(\ell)  \ri] \geq 0,
		\edm
		for $ \ell $ large enough.}
		
		{Let us then address the positivity of \eqref{eqp: Kt positive}: at the boundary of the interval we have
		\beq
			\Kt(0) \geq \lf( 1 - \tx\frac{1}{\ell^4} - c_b f_0^2(0) \ri) f_0^2(0) > 0,	\qquad		\Kt(\ell) \geq \lf(1 - \tx\frac{1}{\ell^4} - c_b f_0(\ell)^2 \ri) f_0^2(\ell) > 0,
		\eeq
		if $ c_b < 1/f_0^2(0) $ and $ \ell \gg 1 $. Hence, the function can become negative only in the interior of $ I_{\ell} $, so let us look for its minimum points $ t_m $, which must satisfy $ \Kt^{\prime}(t_m) = 0 $, yielding
		\bdm
			\lf(1 - \tx\frac{1}{\ell^4} - 4 c_b f_0^2(t_m) \ri) f_0^{\prime}(t_m) = - \lf( t_m + \alO \ri) f_0(t_m).
		\edm
		If we now take $ c_b < 1/(4 \lf\| f_0 \ri\|_{\infty}^2) $, we can solve the above identity w.r.t. to $ f'(t_m) $ and plug it into \eqref{eqp: Kt alternative}, so obtaining
		\bml{
			\Kt(t_m) = \lf[ 1 - \tx\frac{1}{\ell^{4}} - \tx\frac{1}{b} - \frac{2 \lf(\tx\frac{1}{\ell^4} + 4c_b f_0^2(t_m)\ri)\lf(1 - \tx\frac{1}{2 \ell^4} - 2c_b f_0^2(t_m) \ri)}{\lf(1 - \frac{1}{\ell^4} - 4 c_b f_0(t_m) \ri)^2} (t_m + \alO)^2 \ri] \fol^2(t_m)	\\
			+  \lf(\tx\frac{1}{2b} - c_b \ri) f_0^4(t_m) \geq \lf[ 1 - \tx\frac{C}{\ell^2} - \tx\frac{1}{b} - 8 c_b t_m^2 f_0^2(t_m) \ri] \fol^2(t_m) \geq 0,
		}
		if $ \ell \gg 1 $ and $ c_b $ is taken small enough so that $ c_b \leq \frac{1}{2b} $ and $ c_b < \lf(1 - \tx\frac{1}{b} \ri)/ ( 8 \lf\| t f_0 \ri\|_{\infty}^2 ) $ (recall \eqref{eq: fk decay 1}). Putting all the conditions together, we see that the result is proven if we take
		\bdm
			c_b < \min \lf\{ \tx\frac{b - 1}{8 b \lf\| t f_0 \ri\|_{\infty}^2}, \frac{1}{2b}, \frac{1}{4 \lf\| f_0 \ri\|_{\infty}^2}, \frac{1}{f_0^2(0)} \ri\}.
		\edm}
	\end{proof}

\section{Technical Estimates}
\label{sec: technical}

In this Appendix we collect several technical estimates, which are used in the paper. Throughout this Appendix, $ \Omega $ will denote a bounded and simply connected domain $\Omega\subset \mathbb{R}^{2}$ satisfying \cref{asum: boundary 1} and \cref{asum: boundary 2}.
We recall that for any bounded domain $\Om\subset\mathbb{R}^2$ with locally Lipschitz boundary, all the usual Sobolev embeddings hold true \cite[Thm. 5.4]{Ad}. In particular, in what follows, we often use that, given a domain $ \Omega $ with the strong local Lipschitz property (see \cite[Def. 4.5]{Ad}), for all $p\in [2,\infty)$ and for all $\alpha\in [0,1)$,
\beq\label{eq: Sobolev emb}
	H^{1}(\Om)\hookrightarrow L^p(\Om), \qquad H^2(\Om)\hookrightarrow W^{1,p}(\Om), \qquad W^{2,p}(\Omega)\hookrightarrow C^{0,\alpha}(\overline{\Om}),
\eeq
where $ C^{0,\alpha} $ stands for the space of H\"{o}lder continuous functions with exponent $ \alpha $.
We also note that the diamagnetic inequality is verified in a piecewise smooth domain as well, i.e., for every $\aav\in L^2_{\mathrm{loc}}(\mathbb{R}^2; \R^2)$, $\psi\in L^2_{\mathrm{loc}}(\mathbb{R}^2)$ such that $(\nabla + i \aav)\psi\in L^2_{\mathrm{loc}}(\mathbb{R}^2)$, one has
\beq\label{eq: diamagnetic}
	|\nabla|\psi||\leq |(\nabla + i \aav)\psi|,	\qquad		\mbox{for a.e. }  \rv \in \Omega.
\eeq

\subsection{Minimization of the GL energy} 
\label{sec: minimization}
For the sake of completeness, we briefly discuss the minimization of the GL functional in domains with Lipschitz boundary. The material is mostly taken from \cite{FH1} (see in particular \cite[Chpt. 15 \& Sect. D.2.3]{FH1}).

As proven in \cite[Thm. 15.3.1]{FH1}, there exists a minimizing pair $ (\glm, \aavm) $ for $ \glfk[\psi,\aav] $, such that $ (\psi, \aav - \mathbf{F}) \in H^1(\Omega) \times W^{1,2}_{0,0}(\R^2) ${, where $ W^{1,2}_{0,0}(\R^2) $ is a suitable Sobolev space properly defined in \cite[Eq. (D.12)]{FH1} and $ \mathbf{F} $ given in \eqref{eq: fv})}. In addition, we may fix the gauge in such a way that
\beq
	\label{eq: no div}
	\nabla \cdot \aavm = 0.
\eeq
This determines the potential up to an additive constant, which can be chosen so that
\beq
	\label{eq: A est curl}
	\lf\| \aavm - \mathbf{F} \ri\|_{H^1(\Omega; \R^2)} \leq C \lf\| \curl \aavm - 1 \ri\|_{L^2(\R^2)},
\eeq
which in turn implies \cite[Lemma 15.3.2]{FH1} that $ \curl \lf( \aavm - \mathbf{F} \ri) = 0 $ or, equivalently,
\beq
	\curl \aavm = 1,		\qquad		\mbox{in } \R^2 \setminus \Omega.
\eeq
Hence, when we evaluate $ {\glfe} $ on the minimizing configuration, we may restrict the integration domain in the last term in \eqref{eq: glf} to $ \Omega $. 

Finally, any critical point $ (\psi, \aav) $ of $ \glf $ and in particular the minimizing pair $ (\glm, \aavm) $ satisfies the GL variational equations
\beq
	\label{eq: GL eqs}
	\begin{cases}
		- \lf( \nabla + i \tx\frac{\aav}{\eps^2} \ri)^2 \psi = \tx\frac{1}{\eps^2} \lf(1 - |\psi|^2 \ri) \psi,		& \mbox{in } \Omega,		\\
		- \tx\frac{1}{\eps^2} \nabla^{\perp} \curl \aav = \jv_{\aav}[\psi] \one_{\Omega},							& \mbox{in } \R^2,	\\
		\nuv \cdot \lf( \nabla + i  \tx\frac{\aav}{\eps^2} \ri) \psi = 0,									& \mbox{on } \partial \Omega,
	\end{cases}
\eeq
where we have denoted by $ \jv_{\aav} $ the current
\beq
	\label{eq: supercurrent}
	\jv_{\aav}[\psi] : = \tx\frac{i}{2} \lf[ \psi \lf( \nabla - i \tx\frac{\aav}{\eps^2} \ri) \psi^* - \psi^* \lf(\nabla + i \tx\frac{\aav}{\eps^2} \ri) \psi \ri] = \Im \lf( \psi^* \lf(\nabla + i \tx\frac{\aav}{\eps^2} \ri) \psi \ri).
\eeq
{Any minimizing pair is smooth in the interior of $ \Omega $ and continuous at the boundary. More precisely, for any $ \tilde\Omega \subset \Omega $ with $ \partial \tilde\Omega \subset \Omega^{\circ} $ smooth, and for any $ \alpha \in [0,1) $
\beqn
		\glm &\in C^{\infty}(\widetilde\Omega),		\qquad		
	\aavm &\in C^{\infty}(\widetilde\Omega; \R^2);\\
		\glm &\in C^{0,\alpha}(\overline{\Omega}),	\qquad		\aavm &\in C^{0,\alpha}(\overline\Omega; \R^2),		\label{eq: regularity}
\eeqn
as it can be seen by applying standard arguments in elliptic theory (see, e.g., \cite{G}).}

\subsection{Elliptic estimates}
\label{sec: elliptic}

We now state useful estimates valid for any critical point of $ \glf $. The following bounds are direct consequences of \eqref{eq: GL eqs} \cite[Chpts. 10, 11 \& 15]{FH1}:
\beq\label{eq: est infty}
	\lf\|\psi\ri\|_{L^\infty(\Om)}\leq 1.
\eeq
\beq
	\label{eq: vector potential p}
	\lf\| \aavm - \mathbf{F} \ri\|_{L^{p}(\Omega)} \leq C \eps \lf\| \psi \ri\|_{L^2(\Om)}\|\psi\|_{L^4(\Om)}.
\eeq
	
	We also have a quantitative estimate of the magnetic gradient of $ \psi $, which is however limited by the presence of corners at the boundary.
	
	\begin{lem}
		\label{lem: est gradient infty}
		\mbox{}	\\
		Let $ \psi, \aav $ solve \eqref{eq: GL eqs} and let
		\beq
			\Omega_{\eps} : = \lf\{ \rv \in \Omega \: \big| \: \dist(\rv, \Sigma) \geq \eps \ri\},
		\eeq
		then
		\beq
			\label{eq: est grad infty}
			\lf\| \lf(\nabla + i \frac{\aav}{\eps^2} \ri) \psi \ri\|_{L^\infty(\Om_{\eps})} \leq \frac{C}{\eps}.
		\eeq
	\end{lem}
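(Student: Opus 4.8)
The plan is to reduce the estimate, via localization and rescaling, to standard elliptic regularity on a fixed smooth domain, using that $ \Omega_{\eps} $ keeps us at distance at least $ \eps $ from the corner set $ \Sigma $. Fix $ \rv_0 \in \Omega_{\eps} $ and set $ B : = B_{\eps/2}(\rv_0) $. Since every point of $ B $ lies within $ \eps/2 $ of $ \rv_0 $, we have $ \dist(\rv, \Sigma) \geq \eps/2 $ for all $ \rv \in B $; in particular $ B \cap \partial \Omega $, when nonempty, is contained in a $ C^{\infty} $ arc of $ \partial \Omega $ whose curvature is bounded uniformly in $ \eps $. I would treat separately the interior case $ \dist(\rv_0, \partial \Omega) \geq \eps/2 $, where $ B \subset \Omega $, and the boundary case $ \dist(\rv_0, \partial \Omega) < \eps/2 $, where $ B \cap \Omega $ is, up to a smooth diffeomorphism with Jacobian $ I + \OO(\eps) $, a half-disc with flat diameter lying on $ \partial \Omega $, along which $ \psi $ obeys the magnetic Neumann condition in \eqref{eq: GL eqs}.

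Next I would gauge away the bulk of the vector potential on $ B $. Since $ \curl \aav $ is bounded on $ \Omega $ (as follows from the second equation in \eqref{eq: GL eqs}, the estimate \eqref{eq: est curl a} together with \eqref{eq: Sobolev emb}, and $ \curl \aav = 1 $ outside $ \Omega $), there exists a gauge phase $ \phi $ on $ B $ such that $ \aav_{\sharp} : = \aav + \nabla \phi $ is divergence free and satisfies $ \lf\| \aav_{\sharp} \ri\|_{L^{\infty}(B)} \leq C \eps $ and $ \lf\| \nabla \aav_{\sharp} \ri\|_{L^{\infty}(B)} \leq C $ uniformly in $ \eps $ (one may take, e.g., $ \aav_{\sharp}(\rv) = \big( \int_0^1 \curl \aav(\rv_0 + \sigma(\rv - \rv_0))\, \sigma \, \diff\sigma \big)(\rv - \rv_0)^{\perp} $). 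Setting $ \psi_{\sharp} : = \psi \, e^{i \phi/\eps^2} $, the pair $ (\psi_{\sharp}, \aav_{\sharp}) $ still solves the first GL equation in \eqref{eq: GL eqs} (and the boundary condition, in the boundary case), and by gauge invariance $ |(\nabla + i \eps^{-2}\aav)\psi| = |(\nabla + i \eps^{-2}\aav_{\sharp})\psi_{\sharp}| $, so it suffices to bound the latter at $ \rv_0 $.

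Then I would rescale on the scale $ \eps $: with $ \widetilde{\rv} : = (\rv - \rv_0)/\eps $, set $ \widetilde\psi(\widetilde{\rv}) : = \psi_{\sharp}(\rv_0 + \eps \widetilde{\rv}) $ and $ \widetilde\aav(\widetilde{\rv}) : = \eps^{-1} \aav_{\sharp}(\rv_0 + \eps \widetilde{\rv}) $. Multiplying the GL equation by $ \eps^2 $ yields
\beq
	- \lf( \nabla_{\widetilde{\rv}} + i \widetilde\aav \ri)^2 \widetilde\psi = \tx\frac{1}{b} \lf( 1 - |\widetilde\psi|^2 \ri) \widetilde\psi,		\qquad		\mbox{in } \widetilde B : = B_{1/2}(0) \cap \widetilde\Omega,
\eeq
together with the magnetic Neumann condition on the flat portion of $ \partial \widetilde B $ in the boundary case, where now $ \lf\| \widetilde\aav \ri\|_{W^{1,\infty}(\widetilde B)} \leq C $ and $ \lf\| \widetilde\psi \ri\|_{L^{\infty}(\widetilde B)} \leq 1 $ by \eqref{eq: est infty}, uniformly in $ \eps $. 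This is a uniformly elliptic semilinear problem on a fixed domain, with leading part $ \Delta $, coefficients bounded uniformly in $ \eps $, and bounded right-hand side; a Caccioppoli inequality gives an interior (resp.\ up-to-the-flat-boundary) $ H^1 $ bound on $ \widetilde\psi $, and a standard elliptic bootstrap — $ W^{2,p} $ estimates for any $ p \in (2, \infty) $ combined with the Sobolev embedding $ W^{2,p} \hookrightarrow C^{1,\alpha} $ of \eqref{eq: Sobolev emb} — upgrades it to $ \lf\| \widetilde\psi \ri\|_{C^{1,\alpha}(\widetilde B_{1/4})} \leq C $. In particular $ \lf\| (\nabla_{\widetilde{\rv}} + i \widetilde\aav)\widetilde\psi \ri\|_{L^{\infty}(\widetilde B_{1/4})} \leq C $, and undoing the rescaling gives $ |(\nabla + i \eps^{-2}\aav_{\sharp})\psi_{\sharp}(\rv_0)| = \eps^{-1} |(\nabla_{\widetilde{\rv}} + i \widetilde\aav)\widetilde\psi(0)| \leq C/\eps $; since $ \rv_0 \in \Omega_{\eps} $ is arbitrary, \eqref{eq: est grad infty} follows.

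I expect the main obstacle to be purely technical rather than conceptual: one must verify that, because $ B $ avoids $ \Sigma $, the boundary arc in the boundary case is genuinely $ C^{\infty} $ with $ C^k $ norms bounded uniformly in $ \eps $, so that the flattening diffeomorphism and the boundary $ W^{2,p} $ estimates apply with $ \eps $-independent constants, and that the gauge phase $ \phi $ is compatible with the magnetic Neumann condition (which it is, since that condition transforms covariantly under gauge). All these points are handled exactly as in the smooth-domain analysis of \cite[Chpt. 10]{FH1}, the only genuinely new ingredient being the restriction to $ \Omega_{\eps} $, which confines the argument to the smooth part of the boundary.
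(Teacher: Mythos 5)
Your argument is correct, but it follows a different (equally standard) route than the paper. The paper's proof is global: it introduces a single cut-off $ \xi $ with $ \xi \equiv 1 $ on $ \Omega_{\eps} $, $ \supp \xi \subset \lf\{ \dist(\rv,\Sigma) \geq \eps^2 \ri\} $ and $ |\nabla \xi| \leq C\eps^{-1} $, $ |\partial_i\partial_j\xi| \leq C \eps^{-2} $, controls the $ H^2 $ norm of $ \xi\psi $ by the $ L^2 $ norm of its Laplacian (computed from \eqref{eq: GL eqs}), repeats the argument for $ (\partial_j + i \eps^{-2}\aav_j)\psi $, and concludes via the two-dimensional embedding $ H^2 \hookrightarrow L^{\infty} $. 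Your proof instead localizes on balls of radius $ \sim \eps $, gauges the potential down to size $ \OO(\eps) $, and blows up to a fixed-scale uniformly elliptic Neumann problem, extracting the factor $ \eps^{-1} $ purely from the rescaling. What your approach buys is a cleaner bookkeeping of the $ \eps $-dependence (all constants come from a problem with $ \eps $-independent data); what the paper's buys is brevity, since no local gauge choice or flattening is needed. Both are the ``standard application of elliptic theory'' the paper alludes to.

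Two technical points deserve a fix or at least a remark. First, the claim that $ B_{\eps/2}(\rv_0) \cap \partial\Omega $ lies on a single smooth arc is false for sharp corners: if $ \beta_j $ is small (roughly $ \beta_j < \pi/3 $), a point on the bisectrix at distance $ \eps $ from the vertex is within $ \eps/2 $ of both arms, so the ball meets two distinct smooth pieces of $ \partial\Omega $. This is harmless — replace $ \eps/2 $ by $ c\,\eps $ with $ c $ a fixed constant depending on $ \min_j \sin(\beta_j/2) $, which is legitimate since there are finitely many corners with fixed angles — but it should be said. Second, \eqref{eq: est curl a} only gives $ L^2 $ control of $ \curl\aav - 1 $; the uniform bound $ \lf\| \curl \aav \ri\|_{L^{\infty}(\Omega)} \leq C $ that your transversal gauge requires really comes from the $ W^{2,p} $ estimate \eqref{eq: est A-F W2p} (equivalently \eqref{eq': est A C1}) via \eqref{eq: Sobolev emb}, so you should cite that bound rather than \eqref{eq: est curl a}. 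Relatedly, the transversal gauge is not divergence free in general, but the residual term $ \nabla\cdot\aav_{\sharp} $ is controlled in $ L^p $ by $ \lf\| \curl\aav \ri\|_{W^{1,p}} $ and becomes $ \OO(\eps^{1-2/p}) $ after rescaling, so it does not disturb the $ W^{2,p} $ bootstrap.
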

	
	\begin{proof}
		{The result can be deduced from the equations \eqref{eq: GL eqs} and, in particular, the first one, applying in a suitable way, e.g., \cite[Lemma A.1]{BBH1}.}
	\end{proof}
	
	{The counterpart of \eqref{eq: est grad infty} for any minimizer $ \psi $ of the corner problems \eqref{eq: glecorn doms} and \eqref{eq: en n fv corner} reads
	\beq
		\label{eq: est grad infty corner 1}
		\lf\| \lf( \nabla + i \av \ri) \psi \ri\|_{L^{\infty}(\lf\{ \lf|s(\rv)\ri| \geq 1 \ri\})} = \OO(1),
	\eeq
	and combining it with, e.g., \eqref{eq: agmon 2} proven in next \cref{app:Agmon}, we also get			\beq
		\label{eq: est grad infty corner 2}
		\lf\| \nabla \psi \ri\|_{L^{\infty}(\lf\{ \lf|s(\rv)\ri| \geq 1 \ri\})} = \OO(1).
	\eeq}

	\subsection{Agmon estimates}\label{app:Agmon}
		Another typical key tool in the study of the GL theory is the estimate of the decay properties ({\it Agmon estimates}) of any solution $(\psi,\aav)$ of the GL variational equations \eqref{eq: GL eqs} in the surface superconductivity regime, i.e., when the intensity of the applied magnetic field is such that $h_{\mathrm{ex}} > \Hcc $. The result is in fact inherited from the linear problem associated to the GL energy, i.e., a magnetic Schr\"{o}dinger operator, and does not exploit the nonlinearity. The presence of corners does not influence the exponential decay of the order parameter away from the boundary \cite[Sect. 15.3.1]{FH1}. More precisely, for any $b>1$ and for any $ (\psi, \aav) $ solving \eqref{eq: GL eqs} \cite[Thm. 4.4]{BNF},
\beq
	\label{eq: agmon}
		\displaystyle\int_\Om \diff\textbf{r}\;\exp\lf\{\tx\frac{c(b) \: \mathrm{dist} (\textbf{r},\partial\Om)}{\eps}\ri\} \lf\{|\psi|^2+\eps^2 \lf|\lf(\nabla+i \tx\frac{\textbf{A}}{\eps^2}\ri)\psi\ri|^2\ri\} \;  = \OO(\eps),
\eeq
	where $c(b)>0$, for $ b > 1 $, is independent of $ \eps $. When $ b \to 1^+ $, the above bound becomes non-optimal because of the vanishing of $ c(b) $ and one can in fact prove other estimates showing a power law decay of $ \psi $ \cite{FK}. Similarly, in presence of corners, the result might not be optimal for $b>\Theta_0^{-1}$: assuming that there is at least one angle $\beta$ along the boundary such that $\mu(\beta) < \Theta_0$, one can prove \cite[Thm. 1.6]{BNF} a stronger decay w.r.t. the distance from that corner. Here, $ \mu(\beta) $ stands for the ground state energy of the magnetic Schr\"{o}dinger operator in an infinite wedge of opening angle $ \beta $ with unit magnetic field.	
	
	{The translation of \eqref{eq: agmon} in the setting of \cref{sec: strip}, i.e., a GL functional with fixed parameter $ \eps = 1 $ in a finite strip $ R(\ell,L) $ is as follows:}
	
	{\begin{lem}
		\label{lem: agmon strip}
		\mbox{}	\\
		Let $ \psi $ solve \eqref{eq: var eq strip} and satisfy the boundary conditions alternatively in \eqref{eq: strip domd}, \eqref{eq: strip domn} or \eqref{eq: neumann conditions} in $ R(\ell,L) $ with $ \ell,L > 0 $. Then, for any $ b > 1 $, there exists a constant $ c(b) > 0 $, such that
		\beq
			\label{eq: agmon strip}
			\int_{\rect} \diff s \diff t \; e^{c(b) \: t} \lf\{|\psi|^2+ \lf|\lf(\nabla - i  t \ev_s \ri) \psi\ri|^2\ri\}  = \OO(L).
		\eeq
	\end{lem}}

In the paper, we use Agmon estimates also for the corner effective problem. We discuss here such an extension to the setting of the effective problem formulated in \eqref{eq: ecorn} and discussed in \cref{sec: corner energy}.

	\begin{lem}
		\label{lem: agmon 1}
		\mbox{}	\\
		Let $ \corner $ be the region given in \cref{fig: corner}, with $ L, \ell \gg 1$ and $ L \lesssim \ell^{{a}} $, for some $ {a} > 1 $. Let also $ \psi $ be a solution of \eqref{eq: var eq corner}, with $ b >  1 $. Then, there exists a constant $ c(b) > 0 $, such that
		\beq
			\label{eq: agmon 1}
			\int_{\corner} \diff\textbf{r}\; e^{c(b) \: \dist (\textbf{r},\bdo)} \lf\{| \psi|^2+ \lf|\lf(\nabla+i  \av \ri) \psi\ri|^2\ri\}  = \OO(L).
		\eeq
	\end{lem}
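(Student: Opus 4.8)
The approach is the standard Agmon method: we multiply the variational equation by $e^{2\Phi} \psi^*$ for a well-chosen Lipschitz weight $\Phi$, integrate over $\corner$, integrate by parts, and exploit positivity of the bottom of the magnetic Schr\"odinger spectrum far from the Neumann boundary $\bdo$. Concretely, I would take
\beq
	\Phi(\rv) := \tx\frac{1}{2} c(b) \, \min\lf\{ \dist(\rv, \bdo), \, M \ri\},
	\nonumber
\eeq
with $M$ an auxiliary cutoff that is sent to $+\infty$ at the end (or, alternatively, a smooth regularization thereof with $|\nabla\Phi| \leq \tx\frac12 c(b)$ a.e.). The point of capping at $M$ is to ensure $e^{\Phi}\psi \in H^1(\corner)$ from the outset, since on $\bdi \cup \bdbd$ we only know $\psi = \psi_0$, which is itself exponentially small but not zero; one then removes the cutoff using monotone convergence once the bound is established.

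\medskip

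First I would record the IMS-type localization identity: testing \eqref{eq: var eq corner} against $e^{2\Phi}\psi^*$ and using $\nv\cdot(\nabla + i\av)\psi = 0$ on $\bdo$ gives
\bml{
	\int_{\corner}\diff\rv\, e^{2\Phi}\lf| (\nabla + i\av) \psi \ri|^2 - \int_{\corner}\diff\rv\, \lf|\nabla\Phi\ri|^2 e^{2\Phi} |\psi|^2 \\
	= \tx\frac{1}{b}\int_{\corner}\diff\rv\, e^{2\Phi}(1 - |\psi|^2)|\psi|^2 + (\text{boundary terms on } \bdi\cup\bdbd),
	\nonumber
}
where the boundary terms involve only the explicit function $\psi_0 = f_0(t)e^{-i\alpha_0 s}$, hence are $\OO(L\ell^{-\infty})$ by the exponential decay \eqref{eq: fk decay 1} of $f_0$ (the boundary $\bdi$ sits at $t = \ell$, and $\bdbd$ has length $\OO(\ell)$ with $f_0$ small there except in an $\OO(1)$-sized set near $\bdo$, which contributes $\OO(1)$). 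Next I would use the fundamental lower bound on the magnetic kinetic energy with Neumann conditions on $\bdo$: in the region $\{\dist(\rv,\bdo) \geq \delta_0\}$ for a suitable fixed $\delta_0$, the localized form satisfies $\int |(\nabla + i\av)(\chi\psi)|^2 \geq \Theta_0^{-1}(1-\eta)\int |\chi\psi|^2$ up to corner corrections — but more simply, since $\curl\av = 1$ and the relevant Schr\"odinger operator has spectral bottom bounded below by a constant $\lambda_0 > \tx\frac1b$ for $b < \Theta_0^{-1}$ away from the boundary (this is exactly the ingredient behind \eqref{eq: agmon}), one gets that the magnetic energy dominates $\tx\frac1b\int e^{2\Phi}|\psi|^2$ on that region. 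Choosing $c(b)$ small enough that $|\nabla\Phi|^2 \leq \tx14(\lambda_0 - \tx1b)$ absorbs the weight-gradient term, and the self-improving structure (the $\OO(1)$-sized collar of $\bdo$ where the inequality fails contributes only $\int_{\{\dist \leq \delta_0\}}|\psi|^2 = \OO(L)$ by \eqref{eq: est infty}) yields \eqref{eq: agmon 1}.

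\medskip

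The main obstacle is genuinely the geometry near the vertex: boundary coordinates degenerate there, the potential $\av$ is discontinuous along the bisectrix (per \eqref{eq: av}, \eqref{eq: av bound}), and the inner boundary $\bdi$ together with the fake boundary $\bdbd$ are not where one has clean Dirichlet-zero data. The way I would handle this is to note that none of these features affects the Agmon argument, because (i) the weight $\Phi$ depends only on $\dist(\cdot,\bdo)$, which is globally Lipschitz with $|\nabla\Phi|\leq \tx12 c(b)$ regardless of coordinate charts; (ii) the discontinuity of $\av$ across the bisectrix does not enter the integration by parts, which only uses $\av \in L^\infty$ and the weak form of \eqref{eq: var eq corner}; and (iii) the boundary contributions from $\bdi\cup\bdbd$ are controlled by the \emph{a priori} smallness of $\psi_0$ there, as above. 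The factor $\OO(L)$ on the right-hand side (rather than $\OO(1)$) is exactly the length of $\bdo$, i.e. the measure of the collar $\{\dist(\cdot,\bdo)\leq\delta_0\}$ where $|\psi|\leq 1$ crudely, which is what the surface-superconductivity heuristic predicts. Finally, the point-wise version and the companion gradient estimate are obtained by a Moser iteration / elliptic-regularity bootstrap on balls of radius $\sim 1$ located at distance $\gtrsim 1$ from both $\bdo$ and the vertex, exactly as in the proof of the analogous statements quoted from \cite{CR1, BNF}; I would only flag the adaptation and not reproduce it.
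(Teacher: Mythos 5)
Your overall strategy is the same as the paper's: an Agmon-type weighted energy estimate with weight $e^{c(b)\dist(\cdot,\bdo)}$, a cutoff killing the collar $\{\dist(\cdot,\bdo)\lesssim 1\}$ (which is what produces the $\OO(L)$ on the right-hand side), the spectral gap away from $\bdo$ coming from $\curl\av=1$, and the Dirichlet data $\psi_0$ on $\bdi\cup\bdbd$ to tame the artificial boundaries. The paper implements the spectral input not through an abstract bound on the bottom of the magnetic Schr\"odinger operator but through the elementary pointwise inequality $|(\nabla+i\av)u|^2\geq-\curl\,\jv[u]-\av\cdot\nabla^{\perp}|u|^2$ integrated via Stokes (its \eqref{eq: magnetic bound}), with the cutoff built directly into the weight $\xi=e^{at}f(t)$, $f\equiv 0$ for $t\leq 1/2$; note that the constant you need away from the Neumann boundary is the bulk value $1$, so the hypothesis is just $b>1$ as in the statement, and your invocation of $b<\Theta_0^{-1}$ is not needed (and would be the wrong threshold here).

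There is one concrete gap in your write-up: the claim that the boundary terms on $\bdi\cup\bdbd$ ``involve only the explicit function $\psi_0$.'' They do not. Integrating $e^{2\Phi}\bar\psi\,(-(\nabla+i\av)^2\psi)$ by parts produces $-\int_{\partial}e^{2\Phi}\,\bar\psi\;\nv\cdot(\nabla+i\av)\psi$, and while $\bar\psi=\bar\psi_0$ there, the normal derivative $\nv\cdot(\nabla+i\av)\psi$ on $\bdbd$ is \emph{not} prescribed by the Dirichlet condition and is a priori unknown; moreover near the junction of $\bdbd$ with $\bdo$ the factor $e^{2\Phi}\bar\psi_0$ is $\OO(1)$, not small, so the decay of $f_0$ alone cannot close the estimate. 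The paper repairs exactly this point by invoking the a priori gradient bound of \cref{lem: est gradient infty} (valid on $\bdbd$ since it lies at distance $L\gg1$ from the vertex), which gives $|\nv\cdot(\nabla+i\av)\psi|\leq C$ there and reduces the $\bdbd$-contribution to $C\int_0^{\ell}e^{2at}f_0(t)\,\diff t=\OO(1)$, consistent with the $\OO(L)$ conclusion. With that ingredient added (and the cross-term in your localization identity written correctly, i.e.\ with $\int|(\nabla+i\av)(e^{\Phi}\psi)|^2$ rather than $\int e^{2\Phi}|(\nabla+i\av)\psi|^2$ on the left), your argument closes.
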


The above result is a simple adaptation of \eqref{eq: agmon} to the effective problem in $ \corner $. The only difference is that the magnetic potential $ \av $ is given and not a minimizer of the energy.
Before discussing its proof, however, we first state a technical lemma{, which follows from a standard inequality for the magnetic gradient and the equation solved by $ \psi $.}

	\begin{lem}
		\mbox{}	\\
		Let $ \corner $ be the region given in \cref{fig: corner}, with $ L, \ell \gg 1$ and $ L \lesssim \ell^{{a}} $, for some $ {a} > 1 $. Let also $ \psi $ be a solution of \eqref{eq: var eq corner} and let $ \xi $ be a smooth real function. Then, for any set $ S \subset \corner $ with Lipschitz boundary,
		\bml{
			\label{eq: magnetic bound}
			\int_{S} \diff \rv \lf\{ |\psi|^2 \lf( \nabla \xi \ri)^2 + \tx\frac{1}{b} \lf| \xi \psi \ri|^2 \lf( 1 - |\psi|^2 \ri) \ri\} \geq \int_{S} \diff \rv \: \curl(\av) \: |\xi \psi|^2 \\
			- \int_{\partial S} \diff x \lf\{ \tx\frac{1}{2} \xi^2 \nuv \cdot \nabla |\psi|^2 + \tav \cdot \jv_{\av}[\xi \psi] \ri\},
		}	
		where $ \tav, \nuv $ stand for the tangential and normal unit vectors to $ \partial S $, respectively.
	\end{lem}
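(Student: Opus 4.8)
The plan is to prove the magnetic kinetic-energy lower bound \eqref{eq: magnetic bound} by the standard trick of expanding the magnetic gradient of $\xi\psi$ in modulus and phase, then using the variational equation \eqref{eq: var eq corner} to rewrite the modulus part. First I would observe the pointwise identity, valid wherever $\psi\neq 0$ (and extendable by density since $|\psi|\in H^1$ and the set $\{\psi=0\}$ is negligible in the relevant sense),
\beq
	\lf| \lf( \nabla + i \av \ri) (\xi\psi) \ri|^2 = \lf| \nabla |\xi\psi| \ri|^2 + \lf| \xi\psi \ri|^2 \lf| \nabla\varphi + \av \ri|^2,
\eeq
where $\varphi$ is a local phase of $\psi$; equivalently, writing $\jv_{\av}[\xi\psi] = |\xi\psi|^2(\nabla\varphi + \av)$, one has $\lf| \lf( \nabla + i \av \ri) (\xi\psi) \ri|^2 \geq |\nabla|\xi\psi||^2 + |\jv_{\av}[\xi\psi]|^2/|\xi\psi|^2$ with the understanding that the last term is interpreted as $0$ where $\psi$ vanishes. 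Integrating this over $S$ gives a lower bound on $\int_S |(\nabla+i\av)(\xi\psi)|^2$, but that is not yet what is wanted: the left side of \eqref{eq: magnetic bound} has no magnetic kinetic term. So the actual route is the opposite direction, supplying the magnetic kinetic energy from the equation.

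Concretely, I would start from $\int_S \diff\rv\, \lf| \lf( \nabla + i \av \ri)(\xi\psi) \ri|^2$ and integrate by parts: this equals $-\int_S \diff\rv\, \overline{\xi\psi}\,(\nabla+i\av)^2(\xi\psi) + \int_{\partial S}\diff x\, \overline{\xi\psi}\,\nuv\cdot(\nabla+i\av)(\xi\psi)$. Expanding $(\nabla+i\av)^2(\xi\psi) = \xi(\nabla+i\av)^2\psi + 2\nabla\xi\cdot(\nabla+i\av)\psi + (\Delta\xi)\psi$ and using \eqref{eq: var eq corner}, i.e.\ $-(\nabla+i\av)^2\psi = \tfrac1b(1-|\psi|^2)\psi$, I would collect the cross terms $2\xi\nabla\xi\cdot\Re(\overline\psi(\nabla+i\av)\psi) + \xi(\Delta\xi)|\psi|^2 = \tfrac12\nabla(\xi^2)\cdot\nabla|\psi|^2 + \tfrac12\xi(\Delta\xi)|\psi|^2$ and integrate by parts once more to turn these into $-\int_S|\nabla\xi|^2|\psi|^2 + $ a boundary term $\int_{\partial S}\tfrac12\xi^2\nuv\cdot\nabla|\psi|^2$. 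This yields the identity
\beq
	\int_S \diff\rv\, \lf| \lf( \nabla + i \av \ri)(\xi\psi) \ri|^2 = \int_S \diff\rv\, \lf\{ \tx\frac1b \xi^2|\psi|^2(1-|\psi|^2) - |\nabla\xi|^2|\psi|^2 \ri\} + \int_{\partial S}\diff x\, \lf\{ \tx\frac12 \xi^2 \nuv\cdot\nabla|\psi|^2 + \tav\cdot\jv_{\av}[\xi\psi] \ri\},
\eeq
where the appearance of $\tav\cdot\jv_{\av}[\xi\psi]$ on $\partial S$ comes from bookkeeping the imaginary part of the boundary term $\overline{\xi\psi}\,\nuv\cdot(\nabla+i\av)(\xi\psi)$ together with the orientation conventions (the normal-current contribution $\nuv\cdot\jv_{\av}$ being absorbed, and the tangential one surviving — here I would double-check the sign against the statement, since \eqref{eq: magnetic bound} writes it with a minus outside the boundary integral).

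Finally I would combine this identity with the diamagnetic-type inequality: since $\lf| \lf( \nabla + i \av \ri)(\xi\psi) \ri|^2 \geq |\nabla|\xi\psi||^2 + |\xi\psi|^2|\nabla\varphi+\av|^2$, integrating the phase part and recalling that $\int_S |\xi\psi|^2|\nabla\varphi+\av|^2 \geq \int_S \curl(\av)|\xi\psi|^2$ is \emph{not} automatic for a general compactly supported function — rather, one uses the classical inequality $\int |(\nabla+i\av)w|^2 \geq \pm\int \curl(\av)|w|^2$ for $w$ with support where the argument applies, which in our flat-field case $\curl\av = 1$ reduces to $\int_S|(\nabla+i\av)(\xi\psi)|^2 \geq \int_S|\psi|^2\xi^2\,\curl(\av) + (\text{boundary corrections})$. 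Plugging the identity above into this and rearranging gives precisely \eqref{eq: magnetic bound}. The main obstacle I anticipate is \emph{not} the algebra of integrations by parts but the careful handling of the boundary terms on $\partial S$: one must track which pieces of $\nuv\cdot(\nabla+i\av)(\xi\psi)$ split into the real part $\tfrac12\nuv\cdot\nabla|\xi\psi|^2$ versus the imaginary part $\nuv\cdot\jv_{\av}[\xi\psi]$, keep the distinction between $\nuv\cdot\jv$ and $\tav\cdot\jv$ straight, and verify that the zero set of $\psi$ causes no trouble (handled by working with $\sqrt{|\psi|^2+\delta}$ and letting $\delta\to0$, or by the standard observation that $|\psi|\in H^1$ suffices). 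A secondary subtlety is making the magnetic inequality $\int|(\nabla+i\av)w|^2\geq\int\curl(\av)|w|^2$ precise on a domain with boundary, which again produces exactly the boundary contribution $\int_{\partial S}\tav\cdot\jv_{\av}[\xi\psi]$ recorded in the statement.
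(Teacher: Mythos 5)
Your overall strategy is the right one and is essentially the paper's: the bound is obtained by combining (i) the ``creation-operator'' lower bound
\[
\int_S \diff\rv\, \lf|(\nabla+i\av)w\ri|^2 \geq \int_S \diff\rv\, \curl(\av)\,|w|^2 - \int_{\partial S}\diff x\, \tav\cdot\jv_{\av}[w],
\]
applied to $w=\xi\psi$ --- this comes from the pointwise identity $|(\nabla+i\av)w|^2 = |(\partial_1+ia_1-i(\partial_2+ia_2))w|^2 - \curl\,\jv[w] - \av\cdot\nabla^\perp|w|^2$ plus Stokes' theorem, and is the only place the tangential current and $\curl\av$ enter --- with (ii) an exact computation of $\int_S|(\nabla+i\av)(\xi\psi)|^2$ via the variational equation. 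Your worry about the zero set of $\psi$ is moot on this route, since neither ingredient uses a modulus--phase decomposition.

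However, the identity you state in step (ii) is wrong in two places, and the errors do not cancel. Carrying out the two integrations by parts carefully gives
\[
\int_S \diff\rv\, \lf|(\nabla+i\av)(\xi\psi)\ri|^2 = \int_S \diff\rv\, \lf\{ \tx\frac1b\,\xi^2|\psi|^2\lf(1-|\psi|^2\ri) + |\nabla\xi|^2|\psi|^2 \ri\} + \int_{\partial S}\diff x\, \tx\frac12\,\xi^2\,\nuv\cdot\nabla|\psi|^2 ,
\]
with a \emph{plus} sign in front of $|\nabla\xi|^2|\psi|^2$: integrating $-\frac12\nabla(\xi^2)\cdot\nabla|\psi|^2$ by parts produces $+\int_S(|\nabla\xi|^2+\xi\Delta\xi)|\psi|^2 - \int_{\partial S}\xi|\psi|^2\,\nuv\cdot\nabla\xi$, the $\xi\Delta\xi$ piece cancels against $-\int_S\xi\Delta\xi|\psi|^2$, and the boundary remainder cancels against the $\xi|\psi|^2\,\nuv\cdot\nabla\xi$ part of $\frac12\nuv\cdot\nabla|\xi\psi|^2$. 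Moreover there is \emph{no} tangential-current term in this identity: the surviving boundary contribution $\Re\lf(\overline{\xi\psi}\,\nuv\cdot(\nabla+i\av)(\xi\psi)\ri)$ is exactly $\frac12\nuv\cdot\nabla|\xi\psi|^2$, while its imaginary part is the \emph{normal} current $\nuv\cdot\jv_{\av}[\xi\psi]$, which never enters the (real) identity and cannot be turned into $\tav\cdot\jv_{\av}$ by any orientation convention. The term $\tav\cdot\jv_{\av}[\xi\psi]$ belongs only to step (i). As written, your combination would either double-count the tangential current or deliver the lemma with $-|\psi|^2(\nabla\xi)^2$ on the left-hand side, which is a different (and false) statement. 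Once the sign and the bookkeeping of the boundary current are corrected, the two ingredients assemble into \eqref{eq: magnetic bound} exactly as you intend.
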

	
	\begin{proof}
		We start by integrating the {following} trivial bound (see, e.g., \cite[Lemma 3.2]{CLR}) for any $ u $ weakly differentiable and $ \av \in L^{\infty} $ (we set $ \av : = (a_1, a_2) $)
		\bmln{
			\lf| \lf( \nabla + i \av \ri) u \ri|^2 = \lf| \lf( \partial_1 + i a_1 - i(\partial_2 + i a_2) \ri) u \ri|^2 - \curl \: \jv[u] - \av \cdot \nabla^{\perp} |u|^2	\\
			\geq - \curl \: \jv[u] - \av \cdot \nabla^{\perp} |u|^2,
		}
		which yields, taking $ u = \xi \psi $,
		\beq
			\int_{S} \diff \rv \: \lf| \lf( \nabla + i \av \ri) \xi \psi \ri|^2 \geq  \int_{S} \diff \rv \: \curl(\av) \: | \xi \psi|^2 
			- \int_{\partial S} \diff x \lf\{  \tav \cdot \jv[\xi \psi] + \tav \cdot \av |\xi \psi|^2 \ri\},
		\eeq
		after an integration by parts of the last term and the use of Stokes theorem. Note that the last two terms can be combined to reconstruct the magnetic current $ \jv_{\av} $.
		To complete the proof it suffices to use the equation \eqref{eq: var eq corner} to compute the term on the l.h.s.. The additional boundary term in \eqref{eq: magnetic bound} is produced by the integration by parts of the cross product term $ \xi \nabla \xi \cdot \nabla |\psi|^2 $ to reconstruct the term $ \psi^* \Delta \psi + \mathrm{h.c.} $ of the variational equation.
	\end{proof}

	\begin{proof}[Proof of \cref{lem: agmon 1}]
		As anticipated the result is a simple adaptation of \eqref{eq: agmon} (see \cite[Proof of Thm. 12.2.1]{FH1}). The key ingredient is the inequality \eqref{eq: magnetic bound}, applied to $ S = \corner $, together with the following choice of the function $ \xi $: 
		\beq
			{\xi}(\rv) = {\xi}(t(\rv)) = e^{a t(\rv)} f(t),
		\eeq
		with the function $ f $ such that $ |f^{\prime}| \leq C $ and
		\[
			f =
			\begin{cases}
				 1,		&	\mathrm{for }\,t \in [1,+\infty],\\
				 0,		&	\mathrm{for }\,t \in \lf[0,\frac{1}{2}\ri].
			\end{cases}
		\]

		We first estimate the boundary terms appearing in \eqref{eq: magnetic bound}:
		\beq
			\label{eq: bd terms 1}
			\int_{\partial \corner} \diff x \: \xi^2 \nuv \cdot \nabla |\psi|^2 = \OO(1),
		\eeq
		because $ \xi = 0 $ on $ \bdo $, 
		\beq
			\label{eq: bdi est 1}
			\lf| \xi^2 \nuv \cdot \nabla |\psi|^2 \ri|  \leq C e^{2 a \ell} f_0(\ell) = \OO(\ell^{-\infty}),		\qquad		\mbox{on } \bdi,
		\eeq
		and 
		\beq
			\label{eq: bdbd est 1}
			\int_{\bdbd} \diff x \: \lf| \xi^2 \nuv \cdot \nabla |\psi|^2 \ri| \leq C \int_0^{\ell} \diff t \: e^{2 a t} f_0(t) \leq C,		
		\eeq
		where we have used the boundary conditions on $ \psi $, the exponential decay of $ f_0 $ \eqref{eq: fk decay 1} and the estimate \eqref{eq: est grad infty}, which yields
		\beq
			\label{eq: grad est infty corner}
			\lf| \nabla |\psi| \ri| \leq \lf| \lf( \nabla + i \av \ri) \psi \ri| \leq C,	\qquad		\mbox{for } \dist\lf(\rv, \rv_0\ri) \geq 1,
		\eeq
		{$ \rv_0 $ being the position of the corner.} Similarly,
		 \beq
		 	\label{eq: bd terms 2}
			\int_{\partial \corner} \diff x \: \tav \cdot \jv_\av[\xi \psi] = \OO(1),
		\eeq
		thanks to the vanishing at $0 $ of  $\xi $ and the bounds
		\beq
			\label{eq: bdi est 2}
			\int_{\bdi} \diff x \: \lf| \tav \cdot \jv_\av[\xi \psi] \ri| \leq C \int_{\bdi} \diff x \: e^{2 a \ell} f_0^2(\ell) = \OO(\ell^{-\infty}),
		\eeq
		\beq
			\label{eq: bdbd est 2}
			\int_{\bdbd} \diff x \: \lf| \tav \cdot \jv_\av[\xi \psi] \ri| \leq C \int_0^{\ell} \diff t \: e^{2 a t} f_0(t) \leq C,
		\eeq
		as in \eqref{eq: bdi est 1} and \eqref{eq: bdbd est 1}, respectively.

		The rest of the proof is identical to \cite[Proof of Thm. 12.2.1]{FH1}: the estimates \eqref{eq: bd terms 1} and \eqref{eq: bd terms 2} above together with \eqref{eq: magnetic bound} imply
		\beq
			\label{eq: var eq and curl}
			\lf(1- \tx\frac{1}{b}\ri) \lf\|\xi \psi \ri\|_{L^2(\corner)}^2  \leq \int_{\corner} \diff \rv \: \lf( \nabla \xi \ri) ^2|\psi|^2 + \OO(1).
		\eeq		
		Noticing now that
		\beq
			\lf| \xi^{\prime} \ri|^2 \leq 2 (1+\epsilon) a^2 f^2 e^{2a t} + \lf(1 + \tx\frac{1}{\epsilon}\ri) {f^\prime}^2 e^{2a t}
			\leq 2 (1+\epsilon) a^2 f^2 e^{2a t} + C(\epsilon) e^{2a t},
		\eeq
		we conclude that
		\beq
			\label{eqp: agmon final bound}
			\lf(1- \tx\frac{1}{b} - 2 (1+\epsilon) a^2 \ri) \int_{t(\rv) \geq \frac{1}{2}} \diff \rv \: e^{2 a t(\rv)} \lf| \psi \ri|^2  \leq C \int_{t(\rv) \leq 1} \diff \rv \: |\psi|^2 + \OO(1),
		\eeq
		and since we can always find $ \epsilon > 0  $ and $ a(\epsilon) > 0 $ so that the factor on the l.h.s. of the above expression is positive, we obtain the result for the order parameter. The estimate of the magnetic gradient however follows using \eqref{eq: var eq corner} once more and the bound just proven.
	\end{proof}
	
We complete the discussion of the decaying properties of the order parameter with a refined version of the estimate proven in \cref{lem: agmon 1}: we consider a solution of the differential equation \eqref{eq: var eq corner} and show that, in a subdomain of tangential length of order $ \OO(1) $, the r.h.s. of \eqref{eq: agmon 1} is $ \OO(1) $ as well. In order to state a more precise bound there, we identify two model domains, i.e., a rectangle $ S_{\mathrm{strip}} $ of tangential side length $ \OO(1) $ far from the corner and the region close to it $ S_{\mathrm{corner}} $. More precisely, we set
\beq
	\label{eq: S strip}
	S_{\mathrm{strip}}  : = \lf\{ \rv \in \corner \: \big| \:  {\bar{s}}_1 \leq s(\rv) \leq {\bar{s}}_2 \ri\},		\qquad		{\bar{s}}_2 - {\bar{s}}_1 \leq C,
\eeq
and either $ {\bar{s}}_1 \geq \ell / \tan\lf(\beta/2\ri) $ or $ {\bar{s}}_2 \leq - \ell / \tan\lf(\beta/2\ri) $, which ensures that in $ S_{\mathrm{strip}} $ we can use the coordinates $ (s,t) $ and it corresponds to $ [{\bar{s}}_1, {\bar{s}}_2] \times [0, \ell] $. The other region $  S_{\mathrm{corner}} $ is 
\beq
	\label{eq: S corner}
	 {S_{\mathrm{corner}} : = \lf\{ \rv \in \corner \: \big| \: \bar{s}_1 \leq \dist(\rv, \rv_0) \leq \bar{s}_2 \ri\},	\qquad	 {\bar{s}}_2 - {\bar{s}}_1 \leq C,}
\eeq
and $ 1 \leq \bar{s}_1 $, $ {\bar{s}}_2 \leq C \ell  $, i.e., it is a wedge-like domain where boundary coordinates can not be used globally.

	\begin{lem}
		\label{lem: agmon 2}
		\mbox{}\\
		Let $ S_{\sharp} $ be one of the two domains defined in \eqref{eq: S strip} and \eqref{eq: S corner}. Let also $ \psi $ be a solution of \eqref{eq: var eq corner}, with $ b >  1 $. Then, there exists a constant $ c(b) > 0  $, such that 
		\beq
			\label{eq: agmon 2}
			\int_{S_{\sharp}} \diff\textbf{r}\; e^{c(b) \: \dist (\textbf{r},\bdo)} \lf\{|\psi|^2+ \lf|\lf(\nabla+i  \av \ri) \psi\ri|^2\ri\}  = \OO(1).
		\eeq
	\end{lem}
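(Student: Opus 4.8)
The plan is to run the weighted estimate of \cref{lem: agmon 1} once more, but with an additional cut-off localising, in the direction "along" $\bdo$, to a neighbourhood of $S_\sharp$ of bounded width. The point where the argument improves on \cref{lem: agmon 1} — turning the $\OO(L)$ on the right-hand side into $\OO(1)$ — is the choice of the width $w$ of the transition layer of this extra cut-off as a \emph{large but fixed} constant. Concretely, writing $d(\rv):=\dist(\rv,\bdo)$ and $\varrho(\rv):=\dist(\rv,\rv_0)$ (the latter being globally defined on $\corner$, unlike boundary coordinates), I would fix a unit window $I=[\sigma,\sigma+1]$ in the variable $\varrho$ and apply \eqref{eq: magnetic bound} on $S=\corner$ with
\begin{equation*}
	\xi(\rv)=e^{a\,d(\rv)}\,f\big(d(\rv)\big)\,\zeta_I\big(\varrho(\rv)\big),
\end{equation*}
where $e^{ad}f(d)$ is exactly the weight of \cref{lem: agmon 1} (so $\xi=0$ on $\bdo$ and $|f'|\le C$), and $\zeta_I$ is a smooth cut-off equal to $1$ on $I$, supported on $[\sigma-w,\sigma+1+w]$, with $|\zeta_I'|\le C/w$. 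Disposing of the boundary terms on $\bdi$, $\bdbd$, $\bdo$ exactly as in the proof of \cref{lem: agmon 1} — via $\psi=\psi_0$ together with \eqref{eq: fk decay 1} on $\bdi\cup\bdbd$, the Neumann condition on $\bdo$, and \eqref{eq: est grad infty} — and absorbing the contribution of $\{d\le1\}\cap\supp\zeta_I$ (a set of area $\OO(1+w)$ on which $|\psi|\le1$ by \eqref{eq: est infty}) into an additive error, one arrives, with $c(b):=2a$, at
\begin{equation*}
	\lf(1-\tx\frac1b-2(1+\epsilon)a^2\ri)\int_{\{\varrho\in I\}\cap\corner}e^{c(b)d}|\psi|^2\;\le\;\frac{C}{w^2}\int_{\supp\zeta_I'\cap\corner}e^{c(b)d}|\psi|^2+C(1+w).
\end{equation*}

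Let $B$ be the supremum over all unit windows $I$ of $\int_{\{\varrho\in I\}\cap\corner}e^{c(b)d}|\psi|^2$; by \cref{lem: agmon 1} we already know $B=\OO(L)<+\infty$, so absorptions are legitimate. Since $\supp\zeta_I'$ is contained in two "shells" of radial width $w$, its weighted mass is at most $2wB$, so the first term on the right is at most $(2C/w)B$. Choosing first $\epsilon$ and then $a(\epsilon)$ so that the bracket on the left equals a fixed $c_b>0$, and then $w$ a fixed constant so large that $2C/w<c_b/2$, the leakage term is absorbed and one gets $c_bB\le\tfrac{c_b}{2}B+C(1+w)$, i.e. $B=\OO(1)$ uniformly in $L,\ell$. (For windows near the vertex, where $\zeta_I(\varrho(\cdot))$ is not smooth, the corresponding region has area $\OO(1)$ with $d=\OO(1)$ on it, so it contributes only to the additive $\OO(1)$.) Since $S_\sharp$, be it $S_{\mathrm{strip}}$ or $S_{\mathrm{corner}}$, has radial extent $\OO(1)$ it is covered by $\OO(1)$ unit windows, which gives \eqref{eq: agmon 2} for the order parameter. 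The magnetic gradient term then follows by multiplying the first equation in \eqref{eq: var eq corner} by $e^{c(b)d}\zeta^2\psi^*$, integrating over $\corner$, using Cauchy--Schwarz, \eqref{eq: est infty} and the bound on $B$ just obtained, exactly as at the end of the proof of \cref{lem: agmon 1}; for the near-vertex part of $S_{\mathrm{corner}}$ the same integration by parts on a bounded-area neighbourhood, combined with \eqref{eq: est grad infty}, the Neumann condition on $\bdo$ and the exponential smallness of $\psi_0$ on $\bdi$, directly yields an $\OO(1)$ bound.

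The one genuinely delicate point — and the main obstacle — is the calibration in the displayed inequality: one must notice that the coefficient $C/w^2$ in front of the leakage term, paired with the fact that a transition layer of width $w$ carries weighted mass of order $w\,B$, produces an effective coefficient of order $1/w$, which can therefore be pushed strictly below the coercivity constant $c_b=1-\tfrac1b-2(1+\epsilon)a^2$ by taking $w$ a large fixed constant, while the additive error $C(1+w)$ remains $\OO(1)$. This self-consistent absorption of $B$ into itself is precisely what upgrades the $\OO(L)$ of \cref{lem: agmon 1} to $\OO(1)$; everything else is a verbatim repetition of the computations already carried out there.
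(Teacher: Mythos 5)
Your core mechanism is sound and genuinely different from the paper's route: the paper applies the weighted identity \eqref{eq: magnetic bound} on a bounded domain $S\supset S_\sharp$ and must then control the boundary terms produced on the lateral pieces of $\partial S$ interior to $\corner$ (which is why it arranges $\partial S$ at distance of order $1$ from the vertex and invokes \eqref{eq: grad est infty corner}), whereas you keep $S=\corner$, put the localization into the weight via a cut-off of large fixed transition width $w$, and absorb the resulting leakage term of size $\OO(w^{-1})\sup_I(\cdot)$ into the coercive left-hand side. This avoids interior boundary terms altogether and the self-improvement from the a priori $\OO(L)$ bound of \cref{lem: agmon 1} to $\OO(1)$ per window is correct and clean.

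There is, however, a concrete error in the very last step: the claim that $S_\sharp$ has radial extent $\OO(1)$ in the variable $\varrho(\rv)=\dist(\rv,\rv_0)$ is false. For $S_{\mathrm{strip}}=[\bar s_1,\bar s_2]\times[0,\ell]$ with $\bar s_1=\ell\tan(\beta/2)$ one has
\begin{equation*}
	\sup_{S_{\mathrm{strip}}}\varrho-\inf_{S_{\mathrm{strip}}}\varrho=\sqrt{\bar s_2^2+\ell^2}-\bar s_1=\ell\Bigl(\sqrt{1+\tan^2(\tfrac\beta2)}-\tan(\tfrac\beta2)\Bigr)+\OO(1)=\Theta(\ell),
\end{equation*}
and similarly for $S_{\mathrm{corner}}$ (a point with $|s|=\bar s_2$, $t=\ell$ has $\varrho\approx\ell$ while a point with $|s|=\bar s_1$, $t=0$ has $\varrho=\bar s_1=\OO(1)$). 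So covering $S_\sharp$ by unit $\varrho$-windows requires $\OO(\ell)$ of them, and summing your per-window bound only yields $\OO(\ell)$, not \eqref{eq: agmon 2}. Two repairs are available. The simplest is to localize in $|s(\rv)|$ (the tangential arclength of the projection onto $\bdo$, which is globally defined, $1$-Lipschitz on each half of $\corner$ and continuous across the bisectrix) instead of $\varrho$: then $S_{\mathrm{strip}}$ and $S_{\mathrm{corner}}$ are each covered by $\OO(1)$ unit windows and your argument closes verbatim. Alternatively, keep $\varrho$ but run the absorption with a slightly larger admissible exponent $a'>a$ (still with $1-\tfrac1b-2(1+\epsilon)a'^2>0$); in the $k$-th excess window one has $d\gtrsim\sqrt{k\,\bar s_1}$ (resp. $d\gtrsim k$ for $S_{\mathrm{corner}}$), so the factor $e^{-2(a'-a)d}$ makes the sum over the $\OO(\ell)$ windows convergent. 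Either patch is routine, but as written the proof does not establish the stated $\OO(1)$ bound.
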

	
	\begin{proof}
		The proof is identical to the one of \cref{lem: agmon 1}, with the only difference due to the estimate of boundary terms. Exploiting \eqref{eq: grad est infty corner} and the other properties of $ \psi $ and $ f_0 $, it is however easy to show that those terms provide contributions of order $ \OO(1) $, as well as the r.h.s. of \eqref{eqp: agmon final bound}, which leads to the result.
		A short comment is in order for regions close to the corner, where the pointwise bound \eqref{eq: grad est infty corner} might fail: there one can always arrange the domain $ S $  in such a way that the boundary $ \partial S $ is far enough from $ \bdbd $ (still at a distance of order $ 1 $ from the corner) so that \eqref{eq:  grad est infty corner} applies, while on $ \partial S \cap \bdo $, the gradient estimate is not used.
	\end{proof}

We finally provide a simple bound which is a direct consequence of  \eqref{eq: agmon 2}.

	\begin{lem}
		\label{lem: exp estimate}
		\mbox{}\\
		Let $ \psi $ be a solution of \eqref{eq: var eq corner}, with $ b >  1 $. Then, there exists a finite constant $ C  $, such that
		\beq
			\label{eq: decay corner}
			\lf| \psi(\rv) \ri| \leq C e^{- \frac{1}{2} c(b) \dist(\rv, \bdo)},
		\eeq
		where $ c(b) $ is the constant appearing in \eqref{eq: agmon 2}.
	\end{lem}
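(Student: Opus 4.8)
The plan is to derive the pointwise bound \eqref{eq: decay corner} from the integrated Agmon estimate \eqref{eq: agmon 2} by a standard local elliptic argument. The statement we must prove is that $|\psi(\rv)| \leq C e^{-\frac{1}{2} c(b) \dist(\rv,\bdo)}$ for any solution $\psi$ of \eqref{eq: var eq corner} with $b > 1$, where $c(b)$ is the constant from \eqref{eq: agmon 2}.

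First I would fix a point $\rv_0 \in \corner$ and set $d_0 := \dist(\rv_0, \bdo)$. If $d_0 \leq 1$ the bound is trivial since $\|\psi\|_{\infty} \leq 1$ by \eqref{eq: est infty}, so assume $d_0 \geq 1$. The key point is that far enough from the outer boundary the differential equation in \eqref{eq: var eq corner} is an interior elliptic equation to which we can apply local regularity/sup estimates. Concretely, I would choose one of the model regions $S_\sharp$ from \eqref{eq: S strip} or \eqref{eq: S corner} (a tangential slab of width $\OO(1)$ containing $\rv_0$, chosen so that its boundary stays at distance of order $1$ from $\bdbd$ and from the vertex, exactly as in the last sentence of the proof of \cref{lem: agmon 2}), so that \eqref{eq: agmon 2} is available there. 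On a ball $B_{\varrho}(\rv_0)$ of some fixed small radius $\varrho$ (say $\varrho = 1/4$) contained in this region and not touching $\bdo$ or $\bdbd$, the function $\psi$ solves $-(\nabla + i\av)^2 \psi = \frac{1}{b}(1-|\psi|^2)\psi$ with $\av \in L^{\infty}$ and bounded right-hand side (by \eqref{eq: est infty}); a standard Moser iteration / De Giorgi--Nash--Moser sup estimate for magnetic Schrödinger-type equations then gives
\beq
	\label{eqp: local sup}
	|\psi(\rv_0)|^2 \leq \|\psi\|_{L^{\infty}(B_{\varrho/2}(\rv_0))}^2 \leq C \int_{B_{\varrho}(\rv_0)} \diff \rv \: |\psi(\rv)|^2,
\eeq
with $C$ depending only on $b$ and the (uniformly bounded) $L^\infty$ norm of $\av$ near $\rv_0$. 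Equivalently one may obtain \eqref{eqp: local sup} by writing $\psi = |\psi| e^{i\phi}$ and using the diamagnetic inequality together with Kato's inequality, $-\Delta |\psi| \leq \frac{1}{b}|\psi|$, and then a standard sub-solution estimate.

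Next, on the ball $B_{\varrho}(\rv_0)$ we have $\dist(\rv, \bdo) \geq d_0 - \varrho \geq \frac{1}{2} d_0$ for all $\rv \in B_\varrho(\rv_0)$ (taking $\varrho \leq \frac{1}{2}$), so $e^{c(b)\dist(\rv,\bdo)} \geq e^{\frac{1}{2} c(b) d_0}$ there. Hence
\beq
	\label{eqp: insert exp}
	\int_{B_{\varrho}(\rv_0)} \diff \rv \: |\psi(\rv)|^2 \leq e^{-\frac{1}{2} c(b) d_0} \int_{B_{\varrho}(\rv_0)} \diff \rv \: e^{c(b) \dist(\rv,\bdo)} |\psi(\rv)|^2 \leq C e^{-\frac{1}{2} c(b) d_0},
\eeq
where the last inequality is exactly the Agmon bound \eqref{eq: agmon 2} applied to the region $S_\sharp \supset B_\varrho(\rv_0)$. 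Combining \eqref{eqp: local sup} and \eqref{eqp: insert exp} yields $|\psi(\rv_0)|^2 \leq C e^{-\frac{1}{2} c(b) d_0}$, i.e. $|\psi(\rv_0)| \leq C e^{-\frac{1}{4} c(b) d_0}$; absorbing the factor $\tfrac12$ into the constant $c(b)$ (or simply noting that \eqref{eq: agmon 2} holds with any smaller positive constant, so we may run the argument with $2c(b)$ in place of $c(b)$) gives the stated form \eqref{eq: decay corner}. The case $\rv_0$ close to $\bdi$ (where the Dirichlet datum $\psi_0$ is exponentially small in $\ell$) or close to $\bdbd$ is handled identically using boundary elliptic estimates, or more simply observed to be covered by the trivial bound since there $\dist(\rv_0,\bdo)$ is at most of order $\ell$ and $|\psi|$ is already known to be exponentially small; in the Neumann version (\cref{rem: agmon neumann}) one uses the magnetic Neumann condition on $\bdo$ in the boundary elliptic estimate, exactly as in the standard treatment.

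The only mild subtlety — the "main obstacle" — is making sure the local sup estimate \eqref{eqp: local sup} is uniform, i.e. that the constant $C$ does not degenerate as $\rv_0$ ranges over $\corner$ or as $L,\ell \to \infty$. This is guaranteed because $\av = -t\ev_s$ in $\cornert$ has $|\av| \leq \ell$ which is fine on a fixed small ball once one fixes the gauge locally (one may subtract a constant gauge to make $\av$ bounded by $\OO(1)$ on $B_\varrho(\rv_0)$, which does not change $|\psi|$), and the nonlinearity $\frac{1}{b}(1-|\psi|^2)\psi$ is bounded by $\frac{1}{b}$ uniformly by \eqref{eq: est infty}. With these observations the elliptic constant depends only on $b$, $\varrho$ and the dimension, and the proof is complete.
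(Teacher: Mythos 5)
Your argument is correct in substance but takes a genuinely different route from the paper. The paper proves \eqref{eq: decay corner} by contradiction: if $|\psi(\bar\rv)|\,e^{\frac{1}{2}c(b)\dist(\bar\rv,\bdo)}$ exceeded a large constant $C_0$ at some point, the uniform Lipschitz bound $\lf|\nabla|\psi|\ri|\leq C$ from \eqref{eq: grad est infty corner} would force the same lower bound (up to a factor $1/2$) on a unit square around $\bar\rv$, and integrating $e^{c(b)\dist(\rv,\bdo)}|\psi|^2$ over that square would contradict \cref{lem: agmon 2}. You instead go forward: a local sup estimate for the subsolution $|\psi|$ (via Kato's inequality $-\Delta|\psi|\leq\frac{1}{b}|\psi|$ or Moser iteration) bounds $|\psi(\rv_0)|^2$ by the $L^2$ mass on a fixed ball, into which you insert the Agmon weight. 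Both routes rest on the same two inputs (an interior elliptic estimate uniform in $L,\ell$ plus the integrated Agmon bound); the paper's version uses the gradient bound already established in \cref{lem: est gradient infty}, while yours trades that for a standard local boundedness estimate and is slightly more self-contained in that it does not need $C^1$ control of $|\psi|$. Your gauge-fixing remark to keep $\av$ bounded on each ball, and your treatment of points near $\bdi$/$\bdbd$ via boundary estimates with the (exponentially small resp. Gaussian-decaying) Dirichlet data, are both sound; note the paper's own proof also restricts to $\dist(\rv,\rv_0)\geq 1$ from the vertex, as you do implicitly by choosing $S_\sharp$ away from the corner.

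One quantitative slip: as written you obtain $|\psi(\rv_0)|\leq C e^{-\frac{1}{4}c(b)d_0}$, not the stated rate, and the proposed repair of ``running the argument with $2c(b)$'' is not available --- \eqref{eq: agmon 2} holds only up to a maximal rate determined by $1-\frac{1}{b}$, so the constant cannot be doubled. The fix is elementary and does not require changing the argument: on $B_\varrho(\rv_0)$ with $\varrho$ a \emph{fixed} constant you have $\dist(\rv,\bdo)\geq d_0-\varrho$, hence $e^{c(b)\dist(\rv,\bdo)}\geq e^{-c(b)\varrho}\,e^{c(b)d_0}$ with $e^{-c(b)\varrho}$ an $\eps$- and $d_0$-independent constant; there is no need for the lossy bound $d_0-\varrho\geq\frac{1}{2}d_0$. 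This gives $|\psi(\rv_0)|^2\leq C e^{-c(b)d_0}$ and hence exactly \eqref{eq: decay corner}.
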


	\begin{proof}
		The result is proven by contradiction. Suppose that there was a point $ \bar{\rv} \in \corner $, with $ \dist(\bar{\rv}, \bdo) \geq 1 $ and $ \dist(\bar{\rv}, \rv_0) \geq 1 $, {so that}
		\beq
			\lf| \psi({\bar\rv}) \ri| e^{\frac{1}{2} c(b) \dist({\bar\rv}, \bdo)} > C_0,
		\eeq
		for some given $ C_0 > 0 $. Then, thanks to the pointwise bound \eqref{eq: grad est infty corner}, we can always construct a square $ Q $ of unit side length containing $ \bar{\rv} $, such that 
		\beq
			\lf| \psi(\rv) \ri| e^{\frac{1}{2} c(b) \dist(\rv, \bdo)} \geq \tx\frac{1}{2} C_0, 	\qquad	 \mbox{in } Q.
		\eeq
		We are here assuming that $ C_0 $ is large enough, so that
		\bdm
			\inf_{\rv \in Q} \lf( \lf| \psi(\rv) \ri| e^{\frac{1}{2} c(b) \dist(\rv, \bdo)} \ri) \geq C_0 - \sqrt{2} \lf( \lf\| \nabla \lf| \psi \ri| \ri\|_{\infty} + \tx\frac{1}{2} c(b) \ri) \geq \tx\frac{1}{2} C_0.
		\edm
		 Hence, 
		\bdm
			\int_{Q} \diff \rv \: \lf| \psi(\rv) \ri|^2 e^{c(b) \dist(\rv, \bdo)} \geq \tx\frac{1}{4} C^2_0,
		\edm
		which contradicts \eqref{eq: agmon 2}, if $ C_0 $ is large enough, since $ Q $ is fixed.
	\end{proof}

	{Reformulating the above result for the variational problems in the strip considered in \cref{sec: strip} yields the pointwise estimates
	\beq
		\label{eq: point agmon strip}
		\lf| \psi(s,t) \ri| \leq C e^{- \frac{1}{2} c(b) t},	
	\eeq 
	for any $ \psi $ solving \eqref{eq: var eq strip} and where $ c(b) $ is the same constant appearing in \cref{lem: agmon strip}.}

\section{Local Energy Estimates}\label{sec: smooth energy}

In this Section we sum up the salient points of the energy estimate in the smooth part of the boundary layer. Thanks to Agmon estimates (see \cref{app:Agmon}), we can restrict our analysis to the boundary layer \eqref{eq: anne}, i.e.,
\[
	\anne = \lf\{ \rv \in \Omega \: \big| \: \dist\lf(\rv, \partial \Omega\ri) \leq \eps \elle \ri\},
\]
but here we will focus on its smooth component defined in \eqref{eq: acut}: 
\[
	\acut = \lf( \lf[0, s_{1} - \leps \ri] \cup \lf[ s_1 + \leps, s_2 - \leps \ri] \cup \cdots \cup \lf[ s_N + \leps, \tx\frac{|\partial \Omega|}{\eps} \ri] \ri) \times [0, c_1 |\log\eps|],
\]
where $s_j$, $j=1, \ldots, N$ is the tangential coordinate of the $j-$th vertex. By \cref{lem: replacement smooth}, we can take as starting point of our analysis the effective functional introduced in \eqref{eq: gep}:
\[
	\mathcal{G}_{\eps}[\psi,\acut] = \int_{\acut}\diff s\diff t\, (1-\eps k(s)t)\bigg\{|\partial_t\psi|^2 + \tx\frac{1}{(1-\eps k(s)t)^2}|(\partial_s - it)\psi|^2
	-\frac{1}{2b}(2|\psi|^2 -|\psi|^4)\bigg\}
\]
and its ground state energy
\begin{equation}\label{eq: inf G cut}
	G_{\acut}:=\inf_{\psi \in H_{\mathrm{per}}^1(\acut)}\mathcal{G}_{\eps}[\psi,\acut],
\end{equation}
where $ H_{\mathrm{per}}^1(\acut) : = \{ \psi \in H^1(\acut) \: | \: \psi(0,t) = \psi(|\partial \Omega|/\eps, t), \forall t \in [0, \ell] \} $. 
We also denote
\beq
	\label{eq: ismooth}
	\smooth := \bigcup_{{j = 1}}^{{N}} \lf[s_j + \leps, s_{j+1} - \leps \ri],
\eeq
{with the identification $ s_{N+1} = s_1 + |\partial \Omega|/\eps $}. The material presented in this Section is essentially taken from \cite{CR2} (see, in particular \cite[Lemmas 3, 6 and 7]{CR2}), but  an important difference in the lower bound is given by the presence of holes in the boundary layer $ \acut $, where the corner regions have been removed. 
The key tool in the strategy is the decomposition of $ \acut $ into cells:
\beq
	\acut = \bigcup_{n=1}^{M_\eps} \celln,		\qquad
	{\celln} := [{\sigma}_n,{\sigma}_{n+1}]\times[0, {c_1} |\log\eps|],
\eeq
with $ |{\sigma}_{n+1}-{\sigma}_n| \propto 1 $ and $ M_\eps \propto \lf|\smooth\ri|/\eps $. We then approximate the curvature $k(s)$ of the boundary in each cell by its mean value
\beq
	k_n:= \int_{{\sigma}_n}^{{\sigma}_{n+1}}\diff s\: k(s)
\eeq
and set for short $ \alpha_n := \alpha_{k_n} $, $ f_n(t) :=f_{k_n}(t) $ (recall the notation of \cref{sec: 1d curvature}).

	\begin{pro}[Upper bound to $ G_{\acut} $]
		\label{pro: ub smooth}\mbox{}\\
		For any fixed $1<b<\Theta_0^{-1}$, as $\eps \rightarrow 0$, it holds
		\beq
			\label{eq: ub smooth}
			G_{\acut} \leq \disp\frac{|\partial \Omega| \eoneo}{\eps} - 2 \leps N \eoneo - \eps \ecorr \int_{0}^{\frac{|\partial\Om|}{\eps}} \diff s \: k(s) + o(1).
		\eeq
	\end{pro}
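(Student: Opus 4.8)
The plan is to construct an explicit trial state and evaluate $\mathcal{G}_\eps$ on it, following the cell decomposition already introduced. First I would test $\mathcal{G}_\eps[\cdot;\acut]$ on a function of the form $\psi_{\mathrm{trial}}(s,t) = g(s,t)\, e^{-iS(s)}$, where on the $n$-th cell $\celln$ the modulus $g$ essentially equals the curvature-adapted profile $f_n(t)$ and the phase is the optimal linear phase $S(s) \simeq \alpha_n s$; near the cell interfaces $\sigma_n$ one inserts the small interpolating corrections $\chi_n$ (modulus) and the phase-matching correction (phase) exactly as in \cite[Eqs. (4.15)--(4.21)]{CR2}, so that $\psi_{\mathrm{trial}} \in H^1_{\mathrm{per}}(\acut)$. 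Crucially, since $\acut$ is a \emph{union of arcs} with the corner regions removed, one must also impose $\psi_{\mathrm{trial}} = f_0(t) e^{-i\alpha_0 s}$ (up to the gauge/phase bookkeeping) at the artificial endpoints $s_j \pm \leps$, which is what the boundary conditions in $\doms$ are designed to accommodate; the periodicity condition is then harmless because it only relates the two endpoints at $s=0$ and $s=|\partial\Omega|/\eps$, which lie in the same smooth arc.

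Second I would carry out the energy computation cell by cell. On each $\celln$, inserting $\psi_{\mathrm{trial}} = f_n e^{-i\alpha_n s}$ and using the variational equation \eqref{eq: fk var} together with the identity \eqref{eq: eonek identity} yields, to leading order, the contribution $(\sigma_{n+1}-\sigma_n)\,\eonek[k_n]$; the curvature-in-each-cell terms reassemble, via \eqref{eq: eonek asympt}, into $\frac{\sigma_{n+1}-\sigma_n}{\eps}\eoneo - (\sigma_{n+1}-\sigma_n)\,\eps k_n \ecorr + \OO(\eps^{3/2}|\log\eps|^\infty)$ per cell. Summing over $n=1,\dots,M_\eps$ gives $\frac{|\smooth|\eoneo}{\eps} - \eps \ecorr \int_{\smooth}\diff s\, k(s) + o(1)$; rewriting $|\smooth| = |\partial\Omega| - 2N\leps\eps$ and extending the curvature integral to all of $[0,|\partial\Omega|/\eps]$ (the corner arcs contribute $\OO(\leps\eps)=o(1)$ since $k$ is bounded) produces precisely the claimed right-hand side $\frac{|\partial\Omega|\eoneo}{\eps} - 2\leps N \eoneo - \eps\ecorr\int_0^{|\partial\Omega|/\eps}\diff s\,k(s) + o(1)$. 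The interface corrections $\chi_n$ and the phase mismatches $\alpha_n \to \alpha_{n+1}$ contribute only $\OO(\eps)$ in total by \eqref{eq: energy diff}, \eqref{eq: est alpha k-alpha k'}, \eqref{eq: est fk - fk'}, and the $t$-integration cutoff at $t = c_1|\log\eps|$ produces exponentially small errors by the decay \eqref{eq: fk decay 1}.

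The steps requiring care are the bookkeeping of the Jacobian factor $(1-\eps k(s) t)$ and the replacement of $k(s)$ by its cell average $k_n$: one uses $|k(s)-k_n| = \OO(\eps)$ on $\celln$ (by Lipschitz regularity of the curvature, which follows from Assumption \ref{asum: boundary 1}) and the fact that $t \leq c_1|\log\eps|$ to show these perturbations cost at most $\OO(\eps|\log\eps|^\infty)$ per unit tangential length, hence $\OO(\eps|\log\eps|^\infty) = o(1)$ after summing — this is the point where \cite[Lemma 6]{CR2} is invoked verbatim. Since this is purely an upper bound obtained from an explicit trial state, there is no genuine analytic obstacle; the main (mild) annoyance is simply verifying that the interpolation corrections near the endpoints $s_j \pm \leps$ of each arc — which are absent in the fully periodic setting of \cite{CR2} — are controlled. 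But the region where such a correction lives has tangential length $\OO(\eps)$ and the relevant gradients are $\OO(\eps^{-1})$ times quantities that are $\OO(\eps^{1/2}|\log\eps|^\infty)$ by \eqref{eq: est fk - fk'} and \eqref{eq: est alpha k-alpha k'}, so the total cost is $o(1)$, exactly as for the internal interfaces. I would therefore organize the proof as: (i) definition of $\psi_{\mathrm{trial}}$ and verification of admissibility; (ii) reduction to the sum of cell energies plus $o(1)$ interface/cutoff errors, citing \cite[Lemmas 6--7]{CR2}; (iii) evaluation of each cell energy via \eqref{eq: eonek identity} and expansion via \eqref{eq: eonek asympt}; (iv) summation and rewriting of $|\smooth|$ to reach \eqref{eq: ub smooth}.
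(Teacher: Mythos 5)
Your proposal is correct and follows essentially the same route as the paper: the paper's own proof takes verbatim the trial state $g(s,t)e^{-iS(s)+i\eps\deps s}$ of \cite[Sect. 4.1]{CR2} (cellwise profile $f_n$, piecewise-linear optimal phase $S_{\rm loc}$ plus an $\OO(1)$ global correction making the phase well-posed), refers to \cite[Lemma 3]{CR2} for the cell-by-cell energy computation, and controls the curvature corrections via \eqref{eq: est alpha k-alpha k'} and \eqref{eq: est fk - fk'}, exactly as you outline. Your additional attention to the arc endpoints $s_j\pm\leps$ and to the single-valuedness of the phase (handled in the paper by $S_{\rm glo}$ and the $\eps\deps s$ factor) is the right bookkeeping and does not change the argument.
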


	\begin{proof}
	See \cite[Sect. 4.1]{CR2}.
	\end{proof}
	
	We now complement \eqref{eq: ub smooth} with a matching lower bound. As already pointed out, the proposition below is the analogue of \cite[Lemma 6]{CR2} but the effect of the holes in the smooth part of the domain now becomes apparent in the additional boundary terms appearing on the r.h.s. of \eqref{eq: lb smooth}. Those terms are matched in \cite[Proof of Lemma 7, Step 2]{CR2} by the corresponding boundary contributions coming from the cells which are missing in the present setting. 

	\begin{pro}[Lower bound]
		\label{pro: lb smooth}
		\mbox{}\\
		Let $ \psi(s,t) \in H^1(\acut) $ be a function enjoying the same properties as $ \glm(\rv(s,t)) $. Then, for any $1<b<\Theta_0^{-1}$, as $\eps \rightarrow 0$, it holds
		\bml{
			\label{eq: lb smooth}
			\mathcal{G}_{\acut}[\psi]  \geq \disp\frac{|\partial\Om| \eoneo}{\eps} - 2 \leps N \eoneo - \eps \ecorr \int_{0}^{|\partial\Om|} \diff s \: k(s) \\
			 -  \sum_{j {=1}}^{{N}} \int_0^{{c_1}|\log\eps|}\diff t\, \lf.\frac{F_0(t)}{f_0^2(t)} j_t \lf[\psi(s,t)\ri] \ri|_{s=s_j - L_\eps}^{s= s_j +L_\eps}  + o(1).
		}
	\end{pro}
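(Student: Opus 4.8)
The plan is to adapt the lower-bound argument of \cite[Lemma 6]{CR2} to the present setting, where the boundary layer $\acut$ has $N$ ``holes'' corresponding to the corner regions that have been excised. The starting point is the effective functional $\gep[\psi;\acut]$ obtained after the magnetic-potential replacement of \cref{lem: replacement smooth}. First I would split $\acut$ into the cells $\celln$, $n = 1, \ldots, M_\eps$, each of tangential length of order $\eps$, approximate the curvature $k(s)$ by its cell-average $k_n$ at the cost of an $\OO(\eps\logi)$ error per cell (using the boundedness of $k$ and its derivatives), and write the total energy as a sum of cell energies. The number of cells is $M_\eps \propto |\smooth|/\eps$, so that the accumulated curvature-approximation error is $\OO(\eps^{1/2}\logi)\cdot M_\eps \cdot \eps = o(1)$ if the cells are chosen short enough; this is exactly as in \cite{CR2}.

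Next, inside each cell I would apply the splitting $\psi(s,t) =: f_n(t) e^{-i\alpha_n s} u_n(s,t)$ (the analogue of \eqref{eq: psi splitting strip}) and integrate by parts using the curvature-dependent 1D variational equation \eqref{eq: fk var}, which produces the cell contribution $\int_{\celln}\diff s\, \eonek$ plus a reduced energy $\E_0[u_n;\celln]$ in the spirit of \eqref{eq: en E0}. Using \eqref{eq: eonek asympt}, i.e. $\eonek = \eones - \eps k_n \ecorr + \OO(\eps^{3/2}\logi)$, and summing over $n$ reconstructs the leading term $|\partial\Omega|\eoneo/\eps - 2\leps N \eoneo$ (the subtracted term accounts for the $N$ gaps of tangential length $2\leps$ each) together with $-\eps\ecorr\int_0^{|\partial\Omega|}\diff s\, k(s)$; here one also needs to replace $\eones$ by $\eoneo$ up to $\OO(\ell^{-\infty})$ via \cref{lem: eoneo approx eones}. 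The remaining reduced energies $\E_0[u_n;\celln]$ are bounded from below by the pointwise positivity \eqref{eq: kol positive} of the cost function $K_0$, exactly as in \eqref{eq: lb E0}; the only subtlety is the integration by parts \eqref{eq: int by parts}, which generates boundary terms $\int_0^{\ell}\diff t\, F_0(t) j_t[u_n]$ at the cell edges. In the bulk of $\smooth$ these telescope because $u_n$ is continuous across adjacent cells; but at the two edges of each hole, i.e. at $s = s_j \pm \leps$, they do not cancel, and they survive as the last term on the right-hand side of \eqref{eq: lb smooth} after converting back from $j_t[u]$ to $j_t[\psi]$ using the explicit phase factor (which contributes the prefactor $F_0/f_0^2$).

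The main obstacle is precisely the careful bookkeeping of these uncancelled boundary terms at the edges of the holes: one must verify that all \emph{other} boundary contributions (the ones at the interior boundary $t = \ell$, where $F_0$ vanishes, and the periodicity-induced ones, which telescope around the full boundary) genuinely cancel or are $\OO(\ell^{-\infty})$, so that the only residual is the displayed sum of $j_t$ boundary integrals; this is where one must track the different conventions and the fact that $\acut$ is not connected. I would follow \cite[Proof of Lemma 6, Step 2]{CR2} verbatim for the telescoping and the $F_0(\ell) = 0$ argument, noting only that the sum now runs over the connected components of $\smooth$ rather than over a single period. Finally, collecting the curvature-approximation errors ($o(1)$), the $\eonek$-expansion errors ($o(1)$ after summation), the $\eones$–$\eoneo$ replacement ($\OO(\ell^{-\infty})$), and the positivity lower bound on $\sum_n \E_0[u_n;\celln]$ (which is $\geq 0$ up to the displayed boundary terms and $\OO(\eps\logi)$-type remainders), yields \eqref{eq: lb smooth}. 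No genuinely new idea beyond \cite{CR2} is required; the content is the explicit identification of the corner-edge boundary terms that are passed on to the corner analysis in \cref{sec: completion}.
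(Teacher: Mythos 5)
Your proposal follows essentially the same route as the paper: cell decomposition of $\acut$, the splitting $\psi = f_n e^{-i\alpha_n s}u_n$ in each cell, summation of the $\eonek$ contributions via \eqref{eq: eonek asympt}, positivity of the reduced energies through the cost function, and the identification of the uncancelled $j_t$ boundary terms at $s = s_j \pm \leps$ as the residual passed to the corner analysis. The only imprecision is that the cell-edge boundary terms do not telescope because ``$u_n$ is continuous across cells'' (it is $\psi$, not $u_n$, that is continuous); they nearly cancel after conversion to $j_t[\psi]$ because the prefactors $F_{k_n}/f_{k_n}^2$ of adjacent cells agree up to the error \eqref{eq: log derivative}, which is exactly the pairing argument of \cite[Proof of Lemma 7, Step 2]{CR2} that the paper invokes.
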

	
	\begin{proof}
		The starting point {is} the very same splitting performed in \cite[Lemma 6]{CR2}, which is analogous to what we did in the proof of \cref{pro: strip}: in each cell $ \celln $, we set 
		\beq
			\label{eq: splitting cell psi}
			\psi(\rv(s,t)) =: u_n(s,t)f_{n}(t)e^{-i\alpha_{n}s},
		\eeq
		where $u_n$ plays the same role as $u$ in the decoupling \eqref{eq: psi splitting strip}. Such a splitting procedure allows to extract from each cell the desired energy, i.e.,
		\beq
			\label{eq: splitting cell}
			E^{\mathrm{1D}}_{k_n} \lf( {\sigma}_{n+1}-{\sigma}_n \ri) + \mathcal{E}_n[u_n],
		\eeq
		where the reduced energies are
		\bml{
			\label{eq: En}
			\E_n[u] : = \int_{{\sigma}_n}^{{\sigma}_{n+1}}\diff s \int_0^{\ell} \diff t \, (1 - \eps k_n t) f_n^2 \lf\{|\partial_t u|^2 + \tx\frac{1}{(1 - \eps k_n t)^2} |\partial_s u|^2 -  2 b_n(t) j_s[u] \ri.	\\ 
			\lf. +\tx\frac{1}{2b} f_n^2 (1-|u|^2)^2\ri\},
		}
		with $ b_n(t) = \tx\frac{1}{(1 - \eps k_n t)^2} (t + \alpha_n - \frac{1}{2} \eps k_n t^2) $.
		By \cite[Lemma 2.1]{CR3}, the first terms of \eqref{eq: splitting cell} above sum up to
		\beq\label{eq: smooth energy}
			\disp\frac{|\partial\Om| \eoneo}{\eps} - 2 \leps N \eoneo - \eps \ecorr \int_{\smooth} \diff s \: k(s) + o(1).
		\eeq

		If $ 1 < b < \theo^{-1} $, the reduced functionals $ \E_n[u_n] $ can be proven to be positive \cite[Lemma 7]{CR2} and  can thus be dropped from the lower estimate, again up to small errors. Here, however, the major difference with \cite{CR2} occurs: the positivity of $ \E_n[u_n] $ is proven in \cite[Lemma 7]{CR2} via an integration by parts and exploits the pointwise positivity of the cost function $ K_k $ (see \eqref{eq: cost function} and \eqref{eq: cost positive}), but the estimate of the boundary terms emerging from the integration has to be adjusted. Such terms have the form
		\bdm
			- \int_0^{{c_1}|\log\eps|}\diff t\, \frac{F_0(t)}{f_0^2(t)} j_t\lf[\psi(s,t) \ri] \bigg\vert_{s={\sigma}_{n}}^{{s =} {\sigma}_{n+1}}.
		\edm
		The sum of all the terms is shown in \cite[Lemma 7]{CR2} to be small, but this requires (see  \cite[Step 2 and eq. (5.33)]{CR2}) to pair the term coming from one cell at $ {\sigma}_n $ with the one generated in the adjacent cell again at $ {\sigma}_n $. In our setting, due to the absence of corner regions in $ \acut $, some boundary terms are missing. Such terms are precisely given by
		\beq\label{eq: bd terms splitting}
			\sum_{j\in \Sigma}\int_0^{{c_1}|\log\eps|}\diff t\, \frac{F_0(t)}{f_0^2(t)} j_t[\psi(s,t)]\bigg\vert_{s=s_j-L_\eps}^{s=s_j+ L_\eps},
		\eeq
		and have to be added and subtracted to apply \cite[Lemma 7]{CR2}, leading to \eqref{eq: lb smooth}.
	\end{proof}
	
		Note that in both the upper and lower bounds \eqref{eq: ub smooth} and \eqref{eq: lb smooth}, we can easily replace the integral over $ \smooth $ with the integral over the whole boundary, since
		\beq\label{eq: integral curvature}
			\eps \ecorr \int_{\smooth} \diff s \: k(s) = \ecorr \int_{0}^{|\partial \Omega|} \diff \ss \: \curv + \OO(\eps |\log\eps|),
		\eeq
		by the boundedness of the curvature.

		We conclude the Section with an important corollary of the above lower bound, which will be used to prove a uniform estimate of  $|\psi^{\mathrm{GL}}|$ in the smooth part of the layer.
		
		\begin{lem}[Lower bound on the reduced energies]
			\label{lem: sum En}
			\mbox{}	\\
			Let $ u_n $ be defined in \eqref{eq: splitting cell psi} and $ \E_n $ be given by \eqref{eq: En}. Then, if $ 1 < b < \theo^{-1} $, as $ \eps \to 0 $,
			\begin{multline}\label{eq: est sum En}
				\sum_{n=1}^{M_\eps}\mathcal{E}_n[u_n] \geq |\log\eps|^{-4}\sum_{n=1}^{M_\eps}\int_{\celln}\diff s\diff t\, (1-\eps k_nt)f_n^2\lf[|\partial_t u_n|^2 + \tx\frac{1}{(1-\eps k_n t)^2}|\partial_s u_n|^2\ri]
				\\
				+\frac{1}{2b}\sum_{n=1}^{M_\eps}\int_{\celln}\diff s\diff t\, (1-\eps k_nt)f_n^4(1-|u_n|^2)^2 + o(1).
			\end{multline}
		\end{lem}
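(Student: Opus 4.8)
The plan is to derive \eqref{eq: est sum En} as an essentially algebraic consequence of the cell-by-cell positivity argument of \cite[Lemma 7]{CR2}, being careful to keep the small positive remnant rather than throwing it away. First I would recall, for each cell $\celln$, the integration-by-parts identity that underlies the positivity of $\E_n[u_n]$: starting from the cross term $-2\int (1-\eps k_n t) f_n^2 b_n(t) j_s[u_n]$, one integrates by parts in $t$ using $\partial_t F_n = 2\frac{t+\alpha_n - \frac12 \eps k_n t^2}{1-\eps k_n t} f_n^2$ (cf. \eqref{eq: pot identity}), picking up a bulk term $2\int F_n \,\Im(\partial_t u_n^* \partial_s u_n)$ plus boundary terms at $t=0,\ell$ which vanish because $F_n(0)=F_n(\ell)=0$ together with the boundary terms at $s=\sigma_n,\sigma_{n+1}$ that are precisely the $j_t$-terms already isolated in \eqref{eq: bd terms splitting}. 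Then one uses $2|\Im(\partial_t u_n^* \partial_s u_n)| \le \theta |\partial_t u_n|^2 + \theta^{-1}|\partial_s u_n|^2$ and the pointwise positivity \eqref{eq: cost positive} of the cost function $K_{k_n} = (1-\dell)f_n^2 + F_n \ge 0$ on $[0,\btik]$ to absorb the bulk cross term; since $\dell = \OO(\ell^{-4}) = \OO(|\log\eps|^{-4})$ by \eqref{eq: dell}, what survives is exactly the factor $|\log\eps|^{-4}$ times the magnetic Dirichlet energy plus the full quartic term, up to the region $[\btik,\ell]$ where $f_n$ is exponentially small.

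Next I would handle the tail region $t\in[\btik,\ell]$ and the $s$-integration-by-parts leftover exactly as in the proof of \cref{pro: strip} (see the passage around \eqref{eq: lb E0}--\eqref{eqp: kinetic interval}): there the cost function is not controlled, but one integrates by parts back, uses $\bar\ell = \ell + \OO(1)$ together with the decay estimates \eqref{eq: fk decay 1}, \eqref{eq: fkprime decay} and the gradient bound \eqref{eq: est grad infty} for $\psi$, to show that the contribution of $[\btik,\ell]$ to each $\E_n[u_n]$, summed over the $M_\eps = \OO(\eps^{-1})$ cells, is $\OO(\eps^{-1}\cdot\eps^\infty) = \OO(\eps^\infty) = o(1)$. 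One must also control the error from replacing $1-\dell$ by $1$ in front of $f_n^2$ in the surviving Dirichlet term: since $\dell |\partial_t u_n|^2 f_n^2 \le C|\log\eps|^{-4} f_n^2|\partial u_n|^2$, this is harmless and in fact of the same order as the main surviving term, so one simply keeps the weaker constant.

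The step I expect to be the main obstacle is the bookkeeping of the boundary terms in $s$ when summing over all cells. After the integration by parts each cell contributes a term $\mp\int_0^{c_1|\log\eps|}\diff t\, \frac{F_0(t)}{f_0^2(t)} j_t[\psi(s,t)]$ at $s=\sigma_n$ and at $s=\sigma_{n+1}$; adjacent cells share the interface $\sigma_n$ and, after replacing $F_n/f_n^2$ by $F_0/f_0^2$ up to $\OO((\eps|k_n-k_{n+1}|)^{1/2}|\log\eps|^\infty)$ via \eqref{eq: log derivative}, these pair up and cancel, leaving only the boundary terms at the edges $s_j\pm\leps$ of the holes in $\acut$ — but those are precisely the terms that were already pulled out into \eqref{eq: lb smooth} in \cref{pro: lb smooth}, and so do not reappear here; the residual mismatch $\sum_n \OO((\eps|k_n-k_{n+1}|)^{1/2}|\log\eps|^\infty)$ is shown to be $o(1)$ exactly as in \cite[Lemma 7, Step 2]{CR2} using the total-variation bound on the curvature. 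Assembling these pieces — main surviving Dirichlet and quartic terms on the left, tail contributions $o(1)$, cost-function remnant absorbed, interface boundary terms telescoping to $o(1)$ — yields \eqref{eq: est sum En}. I would present the first two paragraphs in detail and merely cite \cite{CR2} for the last, since that part is verbatim the same argument.
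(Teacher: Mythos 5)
Your proposal is correct and takes essentially the same route as the paper: the paper's proof of this lemma is a bare citation to \cite[Proof of Lemma 7]{CR2}, and what you reconstruct --- the integration by parts via the potential function, absorption of the cross term through the pointwise positivity of the cost function $K_{k_n}$ while keeping the $\dell = \OO(|\log\eps|^{-4})$ remnant and the full quartic term, exponential control of the tail $t\in[\btik,\ell]$, and the telescoping of the interface boundary terms up to errors controlled by \eqref{eq: log derivative} --- is precisely that argument, with the only adaptation (the non-cancelling boundary terms at the hole edges $s_j\pm\leps$) handled exactly as the paper does in the proof of \cref{pro: lb smooth}, where they are added and subtracted and later matched against the corner functionals.
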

		
		\begin{proof}
			See \cite[Proof of Lemma 7]{CR2}.
		\end{proof}

\end{document}